\setlist{
  topsep=1pt, 
  itemsep=0pt,
  parsep=1pt
}
\let\PLAINthebibliography\thebibliography
\renewcommand\thebibliography[1]{
  \PLAINthebibliography{#1}
  \setlength{\parskip}{1pt}
  \setlength{\itemsep}{1pt plus .3ex}
}
\DeclareMathAlphabet{\mathpzc}{OT1}{pzc}{m}{it} 
\newcommand\mathscr[1]{\scalebox{1.1}{$\mathpzc{#1}$}}
\newcommand{\proofstep}[1]{\scalebox{.85}{ #1 }}
\definecolor{darkblue}{rgb}{0.05,0.25,0.65}
\definecolor{greenii}{RGB}{20,140,10}
\definecolor{lightgray}{rgb}{0.9,0.9,0.9}
\definecolor{orangeii}{RGB}{200,100,5}
\newcounter{sqindex}
\DeclareRobustCommand{\coprod}{\mathop{\text{\fakecoprod}}}
\newcommand{\fakecoprod}{%
  \sbox0{$\prod$}%
  \smash{\raisebox{\dimexpr.9625\depth-\dp0}{\scalebox{1}[-1]{$\prod$}}}%
  \vphantom{$\prod$}%
}
\newcommand{\HomotopyQuotient}[2]{
  #1 \!\sslash\! #2
}
\def\acts{\raisebox{1.4pt}{\;\rotatebox[origin=c]{90}{$\curvearrowright$}}\hspace{.5pt}}
\newif\if@sup
\newtoks\@sups
\def\append@sup#1{\edef\act{\noexpand\@sups={\the\@sups #1}}\act}%
\def\reset@sup{\@supfalse\@sups={}}%
\def\mk@scripts#1#2{\if #2/ \if@sup ^{\the\@sups}\fi \else%
  \ifx #1_ \if@sup ^{\the\@sups}\reset@sup \fi {}_{#2}%
  \else \append@sup#2 \@suptrue \fi%
  \expandafter\mk@scripts\fi}
\def\tensor#1#2{\reset@sup#1\mk@scripts#2_/}
\def\multiscripts#1#2#3{\reset@sup{}\mk@scripts#1_/#2%
  \reset@sup\mk@scripts#3_/}
\newbox\slashbox \setbox\slashbox=\hbox{$/$}
\def\itex@pslash#1{\setbox\@tempboxa=\hbox{$#1$}
  \@tempdima=0.5\wd\slashbox \advance\@tempdima 0.5\wd\@tempboxa
  \copy\slashbox \kern-\@tempdima \box\@tempboxa}
\def\slash{\protect\itex@pslash}
\def\clap#1{\hbox to 0pt{\hss#1\hss}}
\def\mathllap{\mathpalette\mathllapinternal}
\def\mathrlap{\mathpalette\mathrlapinternal}
\def\mathclap{\mathpalette\mathclapinternal}
\def\mathllapinternal#1#2{\llap{$\mathsurround=0pt#1{#2}$}}
\def\mathrlapinternal#1#2{\rlap{$\mathsurround=0pt#1{#2}$}}
\def\mathclapinternal#1#2{\clap{$\mathsurround=0pt#1{#2}$}}
\let\oldroot\root
\def\root#1#2{\oldroot #1 \of{#2}}
\renewcommand{\sqrt}[2][]{\oldroot #1 \of{#2}}
\DeclareSymbolFont{symbolsC}{U}{txsyc}{m}{n}
\DeclareSymbolFont{stmry}{U}{stmry}{m}{n}
\DeclareFontFamily{OMX}{MnSymbolE}{}
\DeclareSymbolFont{mnomx}{OMX}{MnSymbolE}{m}{n}
\DeclareFontShape{OMX}{MnSymbolE}{m}{n}{
    <-6>  MnSymbolE5
   <6-7>  MnSymbolE6
   <7-8>  MnSymbolE7
   <8-9>  MnSymbolE8
   <9-10> MnSymbolE9
  <10-12> MnSymbolE10
  <12->   MnSymbolE12}{}
\theoremstyle{plain}
\newtheorem{theorem}{Theorem}[section]
\newtheorem{lemma}[theorem]{Lemma}
\newtheorem{proposition}[theorem]{Proposition}
\theoremstyle{definition}
\newtheorem{definition}[theorem]{Definition}
\newtheorem{example}[theorem]{Example}
\newtheorem{remark}[theorem]{Remark}
\newcommand{\defneq}{\equiv}
\begin{document}

\title{
Entanglement of Sections: 

The pushout of entangled and parameterized quantum information}

\author{
  Hisham Sati${}^{\ast, \dagger}$
\qquad 
  Urs Schreiber${}^\ast$
}

\maketitle

\begin{abstract}
A question raised by Freedman \& Hastings \cite{FreedmanHastings23} still stands: To produce a mathematical theory that would unify quantum entanglement/tensor-structure 
with parameterized/bundle-structure via their amalgamation (a hypothetical pushout) along bare quantum (information) 
theory  --- a question motivated by the role that vector bundles of spaces of quantum states play in the K-theoretic classification 
of topological phases of matter. 

\smallskip

Here we produce a possible answer to this question. To that end, first we make precise a form of the relevant pushout diagram in monoidal category theory. 
With the question thus formalized, we proceed to compute this pushout and prove that it gives what is known as the {\it external} tensor product on vector bundles/K-classes, or rather on 
flat such bundles (flat K-theory), i.e., those equipped with monodromy encoding topological Berry phases. The external tensor product 
was recently highlighted in the context of topological phases of matter in \cite{Mera20} and through our work in quantum programming 
theory \cite{QS} but has not otherwise found due attention in quantum theory yet.
\end{abstract}

\vspace{1.5cm}

 \begin{center}
 \begin{minipage}{11.5cm}
 \tableofcontents
 \end{minipage}
 \end{center}

\vfill

\hrule
\vspace{5pt}

{
\footnotesize
\noindent
\def\arraystretch{1}
\tabcolsep=0pt
\begin{tabular}{ll}
${}^*$\,
&
Mathematics, Division of Science; and
\\
&
Center for Quantum and Topological Systems,
\\
&
NYUAD Research Institute,
\\
&
New York University Abu Dhabi, UAE.  
\end{tabular}
\hfill
\adjustbox{raise=-15pt}{
\includegraphics[width=3cm]{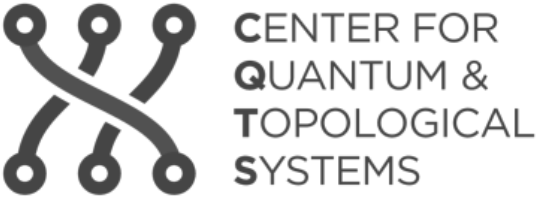}
}

\vspace{1mm} 
\noindent ${}^\dagger$The Courant Institute for Mathematical Sciences, NYU, NY

\vspace{.05cm}

\noindent
The authors acknowledge the support by {\it Tamkeen UAE} under the 
{\it NYU Abu Dhabi Research Institute grant} {\tt CG008}.
}

\newpage

\section{Introduction and outline}
\label{IntroductionAndOutline}

\noindent
{\bf Entangled quantum processes and Tensor categories.}
The natural logical framework for  pure quantum information theory (e.g. \cite{NielsenChuang00}) is (this goes back to \cite[p. 7]{Girard87}\cite{Yetter90}\cite{Pratt92}, has been put to practical use since \cite{AbramskyCoecke04}\cite{AbramskyDuncan06}, further exposition may be found in \cite{Slavnov05}\cite{Baez06}\cite{BaezStay11}\cite{CoeckeKissinger17}\cite{HeunenVicary19}) the ``internal logic'' -- in fact the ``internal type theory'' -- of the closed monoidal 
$\dagger$-category of (finite-dimensional) complex Hilbert spaces, with respect to their usual linear tensor product  ``$\otimes$'':
\smallskip 
\begin{itemize}[leftmargin=.5cm]

\item[1.] ({\bf Superposition}) The fact that for finite-dimensional Hilbert spaces the classical logical connective of conjunction
(interpreted as the cartesian product) merges with the logical disjunction (the coproduct) into a single {\it biproduct} (the direct sum) effectively encodes the superposition principle of quantum physics.

\item[2.] ({\bf No-cloning}) the appearance of another ``multiplicative conjunction'' interpreted as the {\it tensor} product 
reflects, due to its {\it non-cartesian} nature, the no-cloning/no-deletion constraints on pure quantum processes (interpreted 
as the absence of diagonal- and projection morphisms).

\item [3.] ({\bf Entanglement}) and in combination, these give rise to the all-important phenomenon of entanglement 
of quantum states (namely the superposition of product states in a tensor product).
\end{itemize}

\smallskip 
\noindent In fact, all of these phenomena are governed already by  (the internal linear logic of, cf. \cite{ValironZdancewic14}\cite{Murfet14}) the underlying tensor category of complex vector spaces:
$$
  \adjustbox{raise=8pt}{
  \scalebox{.8}{
  \begin{tabular}{l}
    Backdrop for {\it entangled} quantum 
    \\
    processes featuring:
    \\
    superposition, no-cloning,
    \\
    and quantum entanglement
  \end{tabular}
  }
  }
  \hspace{1.5cm}
  \overset{
    \mathclap{
    \raisebox{7pt}{
      \scalebox{.7}{
        \color{darkblue}
        \bf
       \def\arraystretch{.9}
       \begin{tabular}{c}
        {\color{purple} non-cartesian monoidal} category
        \\
        of complex vector spaces
        \end{tabular}
      }
    }
    }
  }{
    (\mathrm{Mod}_{\mathbb{C}}, \otimes)
  }
  \qquad \quad 
  =
  \quad
  \left\{\!\!\!\!
  \adjustbox{raise=4pt}{
  \begin{tikzcd}[row sep=-5pt, column sep=50pt]
    \overset{
      \mathclap{
        \rotatebox{24}{
          \hspace{-15pt}
          \rlap{
            \scalebox{.7}{
              \color{darkblue}
              \bf
              vector space
            }
          }
        }
      }
    }{
      \mathscr{V} 
    }
      \ar[
        rr,
        "{ \phi }"{description},
        "{
          \scalebox{.7}{
            \color{greenii}
            \bf
            linear map
          }
        }"{yshift=3pt},
      ] 
    && 
    \mathscr{W}
    \\
    \otimes 
    \mathrlap{
      \scalebox{.7}{
        \color{orangeii}
        \bf tensor product
      }
    }
    && 
    \otimes
    \\
    \mathscr{V}' 
    \ar[
      rr,
      "{ \phi' }"{description}
    ] 
    && \mathscr{W}'
  \end{tikzcd}
  }
\!\!\!\!  \right\}
$$
The refinement of this situation to Hilbert spaces serves to provide the: 
\smallskip 
\begin{itemize}[leftmargin=.5cm]
\item[4.] ({\bf Born rule}) 
The further $\dagger$-structure on complex {\it Hilbert} spaces (sending linear operators to their adjoints) reflects the 
Hermitian inner product in quantum states and hence the probabilistic nature of quantum physics.
\end{itemize}

So far this concerns coherent quantum processes on pure states, undisturbed by interaction with (such as in quantum measurement) 
or control by (such as in quantum state preparation) a classical environment. Such classical$\leftrightarrow$quantum interactions 
are instead reflected in {\it parameterized} quantum systems:

\medskip

\noindent
{\bf Parameterized quantum processes and Bundle categories.} More recent developments \cite[\S 6.1]{Schreiber14}\cite{RiosSelinger18} \cite{FuKishidaSelinger20} show that the logic of 
interaction between quantum systems and classical environments is essentially the ``internal logic'' of categories of Hilbert 
space-{\it bundles} over varying classical base spaces \cite[\S\S 3-4]{QS}:

\smallskip 
\begin{itemize}[leftmargin=.5cm]
  \item [1.] ({\bf Many worlds}) The fact that different quantum states, or even different Hilbert spaces, may be seen in 
  different classical worlds (in different classical parameter configurations) is reflected in these states being {\it sections}
  of {\it bundles} of state spaces over classical parameter spaces.
  \item [2.] ({\bf Quantum compulsion}) in which picture the above superposition principle re-appears as the fact that along 
  maps of parameter base spaces with finite pre-images, the left- and right-pushforward of vector bundles coincide (``ambidexterity'').
  \item [3.] ({\bf Quantum measurement}) In particular, the push-pull of vector bundles along a map with finite pre-image $B$  
  is a (co)monadic operation whose (co)unit reflects exactly the collapse of quantum states in the Hilbert space $\mathrm{Q}B$ 
  branched according to the classical measurement outcome in $B$.
  \item [4.] ({\bf Quantum state preparation}) and, dually, classical quantum state preparation is reflected in the operation 
  where dependent on a classical parameter $b \in B$ we pick in the linear fiber $\mathbb{C}B$ the corresponding basis vector.
\end{itemize}

\smallskip 
In particular, in this category of linear bundle types a classical logical conjunction is restored, the cocartesian coproduct operation $\sqcup$ on bundles, 
which interprets the logical connective of having some quantum states in some possible world {\it or} other quantum states in another possible world:
\vspace{-2mm}
$$
  \adjustbox{raise=8pt}{
  \scalebox{.8}{
  \begin{tabular}{l}
    Backdrop for {\it parameterized} quantum 
    \\
    processes featuring:
    many worlds, 
    \\
    quantum measurement \& state collapse,
    \\
    quantum state preparation, etc.
  \end{tabular}
  }
  }
  \hspace{1.5cm}
  \overset{
    \mathclap{
      \raisebox{7pt}{
        \scalebox{.7}{
          \color{darkblue}
          \bf
          \def\arraystretch{.8}
          \begin{tabular}{c}
            {\bf \color{purple}cocartesian monoidal} 
            category
            \\
            of complex vector bundles
          \end{tabular}
        }
      }
    }
  }{
    \big(
    \mathrm{Bun}_{\mathbb{C}},
    \sqcup
    \big)
  }
  \;\;\;\;
  =
  \;\;\;\;
  \left\{
  \adjustbox{raise=3pt}{
  \begin{tikzcd}[column sep=50pt]
    \overset{
      \mathclap{
        \scalebox{.7}{
          \color{darkblue}
          \bf
          \def\arraystretch{.7}
          \begin{tabular}{c}
            vector
            \\
            bundle
          \end{tabular}
        }
      }
    }{
      \mathscr{V}
    }
    \ar[
      rr,
      "{ \phi }"{description},
      "{
        \scalebox{.7}{
          \color{greenii}
          \bf
          \def\arraystretch{.9}
          \begin{tabular}{c}
            fiberwise 
            \\
            linear map
          \end{tabular}
        }
      }"{yshift=3pt}
    ]
    \ar[
      d,
      "{
        \scalebox{.7}{
          \color{orangeii}
          \bf
          \def\arraystretch{.8}
          \begin{tabular}{l}
            bundle
            \\
            projection
          \end{tabular}
        }
      }"{xshift=-5pt}
    ]
    && 
    \mathscr{V}'
    \ar[d]
    \\
    \underset{
      \mathclap{
        \raisebox{-4pt}{
          \scalebox{.7}{
          \color{darkblue}
          \bf
          \def\arraystretch{.8}
          \begin{tabular}{c}
            base 
            \\
            space
          \end{tabular}
          }
        }
      }
    }{
      X
    }
    \ar[
      rr,
      "{ f }"{description},
      "{  
        \scalebox{.7}{
          \color{greenii}
          \bf
          classical map
        }
      }"{swap, yshift=-3pt}
    ]
    &&
    X'
  \end{tikzcd}
   \!\!\!\!  }
  \right\}
$$
\vspace{-2mm} 

In this context, a quantum state is no longer just an element of a fixed Hilbert space, but is a {\it section} of a bundle of such spaces (in the language of type theory: a ``dependent term of a dependent linear type'') assigning to each parameter value (to each possible classical world) the quantum state seen for that parameter value (in that world).
$$
  \mathllap{
    \scalebox{.7}{
      \begin{tabular}{l}
        In parameterized quantum theory
        \\
        a quantum state is a {\it section}
        \\
        of a bundle of quantum state spaces
        \\
        assigning to each parameter value
        \\
        (to each possible classical world)
        \\
        the quantum state for that value
        \\
        (as seen in that particular world).
      \end{tabular}
    }
  }
  \hspace{1cm}
  \begin{tikzcd}
    &[-26pt]
    &
    \mathbb{C}^{n_0}
    \ar[dd]
    &
    [-26pt]
    \\[-29pt]
    &&&
    \mathbb{C}^{n_1}
    \ar[dd]
    \\[+15pt]
    \mathllap{\big\{}0
    \ar[dr, phantom, "{ , }"]
    \ar[
      rr, equals, shorten >=5pt
    ]
    \ar[
      uurr,
      crossing over,
      "{
        \vert \psi_0 \rangle
      }"{sloped}
    ]
    &[-26pt]
    &
    \mathllap{\big\{}0
    \ar[dr, phantom, "{ , }"]
    &[-26pt]
    \\[-29pt]
    & 
    1\mathrlap{\big\}}
    \ar[
      rr, equals, shorten <=5pt
    ]
    \ar[
      uurr,
      crossing over,
      "{
        \vert \psi_1 \rangle
      }"{sloped}
    ]
    &
    &
    1\mathrlap{\big\}}
  \end{tikzcd}
$$

\noindent Crucially, in the category of parameterized bundles of quantum states, {\it quantum measurement} 
and the ensuing {\it quantum state collapse} is  naturally reflected by linear projection maps parameterized over the set of possible
measurement outcomes, and dually for quantum state preparation:
\vspace{-2mm} 
$$
  \scalebox{.7}{
    \begin{tabular}{l}
      Process of quantum measurement of
      \\
      qubits 
      $q_0 \vert0\rangle + q_1 \vert 1 \rangle \,\in\, \mathbb{C}^2 \simeq \mathbb{C}[\{0,1\}]$
      \\
      as a morphism of 
      Hilbert space bundles 
      \\
      over spaces of possible classical worlds
    \end{tabular}
  }
  \hspace{.4cm}
  \begin{tikzcd}[row sep=15pt,
    column sep=50pt
  ]
    \mathbb{C}^2
    \ar[
      rr,
      "{
        \vert 0 \rangle \, \langle 0 \vert
      }"{description}
    ]
    \ar[dd]
    &[-45pt]
    &
    \mathbb{C}
    \ar[dd]
    &[-45pt]
    \\[-20pt]
    & 
    \mathbb{C}^2
    \ar[
      rr, 
      crossing over, 
      "{
        \vert 1 \rangle \, \langle 1 \vert
      }"{description, pos=.45}
    ]
    &&
    \mathbb{C}
    \ar[
      dd
    ]
    \\[+20pt]
    \mathllap{\big\{}0
    \ar[rr, shorten >=6pt, |->]
    \ar[dr, phantom, "{,}"]
    &&
    \mathllap{\big\{}0
    \ar[dr, phantom, "{,}"]
    \\[-20pt]
    & 
    1\mathrlap{\big\}}
    \ar[rr, shorten <=5pt, |->]
    \ar[
      from=uu,
      crossing over
    ]
    &&
    1\mathrlap{\big\}}
  \end{tikzcd}
  \hspace{.5cm}
  \begin{tikzcd}[row sep=15pt,
    column sep=50pt
  ]
    \mathbb{C}
    \ar[dd]
    \ar[
      rr, 
      "{
        \vert\psi_0\rangle
      }"{description, pos=.35}
    ]
    &[-45pt]
    &
    \mathbb{C}^2
    \ar[dd]
    \\[-20pt]
    &
    \mathbb{C}
    \ar[
      ur,
      end anchor={[yshift=-2pt]},
      "{
        \vert\psi_1\rangle
      }"{description, sloped, pos=.45}
    ]
    \\[+20pt]
    \mathllap{\big\{}0
    \ar[dr, phantom, "{,}"]
    \ar[rr, |->]
    &&
    \{\ast\}
    \\[-20pt]
    &
    1\mathrlap{\big\}}
    \ar[
      from=uu,
      crossing over
    ]
    \ar[
      ur, 
      |->,
      shorten <=5pt, 
      end anchor={[yshift=-2pt]}
    ]
  \end{tikzcd}
  \scalebox{.7}{
    \begin{tabular}{l}
      State preparation
      \\
      conditioned on 
      \\
      classical parameters
    \end{tabular}
  }
$$

\vspace{-2mm}
\noindent 
In fact (\cite[\S 4]{QS}): the quantum measurement process is the {\it counit} of the {\it base change comonad}
$\Box_{\{0,1\}} \,:\defneq\, p^* p_*$ on the {\it slice category} of $\mathrm{Bun}_{\mathbb{C}}$ over the 0-bundle $0_{\{0,1\}}$:
\vspace{-2mm}
$$
  \begin{tikzcd}[row sep=small]
  &
  0_{\{0,1\}}
  \ar[rr, "{ p }"]
  &&
  0_{\{\ast\}}
  &
  \scalebox{.7}{
    \color{greenii}
    \def\arraystretch{.9}
    \begin{tabular}{c}
      map of vector bundles
      \\
      (here: 0-bundles over sets)
    \end{tabular}
  }
  \\
  \scalebox{.7}{
    \def\arraystretch{.7}
    \color{orangeii}
    \bf
    \begin{tabular}{c}
      comonad expressing
      \\
      quantum measurement
      \\
      as a computational effect
    \end{tabular}
  }
  &
  \big(
    \mathrm{Bun}_{\mathbb{C}}
  \big)_{\!\!\!\big/0_{\{0,1\}}}
    \ar[
        out=180+38, 
        in=180-38, 
        decorate,
        looseness=3.5,
        shift left=10pt,
        "{ \Box_{\{0,1\}} }"{description}
    ]
  \ar[
    rr,
    shift left=12pt,
    "{ p_! }"
  ]
  \ar[
    from=rr,
    "{ p^\ast }"{description}
  ]
  \ar[
    rr,
    shift right=12pt,
    "{ p_\ast }"{swap}
  ]
  \ar[
    rr,
    phantom,
    shift left=7pt,
    "{
      \scalebox{.6}{$\bot$}
    }"
  ]
  \ar[
    rr,
    phantom,
    shift right=7pt,
    "{
      \scalebox{.6}{$\bot$}
    }"
  ]
  &&
  \big(
    \mathrm{Bun}_{\mathbb{C}}
  \big)_{\!\!\big/0_{\{\ast\}}}
  &
  \scalebox{.7}{
    \color{greenii}
    \def\arraystretch{.9}
    \begin{tabular}{c}
      induced base change adjunction
      \\
      between slice categories (of
      \\
      bundles of vector bundles!)
    \end{tabular}
  }
  \end{tikzcd}
$$

\vspace{-5mm}
\noindent 
It is such base change adjunctions between slices of $\mathrm{Bun}_{\mathbb{C}}$
which interpret the dependent linear type inference rules of quantum programming languages like {\tt Quipper} \cite{RiosSelinger18}\cite{FuKishidaSelinger20} and {\tt LHoTT} \cite[\S 3]{QS}.

\medskip

\noindent
{\bf Unification of entangled and parameterized quantum information.}
Therefore one is naturally led to wonder about {\it amalgamating} these two fragments of quantum information theory:
\begin{itemize}[leftmargin=.5cm]
 \item[1.]  the non-cartesian monoidal tensor structure on plain vector spaces encoding pure quantum phenomena such as entanglement,
 \item[2.] the cocartesian monoidal structure of vector bundles encoding quantum/classical phenomena such as state collapse upon quantum measurement,
\end{itemize}
by coupling these two theory sectors along their common core of vector spaces of quantum states.

\medskip
\noindent
In the language of category theory, such an amalgamation of two objects along a common core would be called a {\it pushout}
(to be abbreviated ``po''), which in the present case would mean to ask for the {\it universal} way of completing the 
following (for the moment: schematic) diagram to a commuting square, in a suitable sense:
\begin{equation}
  \label{TheSoughtAfterPushout}
  \hspace{-2cm} 
  \begin{tikzcd}[column sep=large]
    \mathllap{
      \scalebox{.8}{
        \def\arraystretch{.8}
        \begin{tabular}{l}
          {\color{purple}\bf  Pure quantum phenomena}:
          \\
          no-cloning, {\bf entanglement},...
        \end{tabular}
      }
    }
    \big(
      \mathrm{Mod}_{\mathbb{C}}
      ,
      \otimes
    \big)
    \ar[rr, dashed]
    \ar[
      drr,
      phantom,
      "{
        \scalebox{.7}{(po)}
      }"
    ]
    &&
   \colorbox{lightgray}{ (??)}
     \mathrlap{  
       \hspace{12pt}
        \scalebox{.8}{
          \bf
          \begin{tabular}{l}
            {\color{purple}
             Parameterized quantum phenomena
            }
            \\
            entanglement of sections
          \end{tabular}
        }
      }
    \\
    \mathllap{
      \raisebox{0pt}{
      \scalebox{.8}{
        \def\arraystretch{.8}
        \begin{tabular}{l}
          {\color{purple}\bf Core quantum phenomena}:
          \\
          superposition principle
        \end{tabular}
      }
    }
    }
    \mathrm{Mod}_{\mathbb{C}}
    \ar[rr]
    \ar[u]
    &&
    {
      \big(
        \mathrm{Bun}_{\mathbb{C}}, \sqcup\big)
    }
    \ar[u, dashed]
     \mathrlap{  
        \scalebox{.8}{
          \begin{tabular}{l}
            {\color{purple}\bf Classical $\leftrightarrow$ Quantum phenomena}:
            \\
            parameterized quantum states:
            {\bf sections}
            \\
            state collapse/preparation in many worlds.
          \end{tabular}
        }
      }
  \end{tikzcd}
\end{equation}

\noindent
Essentially the following natural {\bf Question} was recently raised in \cite[p. 1]{FreedmanHastings23}:

\noindent
\begin{itemize}[leftmargin=.8cm]
\item [{\bf (i)}] How to make this precise?
\item [{\bf (ii)}] What then is the pushout? 
\item [{\bf (iii)}] What is its import on quantum information theory?
\end{itemize}

\medskip

\noindent
Here we offer an {\bf Answer.} Informally, our answer says:

\begin{center}
\fboxrule=1pt \colorbox{lightgray}{  
\small 
\begin{tabular}{c}
{\bf The amalgamation of} 
\\
    {\bf the entanglement tensor product} structure on Hilbert spaces 
  \\
  {\bf with the parameterized coproduct} structure on Hilbert bundles 
  \\
  {\bf is the} {\it external} {\bf tensor product} structure on Hilbert bundles.
  \end{tabular} 
  }
\end{center}

\medskip

\noindent
{\bf External tensor product.}
Here the {\it external tensor product of vector bundles} or that induced on their K-theory classes
\cite[\S 2.6]{Atiyah67}\cite[p. 19]{Bott69}\cite[\S 13.51]{Switzer75}\cite[\S 4.9]{Karoubi78}
(cf. also \cite[p. 84]{GHV73}\cite[p. 2]{Lyubashenko01}\cite[p. 7]{Shulman13}) \newline 
forms the cartesian product of parameter base spaces and over each pair of parameter values assigns 
the tensor product of the corresponding Hilbert spaces:
$$
  \begin{tikzcd}[column sep=huge]
    \mathllap{
      \scalebox{.7}{
        \color{darkblue}
        \bf
        \def\arraystretch{.9}
        \begin{tabular}{c}
          External tensor of
          \\
          vector bundles
        \end{tabular}
      }
    }
    \mathscr{V} \boxtimes \mathscr{W}
    \ar[
      d,
      shorten <=-5pt
    ]
    &
    \mathscr{V}_{\! x} 
    \otimes
    \mathscr{W}_{\! y}
    \ar[
      l, 
      hook', 
      "{ 
        \scalebox{.7}{
          \color{greenii}
          \bf
          fiber
        } 
      }"{swap}
    ]
    \ar[
      d,
      shorten <=-5pt
    ]
    \mathrlap{
      \scalebox{.7}{
        \color{darkblue}
        \bf
        \def\arraystretch{.9}
        \begin{tabular}{c}
          Tensor product
          \\
          of vector spaces
        \end{tabular}
      }
    }
    &[1cm]
      \scalebox{.7}{
        \color{orangeii}
        \bf
        \def\arraystretch{.9}
        \begin{tabular}{c}
          Entanglement of
          \\
          quantum states
        \end{tabular}
      }
    \\
    \mathllap{
      \scalebox{.7}{
        \color{darkblue}
        \bf
        \def\arraystretch{.9}
        \begin{tabular}{c}
          Classical product 
          \\
          of base spaces
        \end{tabular}
      }
    }
    X \times Y
    &
    \{(x,y)\}
    \ar[
      l, 
      hook',
      "{ 
        \scalebox{.7}{
          \color{greenii}
          \bf
          base point
        } 
      }"{swap},
    ]
    \mathrlap{
      \scalebox{.7}{
        \color{darkblue}
        \bf
        \def\arraystretch{.9}
        \begin{tabular}{c}
          Pair of 
          \\
          base points
        \end{tabular}
      }
    }
    &
      \scalebox{.7}{
        \color{orangeii}
        \bf
        \def\arraystretch{.9}
        \begin{tabular}{c}
          Pairing of 
          \\
          classical parameters
        \end{tabular}
      }
  \end{tikzcd}
$$
While, as a mathematical construction, the external tensor product (specifically in topological K-theory) is well-known, its 
relevance to quantum theory has been considered only recently \cite{Mera20}\cite[\S 3.1]{QS}.

A key point of the external tensor product is that it {\it distributes} over the disjoint union of bundles (the cocartesian product):
$$
  \mathscr{V}_X
  \boxtimes
  \big(
    \mathscr{W}_{Y}
    \sqcup
    \mathscr{W}'_{Y'}
  \big)
  \;\;
  \simeq
  \;\;
  \big(
  \mathscr{V}_X
  \,\boxtimes\,
  \mathscr{W}_Y
  \big)
  \,\sqcup\,
  \big(
  \mathscr{V}_X
  \,\boxtimes\,
  \mathscr{W}'_{Y'}
  \big)
  \,.
$$
In this way, $\big(\mathrm{Bun}_{\mathbb{C}},\,\sqcup,\,\boxtimes\big)$ forms what is called a {\it distributive category} (Def. \ref{CategoriesOfCategories}):
$$
  \adjustbox{raise=8pt}{
  \scalebox{.8}{
  \begin{tabular}{l}
    Backdrop for 
    \\
    {\it parameterized entangled}
    \\
    quantum    
        processes, featuring 
        \\
    entanglement of sections
   \end{tabular}
  }
  }
  \hspace{1.3cm}
  \overset{
    \mathclap{
      \raisebox{4pt}{
        \scalebox{.7}{
          \color{darkblue}
          \bf
          \def\arraystretch{.9}
          \begin{tabular}{c}
            {\color{purple}Distributive monoidal} category 
            \\
            of complex vector bundles
            \\
            with external tensor product
          \end{tabular}
        }
      }
    }
  }{
  \big(
    \mathrm{Bun}_{\mathbb{C}}
    ,\,
    \sqcup
    ,\,
    \boxtimes
  \big)
  }
  \qquad
  \defneq
  \;\;\;
  \left\{\!\!\!
  \adjustbox{raise=5pt}{
  \begin{tikzcd}[
    row sep=-7pt, column sep=large
  ]
    &
    \overset{
      \mathclap{
        \raisebox{4pt}{
          \scalebox{.7}{
            \color{darkblue}
            \bf
            \def\arraystretch{.8}
            \begin{tabular}{c}
              vector
              \\
                            bundle
            \end{tabular}
          }
        }
      }
    }{
      \mathscr{V}
    }
    &&
    \mathscr{V'}
    \\
    &
    \mathllap{
      \scalebox{.7}{
        \color{orangeii}
        \bf
        external tensor
      }
      \,
    }
    \boxtimes
    \ar[dddl]
    \ar[
      rr,
      "{
        \scalebox{.7}{
          \color{greenii}
          \bf
          \def\arraystretch{.8}
          \begin{tabular}{c}
            fiberwise
            \\
            linear map
          \end{tabular}
        }
      }"
    ]
    &&
    \boxtimes
    \ar[dddl]
    \\
    &
    \mathscr{W}
    &&
    \mathscr{W}'
    \\[25pt] 
    X 
    &&
    X'
    \\
    \times
    \ar[
      rr,
      "{
        \scalebox{.7}{
          \color{greenii}
          \bf
          classical map
        }
      }"{swap}
    ]
    &&
    \times
    \mathrlap{
      \,
      \scalebox{.7}{
        \color{orangeii}
        \bf
        classical product
      }
    }
    \\
    Y && Y'
  \end{tikzcd}
  }
\!\!\!\!  \right\}.
$$

\medskip

\noindent
{\bf Universal property of external tensor product of vector bundles over discrete spaces.} Consider the special case of vector 
bundles over discrete spaces, i.e., over plain sets. (While just a small special case of the general mathematical notion, 
this already captures all  parameterization of quantum processes  considered in contemporary quantum information theory.)
It is useful to understand this category as the unification (``Grothendieck construction'', Def. \ref{GrothendieckConstruction}) of all categories of vector 
bundles over fixed sets, which in turn are usefully understood via their fiber-assigning functors:
 
\begin{equation}
  \hspace{-1.5cm} 
  \begin{tikzcd}[sep=0pt]
  \mathrm{Mod}_{\mathbb{C}}
  \ar[rr, hook, "{ \iota }"]
  &&
  \overset{
    \mathrm{Fam}_{\mathbb{C}}
  }{
  \overbrace{
  \int_{X \in \mathrm{Set}}
  \mathrm{Mod}^X_{\mathbb{C}}
  }
  }
  \\
 \scalebox{0.85}{$ \mathscr{V} $}
  &\mapsto&
\scalebox{0.85}{$  \left[\!\!\!
  \begin{array}{c}
    \mathscr{V}
    \\
    \downarrow
    \\
    \mathrm{pt}
  \end{array}
  \!\!\!\right]
  $}
  \end{tikzcd}
  \hspace{1cm}
  \mbox{where}
  \hspace{1cm}
  \begin{tikzcd}[row sep=-3pt, column sep=small]
    \mathrm{Set}^{\mathrm{op}}
    \ar[rr]
    &&
    \mathrm{Cat}
    \\
    X 
    \ar[d, "{f}"]
      &\mapsto&
    \mathrm{Func}(
      X,
      \,
      \mathrm{Mod}_{\mathbb{C}}
    )
    \ar[
      from=d,
      "{
        f^\ast 
      }"{swap}
    ]
    \mathrlap{
    \;\,
    \defneq
    \;\,
    \mathrm{Mod}^X_{\mathbb{C}}
    }
    \\[20pt]
    Y 
      &\mapsto&
    \mathrm{Func}(
      Y,
      \,
      \mathrm{Mod}_{\mathbb{C}}
    )
    \mathrlap{
    \;\,
    \defneq
    \;\,
    \mathrm{Mod}^Y_{\mathbb{C}}\,.
    } 
  \end{tikzcd}
\end{equation}

\vspace{-1mm} 
\noindent Notice that every bundle over a discrete space $X$ is the coproduct of its 
restrictions to the points in the base space:
$$
  \mathscr{V}_X
  \;\;\;
  =
  \;\;\;
  \begin{array}{c}
    \mathscr{V}_x
    \\
    \downarrow
    \\
    \{x\}
  \end{array}
  \;\;
    \sqcup
  \;\;
  \begin{array}{c}
    \mathscr{V}_y
    \\
    \downarrow
    \\
    \{y\}
  \end{array}
  \;\;
    \sqcup
  \;\;
  \begin{array}{c}
    \mathscr{V}_z
    \\
    \downarrow
    \\
    \{z\}
  \end{array}
  \;\;
    \sqcup
  \;\;
  \cdots
  \,.
$$
In fact, vector bundles over sets are the {\it free coproduct completion} (Ex. \ref{CategoriesOfIndexedSetsOfObjects})
of the plain category of vector spaces.
 But this means that on such 
 bundles the external tensor product is {\it completely characterized} by these two properties:
 \vspace{.1mm} 
\begin{itemize}[leftmargin=.5cm]
  \item[\bf 1.] 
  Over singletons, it reduces to the ordinary tensor product:
  $\quad 
    \begin{array}{c}
      \mathscr{V}
      \\
      \downarrow
      \\
      \mathrm{pt}
    \end{array}
      \;
      \boxtimes
      \;
    \begin{array}{c}
      \mathscr{W}
      \\
      \downarrow
      \\
      \mathrm{pt}
    \end{array}
    \;
      \simeq
    \;
    \begin{array}{c}
      \mathscr{V} 
      \otimes
      \mathscr{W}
      \\
      \downarrow
      \\
      \mathrm{pt}
    \end{array}
  $
  
  \vspace{.3cm}
  
  \item[\bf 2.] It distributes over coproducts:
  $\qquad 
   \begin{array}{c}
      \mathscr{V}
      \\
      \downarrow
      \\
      \{x\}
    \end{array}
    \;
    \boxtimes
    \;
    \left(\!\!
    \begin{array}{c}
      \mathscr{W}
      \\
      \downarrow
      \\
      \{y\}
    \end{array}
    \;
      \sqcup
    \;
    \begin{array}{c}
      \mathscr{W}'
      \\
      \downarrow
      \\
      \{y'\}
    \end{array}
    \!\!\right)
    \;
      =
    \;
    \left(\!\!
    \begin{array}{c}
      \mathscr{V}
      \\
      \downarrow
      \\
      \{x\}
    \end{array}
    \,\boxtimes\,
    \begin{array}{c}
      \mathscr{W}
      \\
      \downarrow
      \\
      \{y\}
    \end{array}
   \!\! \right)
    \;\sqcup\;
    \left(\!\!
    \begin{array}{c}
      \mathscr{V}
      \\
      \downarrow
      \\
      \{x\}
    \end{array}
    \,\boxtimes\,
    \begin{array}{c}
      \mathscr{W}'
      \\
      \downarrow
      \\
      \{y'\}
    \end{array}
    \!\!\right).
  $
\end{itemize}

\medskip

This characterization may be reformulated (we make this precise in \cref{PushoutPropertyForVectorBundlesOverDiscreteSpaces}) 
as saying that the following diagram is a {\it pushout} in a suitable category of plain, monoidal, cocartesian and distributive 
categories, whose morphisms are strong monoidal functors with respect to the monoidal structure present on their domain category
(Thm. \ref{ThePushoutTheoremOverSets}):
\vspace{-2mm} 
\begin{equation}
  \label{TheFirstPushoutDiagramInIntroduction}
  \begin{tikzcd}[row sep=small, column sep=large]
    \big(
      \mathrm{Mod}_{\mathbb{C}}
      ,\,
      \otimes
    \big)
    \ar[r, hook, "{ \iota }"]
    \ar[dr, phantom, "{ \scalebox{.7}{(po)} }"]
    &
    \colorbox{lightgray}{$    
    \big(\,
    \mathrm{Fam}_{\mathbb{C}}
    ,\,
    \sqcup
    ,\,
    \boxtimes
    \big)
    $}
    \\
    \mathrm{Mod}_{\mathbb{C}}
    \ar[u]
    \ar[r, hook, "{\iota}"]
    &
    \big(\,
    \mathrm{Fam}_{\mathbb{C}}
    ,\,
    \sqcup
    \big)
    \ar[u]
  \end{tikzcd}
\end{equation}
\vspace{-2mm} 

\noindent Therefore, this is a first answer to the question \eqref{TheSoughtAfterPushout} for the case of discrete parameter spaces.
While this subsumes all of the contemporary quantum information theory, we naturally want to go further.

\medskip
\noindent
{\bf External tensor product of group representations.}
Some parameters in quantum physics are known not to form discrete sets but to form homotopy 1-types, namely 
groupoids (exposition in \cite{Weinstein96}); and the Hilbert bundles over these parameter spaces have {\it monodromy}. 
A famous example are bundles of conformal blocks over configuration spaces of points and equipped with 
the Knizhnik-Zamolodchikov connection, which are thought to be the bundles of Hilbert spaces for
{\it anyons} (see \cite{TQP} for extensive references to the literature and further discussion related to our perspective here):
\vspace{-2mm} 
$$
  \begin{tikzcd}[row sep=3pt]
    \overset{
      \mathclap{
        \raisebox{3pt}{
        \scalebox{.7}{
          \color{darkblue}
          \bf
          \def\arraystretch{.8}
          \begin{tabular}{c}
            Hilbert bundle
            \\
            of anyon states
            \\
            (``conformal blocks'')
          \end{tabular}
        }
        }
      }
    }{
      \mathscr{H}
    }
    \ar[d]
    \\
    \underset{
      \mathclap{
        \scalebox{.7}{
          \color{darkblue}
          \bf
          \def\arraystretch{.8}
          \begin{tabular}{c}
            configuration
            \\
            space
          \end{tabular}
        }
      }
    }{
    \underset{
      \mathclap{
        \scalebox{.7}{$
          \{1,\! \cdots,\! n\}
        $}
      }
    }{\mathrm{Conf}}\big(\mathbb{R}^2\big)
    }
    \ar[
      r, 
      phantom, 
      "{ \simeq }",
      "{ \scalebox{.7}{$\mathrm{whe}$} }"{yshift=-6pt}
    ]
    &[-15pt]
    \underset{
      \mathclap{
        \raisebox{-12pt}{
        \scalebox{.7}{
          \color{darkblue}
          \bf
          \def\arraystretch{.8}
          \begin{tabular}{c}
            Eilenberg-MacLane
            \\
            space
          \end{tabular}
        }
        }
      }
    }{
    K\big(
     \mathrm{PBr}(N)
     ,\,
     1
    \big)
    }
    \;\;
    \underset{
      \mathrm{whe}
    }{\simeq}
    \;\;
    \underset{
      \mathclap{
        \hspace{-4pt}
        \raisebox{-13pt}{
        \scalebox{.7}{
          \color{darkblue}
          \bf
          \def\arraystretch{.8}
          \begin{tabular}{c}
            classifying
            \\
            space
          \end{tabular}
        }
        }
      }
    }{
    B\big(
      \overset{
        \mathclap{
          \scalebox{.7}{
            \color{darkblue}
            \bf
            \def\arraystretch{.8}
            \begin{tabular}{c}
              pure braid
              \\
              group
            \end{tabular}
          }
        }
      }{
        \mathrm{PBr}(N)
      }
    \big)
    \mathrlap{\,.}
    }
  \end{tikzcd}
$$
An elegant way to encode a {\it flat connection} on such bundles is to consider the {\it fundamental groupoid} 
\eqref{FundamentalGroupoid} of their base space -- which for connected spaces $X$ is equivalently the delooping groupoid  $\mathbf{B}\pi_1(X)$ \eqref{DeloopingGroupoids} with a single object and the elements of the fundamental group $\pi_1(X)$ as morphisms. 
Then a flat connection over $X$ is equivalently its monodromy functor $\mathbf{B}\pi_1(X) \to \mathrm{Mod}_{\mathbb{C}}$ (\cite{SchreiberWaldorf09}\cite{Dumitrascu10}), also known as a ``local system'' 
(\cite[p. 58]{Spanier66}\cite[\S I.1]{Deligne70}\cite{LibgoberYuzvinsky00}\cite[\S I 9.2.1]{Voisin02}\cite[\S 2.5]{Dimca04}):
\vspace{-2mm} 
$$
  \begin{tikzcd}[
    row sep=0pt,
    column sep=13pt
  ]
    \mbox{\bf{B}}\big(
      \mathrm{PBr}(N)
    \big)
    \ar[
      rr,
      "{
        \def\arraystretch{.9}
        \scalebox{.7}{
          \color{greenii}
          \bf
          \def\arraystretch{.9}
          \begin{tabular}{c}
            Knizhnik-Zamolodchikov
            \\
            monodromy
          \end{tabular}
        }
      }"
    ]
    &&
    \mathrm{Mod}_{\mathbb{C}}
    \\[+10pt]
    \scalebox{.65}{$\{1, \cdots, N\}$}
    \ar[
        out=180+38, 
        in=180-38, 
        decorate,
        looseness=3.5,
        shift right=-4pt,
        "{ \mathrm{braid} }"{description}
    ]
    &\hspace{-17pt}\mapsto&
    \mathscr{H}
    \ar[
        out=180+38, 
        in=180-38, 
        decorate,
        looseness=3.5,
        shift right=4pt,
        "{ \mathrm{unitary} }"{description}
    ]
  \end{tikzcd}
  \hspace{1cm}
  \raisebox{-1.2cm}{
    \includegraphics[width=7cm]{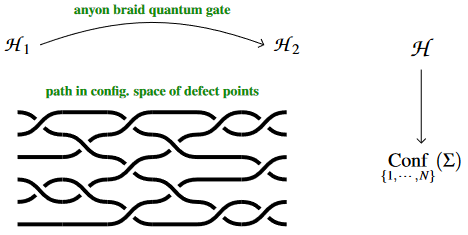}
  }
$$
Generally/equivalently, for $G$ any group, a complex linear $G$-representation is a functor from $\mathbf{B}G$ to $\mathrm{Mod}_{\mathbb{C}}$:
\vspace{-2mm} 
\begin{equation}
  \label{GRepAsModBGInIntroduction}
  G \mathrm{Rep}_{\mathbb{C}}
  \;\;
  \simeq
  \;\;
  \mathrm{Mod}_{\mathbb{C}}^{\mathbf{B}G}
  \;\;
  =
  \;\;
  \Bigg\{\!\!\!\!
  \adjustbox{raise=6pt}{
  \begin{tikzcd}[row sep=small]
    \mbox{\bf{B}}G
    \ar[rr, "{ \mathscr{V}_{(\mbox{-})} }"]
    &&
    \mathrm{Mod}_{\mathbb{C}}
    \\[-10pt]
 \scalebox{0.8}{$   \mathrm{pt}   
    \ar[
        out=180+38, 
        in=180-38, 
        decorate,
        looseness=3.5,
        shift right=4pt,
        "{ g }"{description}
    ]
    $}
    &\mapsto&
\scalebox{0.8}{$    \mathscr{V}
    \ar[
        out=180+38, 
        in=180-38, 
        decorate,
        looseness=3.5,
        shift right=7pt,
        "{ \rho_g }"{description}
    ]    
    $}
  \end{tikzcd}
  }
  \!\!\!\!\Bigg\}.
\end{equation}

\vspace{-2mm} 
\noindent There is a classical notion of {\it external tensor product of group representations} (e.g. \cite[Exer. 2.36]{FultonHarris91}), 
which in this groupoid-picture is the {\it cup-tensor product}
$$
  G,\, G'
  \,\in\,
  \mathrm{Grp}
  \hspace{1cm}
    \vdash
  \hspace{1cm}
  \begin{tikzcd}[row sep=20pt]
    G \mathrm{Rep}_{\mathbb{C}}
    \,\times\,
    G' \mathrm{Rep}_{\mathbb{C}}
    \ar[rr, "{ \boxtimes }"]
    \ar[d, phantom, "{\simeq}"{rotate=-90}]
    &&
    (G \times G') \mathrm{Rep}_{\mathbb{C}}
    \ar[d, phantom, "{\simeq}"{rotate=-90}]
    \\[-15pt]
    \mathrm{Mod}_{\mathbb{C}}^{\mathbf{B}G}
    \,\times\,
    \mathrm{Mod}_{\mathbb{C}}^{\mathbf{B}G'}
    \ar[rr]
    &&
    \mathrm{Mod}_{\mathbb{C}}^{\mathbf{B}(G \times G')}
  \end{tikzcd}
$$
$$
  \begin{tikzcd}[column sep=40pt]
    \mathllap{
      \mathscr{V}
      \,\boxtimes\,
      \mathscr{W}
      \;:\;\;
    }
    \mbox{\bf{B}}(G \times G')
    \;\simeq\;
    \mbox{\bf{B}}G
    \times
    \mbox{\bf{B}}G'
    \ar[rr, " \mathscr{V}_{(\mbox{-})} \times \mathscr{W}_{(\mbox{-})} "]
    &&
    \mathrm{Mod}_{\mathbb{C}}
    \times
    \mathrm{Mod}_{\mathbb{C}}
    \ar[r, "{ \otimes }"]
    &
    \mathrm{Mod}_{\mathbb{C}}
  \end{tikzcd}
$$
This system of group-wise external tensor products is again unified 
on the Grothendieck construction
$$
  \mathrm{Rep}_{\mathbb{C}}
  \;:\defneq\;
  \int_{G \in \mathrm{Grp}}
  \mathrm{Mod}_{\mathbb{C}}^{\mathbf{B}G}
$$
into a single functor
$$
  \begin{tikzcd}
    \mathrm{Rep}_{\mathbb{C}}
    \times
    \mathrm{Rep}_{\mathbb{C}}
    \ar[r, "{ \boxtimes }"]
    &
    \mathrm{Rep}_{\mathbb{C}}
    \,.
  \end{tikzcd}
$$
analogous to the external tensor product of bundles over sets, from \eqref{TheFirstPushoutDiagramInIntroduction}.

\medskip

\noindent
{\bf Universal property of the external tensor product of flat vector bundles.}
To better understand this situation of group representations over varying groups, we may without restriction focus here on {\it skeletal groupoids} \eqref{SkeletalGroupoidDeclaration}, namely disjoint unions of delooping groupoids, which we may think of as sets each of whose elements is equipped 
with a group of automorphisms:
\vspace{-2mm} 
$$
  \mathrm{Grpd}_{\mathrm{skl}}
  \;\;\simeq\;\;
  \int_{ S \in \mathrm{Set} }
  \mathrm{Grp}^{S}
  \;\;\;\;
  \simeq
  \;\;\;
  \big\{
    \mathbf{B}G_1
    \,\sqcup\,
    \mathbf{B}G_2
    \,\sqcup\,
    \mathbf{B}G_3
    \,\sqcup\,
   \cdots
  \big\}
  \,.
$$
Flat vector bundles
on arbitrary base spaces $X$ are equivalently functors on such skeletal groupoids (\cite{SchreiberWaldorf09}\cite{Dumitrascu10}),
namely on the skeletization of their fundamental groupoid, 
which has one component $\mathbf{B}\pi_1(X_i)$ for each connected component $X_i$ of $X$:
\vspace{-2mm} 
\begin{equation}
  \label{CategoryOfFlatComplexVectorBundlesOverGroupoids}
  \scalebox{.7}{
    \def\arraystretch{.9}
    \color{darkblue}
    \bf
    \begin{tabular}{c}
      flat complex vector bundle
      \\
      over base space $X$ 
    \end{tabular}
  }
  \mathrm{Loc}_{\mathbb{C}}(X)
  \;\;
  \simeq
  \;\;
  \mathrm{Mod}_{\mathbb{C}}^{
    \coprod_{i \in \pi_0(X)} \mathbf{B} \pi_1(X_i)
  }
  \;\;
    :\defneq
  \;\;
  \mathrm{Func}
  \big(
    \mathbf{B}\pi_1(X_1)
    \,\sqcup\,
    \mathbf{B}\pi_1(X_2)
    \,\sqcup\,
    \cdots
    ,\,
    \mathrm{Mod}_{\mathbb{C}}
  \big)
  \,.
\end{equation}

Hence replacing, in the previous discussion, bare sets of points with sets of points-with-automorphisms
(i.e.: groupoids), 
we obtain the following category of flat vector bundles over varying base spaces:
\begin{equation}
  \label{CategoryOfLocalSystemsInIntroduction}
  \hspace{-1.5cm}
  \begin{tikzcd}[row sep=-2pt, column sep=small]
  G\mathrm{Rep}_{\mathbb{C}}
  \ar[rr, hook, "{ \iota }"]
  &&
  \qquad 
  \overset{
    \mathclap{
      \raisebox{7pt}{
        \scalebox{.7}{
          \color{darkblue}
          \bf
          \def\arraystretch{.8}
          \begin{tabular}{c}
            flat complex vector bundles
            \\
            over varying base spaces
          \end{tabular}
        }
      }
    }
  }{
  \overset{
    \mathrm{Loc}_{\mathbb{C}}
    \,\, :\defneq
  }{
  \overbrace{
  \underset{
    \mathclap{
    \scalebox{.7}{$
    \def\arraystretch{.9}
    \begin{array}{c}
    \mathcal{X} \in
        \mathrm{Grpd}
    \end{array}
    $}
    }
  }{\textstyle{\int}}  
    \quad   
    \mathrm{Mod}^{\mathcal{X}}_{\mathbb{C}}
  }
  }
  }
  \\
  (\mathscr{V}, \rho)
  &\longmapsto&
  \mathscr{V}_{\mathbf{B}G}
  \end{tikzcd}
  \hspace{.5cm}
  \mbox{where}
  \hspace{.5cm}
  \begin{tikzcd}[row sep=-3pt, column sep=small]
    \mathrm{Grpd}^{\mathrm{op}}
    \ar[rr]
    &&
    \mathrm{Cat}
    \\
    \mathcal{X} 
    \ar[d, "{f}"]
      &\mapsto&
    \mathrm{Func}\big(
      \mathcal{X},
      \,
      \mathrm{Mod}_{\mathbb{C}}
    )
    \ar[
      from=d,
      "{
        \mathrm{Func}(f,\,\mathrm{Mod}_{\mathbb{C}})
      }"{swap}
    ]
    \ar[
      from=d,
      shift right=70pt,
      "{
        f^\ast 
      }"{swap}
    ]
    \mathrlap{
    \;\,
    \defneq
    \;\,
    \mathrm{Mod}^\mathcal{X}_{\mathbb{C}}
    }
    \\[20pt]
    \mathcal{Y}
      &\mapsto&
    \mathrm{Func}\big(
      \mathcal{Y},
      \,
      \mathrm{Mod}_{\mathbb{C}}
    )
    \mathrlap{
    \;\,
    \defneq
    \;\,
    \mathrm{Mod}^{\mathcal{Y}}_{\mathbb{C}} \;.
    }    
  \end{tikzcd}
\end{equation}
Observe that in this category, every $G$-representation $\mathscr{V}_{\mathbf{B}G}$ sits in a cartesian square of the following form
$$
  \adjustbox{scale=.9}{
  \begin{tikzcd}[row sep=small, column sep=40pt]
    \mathscr{V}
    \ar[rrr]
    \ar[dr]
    \ar[dd]
    &[-25pt]
    &
    &
    \mathscr{V} \!\sslash\! G
    \ar[dd]
    \ar[dr]
    &[-30pt]
    \\[-10pt]
    & 
    \mathrm{pt}
    \ar[rrr, crossing over]
    &&&
    \mathbf{B}G
    \ar[dd]
    \\[+10pt]
    0
    \ar[dr]
    \ar[rrr]
    &&&
    0
    \ar[dr]
    \\[-10pt]
    & 
    \mathrm{pt}
    \ar[from=uu, crossing over]
    \ar[rrr]
    &&&
    \mathbf{B}G
  \end{tikzcd}
  }
  \hspace{1.6cm}
  \begin{tikzcd}[row sep=15pt, column sep=50pt]
    \mathscr{V}_{\mathrm{pt}}
    \ar[r]
    \ar[d]
    \ar[
      dr, 
      phantom, 
      "{
        \scalebox{.6}{(pb)}
      }"
    ]
    &
    \mathscr{V}_{\mathbf{B}G}
    \ar[d]
    \\
    0_{\mathrm{pt}}
    \ar[r]
    &
    0_{\mathbf{B}G}
  \end{tikzcd}
$$
and that the external tensor product of representations {\it preserves} the Cartesianness of these squares:
\vspace{-2mm} 
$$
  \begin{tikzcd}[column sep=50pt, row sep=20pt]
    \mathscr{V}_{\mathrm{pt}}
    \boxtimes
    \mathscr{W}_{\mathbf{B}H}
    \ar[r]
    \ar[d]
    &
    \mathscr{V}_{\mathbf{B}G}
    \boxtimes
    \mathscr{W}_{\mathbf{B}H}
    \ar[d]
    \\
    0_{\mathrm{pt}}
    \boxtimes
    \mathscr{W}_{\mathbf{B}H}    
    \ar[r]
    &
    0_{\mathbf{B}G}
    \boxtimes
    \mathscr{W}_{\mathbf{B}H}    
  \end{tikzcd}
  \hspace{1cm}
  =
  \hspace{1cm}
  \begin{tikzcd}[column sep=50pt, row sep=small]
    \big(\mathscr{V} \otimes \mathscr{W}
    \big)_{\mathbf{B}(1 \times H)}
    \ar[r]
    \ar[dd]
    &
    \big(\mathscr{V} \otimes \mathscr{W}
    \big)_{\mathbf{B}(G \times H)}
    \ar[dd]
    \\
    \\
    0_{\mathbf{B}(1 \times H)}
    \ar[r]
    &
    0_{\mathbf{B}(G \times H)}
  \end{tikzcd}
$$

\vspace{-2mm} 
\noindent But this means that on flat vector bundles, the external tensor products of vector bundles and of group representations unify 
into an external tensor product which is uniquely characterized by these properties:
\begin{itemize}[leftmargin=.5cm]
  \item[\bf 1.] over singletons, it reduces to the ordinary tensor product:
  $\quad 
    \begin{array}{c}
      \mathscr{V}
      \\
      \downarrow
      \\
      \mathrm{pt}
    \end{array}
      \;
      \boxtimes
      \;
    \begin{array}{c}
      \mathscr{W}
      \\
      \downarrow
      \\
      \mathrm{pt}
    \end{array}
    \;\;
      \simeq
    \;
    \begin{array}{c}
      \mathscr{V} 
      \otimes
      \mathscr{W}
      \\
      \downarrow
      \\
      \mathrm{pt}
    \end{array}
  $
  
  \vspace{.3cm}
  
  \item[\bf 2.] It distributes over
  \begin{itemize}[leftmargin=.4cm]
  \item[(i)] coproducts:
  $\qquad \qquad
    \begin{array}{c}
      \mathscr{V}\!\!\sslash\!\! G
      \\
      \downarrow
      \\
      \mathbf{B}G
    \end{array}
        \boxtimes
    \;
    \left(\!\!\!
    \begin{array}{c}
      \mathscr{W}\!\!\sslash\!\! H
      \\
      \downarrow
      \\
      \mathbf{B}H
    \end{array}
          \sqcup
    \begin{array}{c}
      \mathscr{W}'\!\!\sslash\!\! H'
      \\
      \downarrow
      \\
      \mathbf{B}H'
    \end{array}
    \!\!\right)
          =
    \left(\!\!\!
    \begin{array}{c}
      \mathscr{V} \!\!\sslash\!\! G
      \\
      \downarrow
      \\
      \mathbf{B}G
    \end{array}
    \,\boxtimes\,
    \begin{array}{c}
      \mathscr{W} \!\!\sslash\!\! H
      \\
      \downarrow
      \\
      \mathbf{B}H
    \end{array}
   \!\! \right)
    \;\sqcup\;
    \left(\!\!\!
    \begin{array}{c}
      \mathscr{V} \!\!\sslash\!\! G
      \\
      \downarrow
      \\
      \mathbf{B}G
    \end{array}
    \,\boxtimes\,
    \begin{array}{c}
      \mathscr{W}' \!\!\sslash\!\! H'
      \\
      \downarrow
      \\
      \mathbf{B}H'
    \end{array}
    \!\!\! \right).
  $
  \item[(ii)]
  homotopy quotients:
  $$
    \begin{array}{c}
      \mathscr{V}\!\!\sslash\!\! G
      \\
      \downarrow
      \\
      \mathbf{B}G
    \end{array}
    \,\boxtimes\,
    \begin{array}{c}
      \mathscr{W}\!\!\sslash\!\! H
      \\
      \downarrow
      \\
      \mathbf{B}H
    \end{array}
    \;\;\;
    =
    \;\;\;
    \begin{array}{c}
      \big(
      \mathscr{V}
      \otimes
      \mathscr{W}
      \big)
      \!\!\sslash\!\! (G \times H)
      \\
      \downarrow
      \\
      \mathbf{B}(G \times H)
    \end{array}
  $$
  \end{itemize}
  which jointly means that it distributes over
  {\it homotopy quasi-coproducts}
  (\cite{HT95}, Def. \ref{HomotopyQuasiCoproducts}), to be denoted
  $\sqcup^{{}^{hq}}$.
\end{itemize}

\noindent
In homotopy-theoretic generalization of the previous discussion \eqref{TheFirstPushoutDiagramInIntroduction}, this property of the external tensor product may then be re-expressed (we make this precise in \cref{PushoutForLocalSystemsOverGeneralSpaces}) again as a pushout, now understanding cocartesian categories as {\it homotopy quasi-cocartesian categories} (Thm. \ref{PushoutCharacterizationOfExternalTensorProductOnLocalSystems}):

\begin{equation}
  \label{PushoutPropertyForFlatVectorBundlesOverSkeletalGroupoids}
  \begin{tikzcd}[column sep=40pt, row sep=18pt]
    \big(
      \mathrm{Mod}_{\mathbb{C}}
      ,\,
      \otimes
    \big)
    \ar[r, hook, "{ \iota }"]
    \ar[dr, phantom, "{ \scalebox{.7}{(po)} }"]
    &
\colorbox{lightgray}{ $\big(
    \mathrm{Loc}_{\mathbb{C}}
    ,\,
    \sqcup^{{}^{hq}}
    ,\,
    \boxtimes
    \big)
   $ }
    \\
    \mathrm{Mod}_{\mathbb{C}}
    \ar[u]
    \ar[r, hook, "{\iota}"]
    &
    \big(
    \mathrm{Loc}_{\mathbb{C}}
    ,\,
    \sqcup^{{}^{hq}}
    \big)
    \ar[u]
  \end{tikzcd}
  \hspace{.7cm}
\end{equation}

This result constitutes a satisfactory answer to the question \eqref{TheSoughtAfterPushout}.
With the structures in \eqref{PushoutPropertyForFlatVectorBundlesOverSkeletalGroupoids} captureing all of contemporary quantum information theory {\it and} currently understood topological quantum phenomena. But one can still go further, we comment on this in the outlook section \S\ref{ConclusionAndOutlook}.

\section{The external tensor product as a pushout}
\label{ExternalTensorProductAsAPushout}

Here we make precise the formulation and proof of the claim for vector bundles over discrete spaces (\cref{PushoutPropertyForVectorBundlesOverDiscreteSpaces}) and for flat vector bundles over general spaces (\cref{PushoutForLocalSystemsOverGeneralSpaces}). The discussion here involves some basic category theory, relevant background on which we have compiled in \S\ref{SomeDefinitionsAndFacts}.

\subsection{For vector bundles over discrete spaces}
\label{PushoutPropertyForVectorBundlesOverDiscreteSpaces}

We demonstrate -- in the comparatively simple special case of discrete parameter spaces (the default in quantum 
information theory) -- a precise sense in which there is an amalgamation of the theories of entangled and of parameterized 
quantum processes, and that it is encoded in an ``external tensor product'' on bundles of parameterized quantum state 
spaces (Thm. \ref{ThePushoutTheoremOverSets} below). 

\begin{definition}[Categories of monoidal categories]
\label{CategoriesOfCategories}
Consider the following very large categories (cf. Def. \ref{GroupoidsAndCategories}):

\begin{itemize}[leftmargin=.65cm]
\setlength\itemsep{-4pt}
  \item[{\bf (i)}] \fbox{$\mathrm{Cat}$} of categories

    with morphisms all functors,
    \\

  \item[{\bf (ii)}]  \fbox{$\mathrm{MonCat}$} of {\it monoidal categories} (e.g. \cite[\S 1.1]{Kelly82}\cite[\S VII.1]{MacLane97}\cite[\S 2]{EGNO15})

    with morphisms 
    functors that admit the structure of 
    (strong) monoidal functors (e.g. \cite[\S XI.2]{MacLane97}\cite[\S 2.4]{EGNO15}),
    \\
    
  \item[{\bf (iii)}]  \fbox{$\mathrm{CoCartCat}$} of {\it cocartesian categories}
   i.e., monoidal categories whose monoidal operation is the coproduct $\sqcup$

   with morphisms functors that
   admit coproduct-preserving structure,
   \\

 \item[{\bf (iv)}]  \fbox{$\mathrm{DistMonCat}$} of {\it distributive monoidal categories} (e.g. \cite[p. 1]{BJT97}\cite{Labella03}), i.e., of monoidal 
 categories $(\mathcal{C}, \otimes)$ with (set-indexed) coproducts $\coprod$ whose tensor product distributes 
 over the coproduct in each variable, in that 
 for any index set $I$ and indexed set $(A_i)_{i \in I}$ of objects, and any other object $B$,
 the canonical comparison maps are isomorphisms
 \vspace{-2mm} 
 \begin{equation}
   \label{DistributivityOfMonoidalStructure}
   \begin{tikzcd}
   \underset{i \in I}{\coprod}
   \big(
     A_i
     \otimes
     B
    \big)
   \ar[
     rr, 
     "{ (  q_i \,\otimes\, \mathrm{id}_B )_{i \in I} }",
     "{ \sim }"{swap}
   ]
   &&
   \Big(\,
     \underset{i \in I}{\coprod}
     A_i
   \Big)
   \otimes
   B
   \,,
   \end{tikzcd}
   \hspace{1cm}
   \begin{tikzcd}
   \underset{i \in I}{\coprod}
   \big(
     B
     \otimes
     A_i
    \big)
   \ar[
     rr, 
     "{ 
        (  
           \mathrm{id}_B 
             \,\otimes\, 
           q_i 
        )_{i \in I} }",
     "{ \sim }"{swap}
   ]
   &&
   B
   \otimes
   \Big(\,
     \underset{i \in I}{\coprod}
     A_i
   \Big)
   \mathrlap{\,,}
   \end{tikzcd}
 \end{equation}

\vspace{-2mm} 
\noindent  and with morphisms in $\mathrm{DistMonCat}$ being functors that admit (strong) monoidal structure for both products.
\end{itemize}
\end{definition}

\noindent
We are interested for now in the following quadruple of examples:

\begin{example}[Category of complex vector spaces]
  \label{CategoryOfComplexVectorSpaces}
  We write $\mathrm{Mod}_{\mathbb{C}} \,\in\, \mathrm{Cat}$ for the usual category whose objects are complex vector spaces and whose morphisms are complex-linear maps between these.
\end{example}

\begin{example}[Tensor category of complex vector spaces]
  \label{TensorCategoruOfComplexVectorSpaces}
  We write $\big(\mathrm{Mod}_{\mathbb{C}}, \otimes_{\mathbb{C}}\big) \,\in\, \mathrm{MonCat}$
  for the category of complex vector spaces from Ex. \ref{CategoryOfComplexVectorSpaces}, but now regarded as a monoidal category by equipping it with the usual linear tensor product $\otimes_{\mathbb{C}}$ of complex vector spaces (whose tensor unit is $\mathbb{C}$ regarded as a vector space over itself).
\end{example}

The following Ex. \ref{SetAsDistributiveMonoidalCategory} serves to prepare concepts and notation for the main Ex. \ref{CategoryOfComplexVectorBundlesOverSets} and Ex. \ref{DistributiveMonoidalCategoryofVectorBundles} further below.

\begin{example}[Set as distributive monoidal category]
 \label{SetAsDistributiveMonoidalCategory}
  We write $\big(\mathrm{Set},\, \sqcup,\, \times \big)
  \,\in\, \mathrm{DistMonCat}$ for the category of sets regarded as a distributive cartesian monoidal category.

 \item An abstract way to see that the cartesian product distributes over the coproduct 
  is to notice that the product functors $Y \times (-),\,\;(-) \times Y : \mathrm{Set} \longrightarrow \mathrm{Set}$ 
  have a right adjunction (forming function sets $(-)^Y$), which implies that they preserve all colimits and hence, in
  particular, the coproducts involved in distributivity.

 \item  Also notice that every set is isomorphic to the coproduct indexed by its elements, of the singleton set
  \begin{equation}
    \label{SetIsCoproductOfItsElements}
    X \,\in\, \mathrm{Set}
    \;\;\;\;\;\;\;\;
    \vdash
    \;\;\;\;\;\;\;\;
    X \,\simeq\, \underset{x \in X}{\coprod} \ast
    \,.
  \end{equation}
\end{example}

\begin{example}[Category of complex vector bundles over discrete spaces]
\label{CategoryOfComplexVectorBundlesOverSets}
We write 
$$
  \mathrm{Fam}_{\mathbb{C}}
  \;\;
  \coloneqq
  \;\;
  \int_{X \in \mathrm{Set}} \mathrm{Mod}^{X}_{\mathbb{C}}
$$ 
for the category 
of complex vector {\it bundles} over varying sets (i.e., over varying discrete topological spaces), hence for the
Grothendieck construction (Def. \ref{GrothendieckConstruction}) on the following pseudofunctor (Def. \ref{Pseudofunctor})
\vspace{-2mm} 
$$
  \begin{tikzcd}[row sep=-8pt]
    \mathllap{
      \mathrm{Mod}^{(\mbox{-})}_{\mathbb{K}}
      \;:\;\;
    }
    \mathrm{Set}^{\mathrm{op}}
    \ar[rr]
    &&
    \mathrm{Cat}
    \\
    \phantom{a} & \phantom{b} 
    \\
    X
    \ar[d, "{f}"]
      &\longmapsto& 
    \mathrm{Func}(X,\,\mathrm{Mod}_{\mathbb{C}})
    \ar[
      from=d,
      "{ f^\ast \;:=\; (-)\circ f }"{swap}
    ]
    \\[+30pt]
    Y
    &\longmapsto&
    \mathrm{Func}(Y,\,\mathrm{Mod}_{\mathbb{C}})    
  \end{tikzcd}
$$

\vspace{-2mm} 
\noindent and regarded as a cocartesian monoidal category (in fact, this is the {\it free coproduct completion} of $\mathrm{Mod}_{\mathbb{C}}$; 
cf. Ex. \ref{CategoriesOfIndexedSetsOfObjects}).  Explicitly this means the following, where on the right we show the corresponding construction 
of topological vector bundles:

\begin{itemize}[leftmargin=.4cm]
\setlength\itemsep{-3pt}
  \item its objects are pairs $\mathscr{V}_X$ consisting of a base $X \in \mathrm{Set}$ and a functor $\mathscr{V}_{(-)}$ 
  from $X$, regarded as a discrete groupoid, to the category $\mathrm{Mod}_{\mathbb{C}}$ of complex vector spaces, 
  hence equivalently a vector bundle (necessarily and uniquely flat) over $X$:
\vspace{-3mm} 
  $$
    \begin{tikzcd}[row sep=-2pt]
      \mathllap{ \mathscr{V}_{(-)} \;: \;\;}
      X \ar[rr] 
      && \mathrm{Mod}_{\mathbb{C}}
      \\
      x &\mapsto& \mathscr{V}_x
    \end{tikzcd}
    \hspace{1.6cm}
    \longleftrightarrow
    \hspace{1.6cm}
    \begin{tikzcd}[row sep=small]    
      \mathscr{V}_x
      \ar[rr]
      \ar[d]
      \ar[drr, phantom, "\scalebox{.6}{(pb)}"]
      &&
      \mathscr{V}_X
      \ar[d]
      \\
      \{x\}
      \ar[rr, hookrightarrow]
      &&
      X
    \end{tikzcd}
  $$

  \vspace{-2mm} 
  \item
  morphisms $\phi_f \,\colon\, \mathscr{V}_X \longrightarrow \mathscr{W}'_{Y}$ are pairs consisting of a map $f : X \longrightarrow Y$ of base spaces 
  and a natural transformation from $\mathscr{V}_X$ to $f^\ast \mathscr{W}_{Y}$, hence equivalently morphisms of vector bundles covering maps of base spaces:
  \vspace{-2mm} 
  $$
    \begin{tikzcd}
      x \;\;\;\;\mapsto\;\;\;\; 
      \mathscr{V}_x
      \ar[rr, "{\phi_x}"]
      && 
      \mathscr{W}_{f(x)}
    \end{tikzcd}
    \hspace{1.6cm}
    \longleftrightarrow
    \hspace{1.6cm}
    \begin{tikzcd}[row sep=small] 
      \mathscr{V}_X
      \ar[rr, "{ \phi_f }"]
      \ar[d]
      &&
      \mathscr{W}_Y
      \ar[d]
      \\
      X 
      \ar[rr, "{f}"]
      && 
      Y
    \end{tikzcd}
  $$
  \item 
  the cocartesian pairing
  $\mathscr{V}_X \sqcup \mathscr{V}'_{X'}$
  of a pair of objects is
  \vspace{-2mm} 
  \begin{equation}
    \label{CoCartesianPairingOfBundles}
    \begin{tikzcd}[sep=-2pt]
      X &[-6pt]\sqcup&[-6pt] X'
      \ar[rr, "{ \mathscr{V}_X \sqcup \mathscr{V}'_{X'} }"]
      && \mathrm{Mod}_{\mathbb{C}}
      \\
      x& & &\;\;\;\;\;\mapsto\;\;\;\;\;& \mathscr{V}_x
      \\
      && x' &\mapsto& \mathscr{V}'_{x'}
    \end{tikzcd}
    \hspace{1.6cm}
     \longleftrightarrow
    \hspace{1.6cm}
    \begin{tikzcd}[sep=0pt]
      \mathscr{V}_X 
      \ar[d]
        &[-6pt]\sqcup&[-6pt] 
      \mathscr{V}'_{X'}
      \ar[d]
      \\[+20pt]
      X &\sqcup& X'
    \end{tikzcd}
  \end{equation}
\end{itemize}

\vspace{-1mm} 
\noindent
The evident full inclusion of the category of plain vector spaces (Ex. \ref{CategoryOfComplexVectorSpaces}) into bundles 
of vector spaces, given by regarding the former as the bundles over the singleton set $\{\ast\} \in \mathrm{Set}$, we denote as follows:
  \vspace{-2mm} 
\begin{equation}
  \label{FullInclusionOfVectorSpacesIntoVectorBundles}
  \begin{tikzcd}[sep=0pt]
    \mathrm{Mod}_{\mathbb{C}}
    \ar[rr, hook, "{ \iota }"]
    &&
    \mathrm{Fam}_{\mathbb{C}}
    \\
    \mathscr{V} 
    &\mapsto&
    \mathscr{V}_{\{\ast\}}
  \end{tikzcd}
\end{equation}
\end{example}

\begin{remark}[Abstract characterization of the construction]
  \label{AbstractCharacterizationOfTheConstruction}
  The Grothendieck construction $\mathrm{Fam}_{\mathbb{C}} \,:=\, \int_{S \in \mathrm{Set}} \mathrm{Mod}_{\mathbb{K}}^S$ in Ex. \ref{CategoryOfComplexVectorBundlesOverSets}
  may be understood as a {\it free coproduct completion} (Ex. \ref{CategoriesOfIndexedSetsOfObjects}), here applied to $\mathrm{Mod}_{\mathbb{C}}$,
  but the construction exists for any category. Applied to any symmetric closed monoidal category and then regarded as categorical 
  semantics for dependent linear-typed quantum programming languages; this has been considered in
  \cite[\S3.2]{RiosSelinger18}\cite[Def. 2.10]{FuKishidaSelinger20}, see \cite[\S 2.1]{QS}.
\end{remark}

\begin{remark}[Coproducts of bundles over singletons]
  Since every $X \in \mathrm{Set}$ is the disjoint union of the singleton sets $\{x\}$ on its elements $x \in X$
  (Ex. \ref{SetAsDistributiveMonoidalCategory}),
  it follows that every object in $\int_{S \in \mathrm{Set}}\mathrm{Mod}^{S}_{\mathbb{C}}$
  (Ex. \ref{CategoryOfComplexVectorBundlesOverSets})
  is the coproduct \eqref{CoCartesianPairingOfBundles} 
  of its restrictions $\mathscr{V}_{\{x\}}$ to these singletons:
  \vspace{-2mm} 
  \begin{equation}
    \label{VectorBundleOverSetIsCoproductOfItsRestrictionToSingletons}
    \mathscr{V}_X
    \,\in\,
    \mathrm{Mod}^{\mathrm{Set}}_{\mathbb{C}}
    \;\;\;\;\;\;\;\;
      \vdash
    \;\;\;\;\;\;\;\;
    \mathscr{V}_X
    \;\;\simeq\;\;
    \underset{
      x \in X
    }{\coprod}
    \mathscr{V}_{\{x\}}
    \,.
  \end{equation}
\end{remark}

\begin{example}[Distributive monoidal category of vector bundles]
  \label{DistributiveMonoidalCategoryofVectorBundles}
  We write $\big( \mathbb{C}\mathrm{ModMod}_{\mathrm{Set}},\, \sqcup,\, \boxtimes \big) \,\in\,\mathrm{DistMonCat}$ 
  for the cocartesian monoidal category of vector bundles over sets, from Ex. \ref{CategoryOfComplexVectorBundlesOverSets},
  but now in addition equipped with a further monoidal structure given by the following {\it external tensor product} of vector 
  bundles, defined as the result of pulling back to the cartesian product of bases spaces and there forming the usual
  fiberwise tensor product of bundles:
  \vspace{-3mm} 
  \begin{equation}
    \label{ExternalTensorProductOfBundlesOverSet}
    \hspace{1.3cm}
    \begin{tikzcd}[row sep=-3pt, column sep=0pt]
      \mathllap{
      \mathscr{V}_X 
        \boxtimes
      \mathscr{V}'_{X'}
      \;\;
      :
      \;\;
      }
      X \times X'
      \ar[rr]
      &&
      \mathrm{Mod}_{\mathbb{C}}
      \\
      (x,x')
      &\longmapsto&
      \mathscr{V}_x 
        \otimes
      \mathscr{V}'_{x'}
    \end{tikzcd}
    \hspace{.5cm}
     \longleftrightarrow
    \hspace{.5cm}
    \begin{tikzcd}[row sep=small] 
      & 
      &[-40pt]
      \mathclap{
      \big(
      (\mathrm{pr}_X)^\ast 
      \mathscr{V}_X
      \big)
      \otimes
      \big(
      (\mathrm{pr}_{X'})^\ast 
      \mathscr{V}'_{X'}
      \big)
      }
      \ar[dd]
      &[-40pt]
      \\
      \mathscr{V}_X
      \ar[from=r]
      \ar[ddr]
      &
      (\mathrm{pr}_X)^\ast \mathscr{V}_X
      \ar[dr]
      & 
      &
      (\mathrm{pr}_{X'})^\ast \mathscr{V}'_{X'}
      \ar[dl]
      \ar[r]
      &
      \mathscr{V}'_{X'}
      \ar[ddl]
      \\[20pt]
      && 
      X \times X'
      \ar[
        dl,
        "{ \mathrm{pr}_X }"
        {swap}
      ]
      \ar[
        dr,
        "{ \mathrm{pr}_{X'} }"
      ]
      \\
      & X && X'
    \end{tikzcd}
  \end{equation}

  \vspace{-2mm} 
\noindent   This external tensor product indeed distributes over the cocartesian product \eqref{CoCartesianPairingOfBundles} 
in each variable, in the sense required in \eqref{DistributivityOfMonoidalStructure}, in a fiberwise covering of how
the cartesian product of base sets distributes over the disjoint union of sets
  \vspace{-3mm} 
  $$
    \begin{tikzcd}[row sep=6pt, column sep=10pt]
      &&
      \mathscr{V}^{\, i}_{x_i}
      \otimes
      \mathscr{V}'_{x'}
      \ar[rr, equals]
      &&
      \mathscr{V}^{\, i}_{x_i}
      \otimes
      \mathscr{V}'_{x'}      
      \\
      (x_i, x')
      &\mapsto&
      \Big(
      \big(
        \mathscr{V}^1_{X_1}
        \sqcup
        \mathscr{V}^2_{X_2}
      \big)
        \boxtimes
        \mathscr{V}'_{X'}      
      \Big)_{(x_i, x')}
      \ar[u, equals]
      \ar[rr]
      &&
      \Big(
      \big(
        \mathscr{V}^1_{X_1}
        \boxtimes
        \mathscr{V}'_{X'}
      \big)
      \sqcup
      \big(
        \mathscr{V}^2_{X_2}
        \boxtimes
        \mathscr{V}'_{X'}
      \big)
      \Big)_{(x_i, x')}
      \ar[u, equals]
      \\[-6pt]
      &&
      \big(
        X_1 \sqcup X_2
      \big)
        \times
      X'
      \ar[rr, "{ \sim }"]
      &&
      \big(
        X_1 \times X'
      \big)
      \sqcup 
      \big(
        X_2 \times X'
      \big)
      \\[-12pt]
      &&
      (x_i, \, x') &\longmapsto& (x_i,\, x')
    \end{tikzcd}
  $$
\end{example}

Also the converse statement holds:
\begin{proposition}[Characterization of external tensor product]
  \label{CharacterizationOfExternalTensorProduct}
  Up to isomorphism,
  the external tensor product \eqref{ExternalTensorProductOfBundlesOverSet}
  is the unique functor
  \vspace{-1.5mm} 
  $$
    (-) \boxtimes (-)
    \;:\;
    \begin{tikzcd}
    \mathrm{Mod}^{\mathrm{Set}}_{\mathbb{C}}
    \times
    \mathrm{Mod}^{\mathrm{Set}}_{\mathbb{C}}
    \ar[rr]
    &&
    \mathrm{Mod}^{\mathrm{Set}}_{\mathbb{C}}    
    \end{tikzcd}
  $$

  \vspace{-2mm} 
\noindent  such that:
  \begin{itemize}[leftmargin=.65cm]
   \item[{\bf (i)}] It distributes over coproducts 
   \eqref{CoCartesianPairingOfBundles}
   in each
   variable, in the sense of \eqref{DistributivityOfMonoidalStructure}.
   \item[{\bf (ii)}] Restricted to plain vector spaces via 
   \eqref{FullInclusionOfVectorSpacesIntoVectorBundles},
   it coincides with the ordinary tensor product (Ex. \ref{TensorCategoruOfComplexVectorSpaces}).
  \end{itemize}
\end{proposition}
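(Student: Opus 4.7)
The proof plan is to leverage the observation, flagged already in Rem.~\ref{AbstractCharacterizationOfTheConstruction}, that $\mathrm{Fam}_{\mathbb{C}}$ is the free coproduct completion of $\mathrm{Mod}_{\mathbb{C}}$, together with the singleton decomposition \eqref{VectorBundleOverSetIsCoproductOfItsRestrictionToSingletons}. Existence of at least one functor satisfying (i) and (ii) is already witnessed by $\boxtimes$ from Ex.~\ref{DistributiveMonoidalCategoryofVectorBundles}, so the task reduces to showing that any other candidate $\widetilde{\boxtimes}$ satisfying (i) and (ii) is naturally isomorphic to it.

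First I would assemble the required isomorphism pointwise. Given $\mathscr{V}_X, \mathscr{V}'_{X'} \in \mathrm{Fam}_{\mathbb{C}}$, invoke \eqref{VectorBundleOverSetIsCoproductOfItsRestrictionToSingletons} to write $\mathscr{V}_X \simeq \coprod_{x \in X} \mathscr{V}_{\{x\}}$ and analogously for $\mathscr{V}'_{X'}$. Applying property (i) twice yields a canonical isomorphism
$$
  \mathscr{V}_X \,\widetilde{\boxtimes}\, \mathscr{V}'_{X'}
  \;\simeq\;
  \underset{(x,x')\in X\times X'}{\coprod}
  \mathscr{V}_{\{x\}} \,\widetilde{\boxtimes}\, \mathscr{V}'_{\{x'\}}.
$$
Property (ii), applied to each singleton factor (using that all singletons are canonically isomorphic in $\mathrm{Set}$, and that $\widetilde{\boxtimes}$ being a functor respects these isomorphisms), identifies each summand with the bundle over $\{(x,x')\}$ whose fiber is $\mathscr{V}_x \otimes \mathscr{V}'_{x'}$. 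Reindexing the coproduct recovers precisely the formula \eqref{ExternalTensorProductOfBundlesOverSet}, providing an isomorphism $\mathscr{V}_X \,\widetilde{\boxtimes}\, \mathscr{V}'_{X'} \,\simeq\, \mathscr{V}_X \boxtimes \mathscr{V}'_{X'}$.

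Next I would verify that the collection of these isomorphisms is natural in both arguments. For any morphism $\phi_f \colon \mathscr{V}_X \to \mathscr{W}_Y$ of bundles, functoriality of $\widetilde{\boxtimes}$ combined with the naturality of both the singleton decomposition and the distributivity isomorphisms in (i) reduces the naturality square, summand by summand, to naturality of $\otimes$ on $\mathrm{Mod}_{\mathbb{C}}$ -- which is guaranteed by (ii) and functoriality. Hence the pointwise isomorphisms assemble into a natural isomorphism $\widetilde{\boxtimes} \Rightarrow \boxtimes$ of functors $\mathrm{Fam}_{\mathbb{C}} \times \mathrm{Fam}_{\mathbb{C}} \to \mathrm{Fam}_{\mathbb{C}}$.

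The main subtlety will be coherence: one must verify that the distributivity isomorphisms of (i) are compatible with the decomposition \eqref{VectorBundleOverSetIsCoproductOfItsRestrictionToSingletons}, in particular with its functoriality under arbitrary maps of base sets, so that the isomorphism just constructed is independent of the chosen decomposition and genuinely natural. This coherence is, however, exactly the content of the universal property of free coproduct completion indicated in Rem.~\ref{AbstractCharacterizationOfTheConstruction}: bifunctors on $\mathrm{Fam}_{\mathbb{C}} \times \mathrm{Fam}_{\mathbb{C}}$ that preserve coproducts variable-wise correspond, up to unique natural isomorphism, to arbitrary bifunctors on $\mathrm{Mod}_{\mathbb{C}} \times \mathrm{Mod}_{\mathbb{C}}$ along $\iota \times \iota$, and (ii) pins this restriction down to be the ordinary tensor product.
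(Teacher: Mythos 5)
Your proposal is correct and follows essentially the same route as the paper's proof: decompose both arguments into singleton summands via \eqref{VectorBundleOverSetIsCoproductOfItsRestrictionToSingletons}, apply distributivity (i) in each variable to obtain a double coproduct, identify each summand via (ii), and reindex to recover \eqref{ExternalTensorProductAsCoproductOfFiberwiseTensor}. The paper stops at this pointwise chain of isomorphisms and does not spell out the naturality and coherence checks you discuss in your last two paragraphs; that extra care is sound and consistent with Rem.~\ref{AbstractCharacterizationOfTheConstruction}, but it is not a different argument.
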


\begin{proof}
  Let for the moment $\boxtimes$ denote any monoidal product satisfying the above two assumptions. 
  Then it is fixed, up to isomorphism, by the following formula:
  \vspace{-1.5mm} 
  \begin{equation}
    \label{ExternalTensorProductAsCoproductOfFiberwiseTensor}
    \def\arraystretch{1.5}
    \begin{array}{lll}
      \mathscr{V}_X 
        \boxtimes
      \mathscr{V}'_{X'}
      &
      \;\simeq\;
      \Big(\,
        \underset{x \in X}{\coprod}
        \mathscr{V}_{\{x\}}
      \Big)
      \boxtimes
      \Big(\,
        \underset{x' \in X'}{\coprod}
        \mathscr{V}'_{\{x'\}}
      \Big)
      &
      \proofstep{by \eqref{VectorBundleOverSetIsCoproductOfItsRestrictionToSingletons}}
      \\
   &   \;\simeq\;
      \underset{x \in X}{\coprod}
      \bigg(\!\!
      \mathscr{V}_{\{x\}}
      \boxtimes
      \Big(\,
        \underset{x' \in X'}{\coprod}
        \mathscr{V}'_{\{x'\}}
      \Big)
      \!\! \bigg)
      &
      \proofstep{by assumption (i) in first variable}
      \\
    &  \;\simeq\;
      \underset{x \in X}{\coprod}
      \;
      \underset{x' \in X'}{\coprod}
      \Big(
        \mathscr{V}_{\{x\}}
        \boxtimes
        \mathscr{V}'_{\{x'\}}
      \Big)
      &
      \proofstep{by assumption (i) in second variable}
      \\
    &  \;\simeq\;
      \underset{x \in X}{\coprod}
      \;
      \underset{x' \in X'}{\coprod}
      \Big(
        \iota
        \big(
        \mathscr{V}_x
        \otimes
        \mathscr{V}_{x'}
        \big)
      \Big)
      &
      \proofstep{by assumption (ii)}
      \\
     & \;\simeq\;
      \underset{
        (x,x') \,\in\, X \times X'
      }{\coprod}
      \Big(
        \iota
        \big(
        \mathscr{V}_x
        \otimes
        \mathscr{V}_{x'}
        \big)
      \!\Big)
      &
      \proofstep{
        just to make the base space manifest,
      }
    \end{array}
  \end{equation}

    \vspace{-1mm} 
\noindent 
  which is manifestly isomorphic to the 
  operation of the external tensor product according to \eqref{ExternalTensorProductOfBundlesOverSet}.
\end{proof}

We proceed to show that the content of Prop. \ref{CharacterizationOfExternalTensorProduct} is equivalently 
exhibited by a pushout of the form requested in \eqref{TheSoughtAfterPushout}.

\begin{remark}[Incrementally forgetting distributive monoidal structure]
The forgetful functors between the categories-of-categories from Def. \ref{CategoriesOfCategories}, i.e.,
those which act as the identity on the underlying categories $\mathcal{C}$ but forget the presence of 
either or any monoidal structure, evidently arrange into a commuting square as follows:
\vspace{-2mm} 
\begin{equation}
  \label{CommutingSquareOfCategoriesOfCategories}
  \begin{tikzcd}
    (\mathcal{C}, \otimes)
    \ar[rrr, phantom, "{\xmapsfrom{\qquad \qquad \qquad \qquad \quad}}"]
    \ar[ddd, phantom, "{ \xmapsto{\qquad \quad} }"{sloped}]
    &[-35pt]
    &
    &[-45pt]
    (\mathcal{C}, \sqcup, \otimes)
    \ar[ddd, phantom, "{ \xmapsto{\qquad \quad} }"{sloped}]
    \\[-20pt]
    &
    \mathrm{MonCat}
    \ar[from=r]
    \ar[d]
    &
   \colorbox{lightgray}{$ \mathrm{DistMonCat}$}
    \ar[d]
    \\
    &
    \mathrm{Cat}
    \ar[from=r]
    &
    \mathrm{CoCartCat}
    \\[-20pt]
    \mathcal{C}
    \ar[rrr, phantom, "{\xmapsfrom{\qquad \qquad \qquad \qquad \qquad}}"]
    &&&
    (\mathcal{C}, \sqcup)
  \end{tikzcd}
\end{equation}
\end{remark}

\begin{definition}
  \label{AnyMonCat}
  Write \fbox{$\mathrm{AnyMonCat}$} for the Grothendieck construction (Def. \ref{GrothendieckConstruction}) on the square \eqref{CommutingSquareOfCategoriesOfCategories}, hence for the very large category-of-categories whose

  \begin{itemize}[leftmargin=.4cm]
  \item objects are categories $\mathcal{C}$ equipped 
  {\it either} with no monoidal structure {\it or} with any monoidal structure $\otimes$ {\it or} with cocartesian monoidal structure $\sqcup$ {\it or} with cocartesian monoidal structure and any further monoidal structure $\otimes$ distributing over it;

  \item morphisms are functors whose codomain category carries at least the kind of monoidal structures that the domain carries and 
  which are strong monoidal with respect to the kind of monoidal structures that the domain carries.
  \end{itemize}
\end{definition}

\begin{example}[The candidate commuting diagram for a pushout]
 \label{TheCommutingDiagramToBeShownAPushout}
  We have the following commuting diagram in $\mathrm{AnyMonCat}$ (Def. \ref{AnyMonCat}):
  \begin{equation}
    \label{CommutingDiagramToBeShownAPushout}
    \begin{tikzcd}
      \mathllap{
        \scalebox{.7}{
          \color{gray}
          Ex. \ref{TensorCategoruOfComplexVectorSpaces}
        }
        \;\;\;
      }
      \big(
        \mathrm{Mod}_{\mathbb{C}}
        ,\,
        \otimes
      \big)
      \ar[
        rr,
        "{ \iota }"
      ]
      &&
      \colorbox{lightgray}{$      
      \big(
        \mathrm{Fam}_{\mathbb{C}}
        ,\,
        \sqcup
        ,\,
        \boxtimes
      \big)
      $}
      \mathrlap{
        \;\;\;
        \scalebox{.7}{
          \color{gray}
          Ex. \ref{DistributiveMonoidalCategoryofVectorBundles}
        }
      }
      \\
      \mathllap{
        \scalebox{.7}{
          \color{gray}
          Ex. \ref{CategoryOfComplexVectorSpaces}
        }
        \;\;\;
      }
      \mathrm{Mod}_{\mathbb{C}}
      \ar[
        rr,
        "{ \iota }"
      ]
      \ar[u]
      &&
      \big(
        \mathrm{Fam}_{\mathbb{C}}
        ,\, 
        \sqcup
      \big)
      \ar[u]
      \mathrlap{
        \;\;\;
        \scalebox{.7}{
          \color{gray}
          Ex. \ref{CategoryOfComplexVectorBundlesOverSets}
        }
      }
    \end{tikzcd}
  \end{equation}
  where 
  \begin{itemize}
    \item the underlying functor of the vertical morphisms is the respective identity functor,
    \item the underlying functor 
    of both horizontal morphisms is 
    \eqref{FullInclusionOfVectorSpacesIntoVectorBundles},
    \item the top horizontal morphism is 
    strong monoidal essentially by construction 
    
    (or alternatively as a special case of Prop. \ref{CharacterizationOfExternalTensorProduct}),
    \item the right identity functor is tautologically strong monoidal.
  \end{itemize}  
  Therefore, the underlying diagram of functors clearly commutes and there is no non-trivial composition of strong-monoidal 
  structure involved, hence the diagram commutes in $\mathrm{AnyMonCat}$.
\end{example}

\begin{remark}[On morphisms in  AnyMonCat]
  \label{OnMorphismsInAnyMonCat}
$\,$
  
\noindent  {\bf (i)} By Definition \ref{AnyMonCat}, even if the underlying functors are identities, when regarded as morphisms in $\mathrm{AnyMonCat}$
  they must point in a direction such that no monoidal structure is ``forgotten'' along the way. 
  
  \noindent {\bf (ii)} For instance, 
  for none of the morphisms shown in the diagram 
  \eqref{CommutingDiagramToBeShownAPushout} does there exist a reverse morphism in $\mathrm{AnyMonCat}$. 
  
\noindent {\bf (iii)}  More importantly:
If the left and bottom part of the diagram \eqref{CommutingDiagramToBeShownAPushout} is given, then the only possibility to 
  complete it to a square in $\mathrm{AnyMonCat}$ is by having a distributive monoidal category in the top right corner, 
  because only such a structure can receive morphisms in $\mathrm{AnyMonCat}$ from both a monoidal category (top left) 
  and a cocartesian category (bottom right).
\end{remark}

We may now state and prove the conclusion of this discussion:

\begin{theorem}[Pushout characterization of the external tensor product of vector bundles over sets]
  \label{ThePushoutTheoremOverSets} $\,$
  \newline
  The diagram in Ex. \ref{TheCommutingDiagramToBeShownAPushout} is a pushout.
\end{theorem}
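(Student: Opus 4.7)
The strategy is to directly verify the universal property of the pushout. By Remark \ref{OnMorphismsInAnyMonCat}, any cocone on the span $(\mathrm{Mod}_{\mathbb C},\otimes)\xleftarrow{} \mathrm{Mod}_{\mathbb C}\xrightarrow{\iota}(\mathrm{Fam}_{\mathbb C},\sqcup)$ in $\mathrm{AnyMonCat}$ must have its tip a distributive monoidal category $(\mathcal D,\sqcup_{\mathcal D},\otimes_{\mathcal D})$ equipped with a coproduct-preserving functor $F:(\mathrm{Fam}_{\mathbb C},\sqcup)\to(\mathcal D,\sqcup_{\mathcal D})$ and a strong monoidal functor $G:(\mathrm{Mod}_{\mathbb C},\otimes)\to(\mathcal D,\otimes_{\mathcal D})$, together with a monoidal isomorphism $F\circ\iota\cong G$ of the underlying functors into $\mathcal D$. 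The plan is to show that there is, up to canonical isomorphism, a unique strong monoidal $H:(\mathrm{Fam}_{\mathbb C},\sqcup,\boxtimes)\to(\mathcal D,\sqcup_{\mathcal D},\otimes_{\mathcal D})$ factoring $F$ through the identity on coproducts and $G$ through $\iota$.

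\smallskip
\noindent\textbf{Construction of $H$.} The underlying functor of $H$ is forced to be $F$, since the right vertical morphism in \eqref{CommutingDiagramToBeShownAPushout} is the identity. It remains to equip $F$ with strong $\otimes_{\mathcal D}$-monoidal structure over $\boxtimes$. The candidate natural isomorphism is dictated by chasing through the formula \eqref{ExternalTensorProductAsCoproductOfFiberwiseTensor} together with the cocone data:
\[
F(\mathscr V_X\boxtimes\mathscr V'_{X'})\;\cong\;F\!\Big(\coprod_{(x,x')\in X\times X'}\iota(\mathscr V_x\otimes\mathscr V'_{x'})\Big)\;\cong\;\coprod_{(x,x')}G(\mathscr V_x\otimes\mathscr V'_{x'})\;\cong\;\coprod_{(x,x')}G(\mathscr V_x)\otimes_{\mathcal D}G(\mathscr V'_{x'}),
\]
where the first isomorphism is Prop. \ref{CharacterizationOfExternalTensorProduct}, the second uses that $F$ preserves coproducts and $F\circ\iota\cong G$, and the third uses the strong monoidal structure of $G$. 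Then applying distributivity \eqref{DistributivityOfMonoidalStructure} in $\mathcal D$ iteratively factors the double coproduct as $\bigl(\coprod_x G(\mathscr V_x)\bigr)\otimes_{\mathcal D}\bigl(\coprod_{x'}G(\mathscr V'_{x'})\bigr)$, which equals $F(\mathscr V_X)\otimes_{\mathcal D}F(\mathscr V'_{X'})$ by \eqref{VectorBundleOverSetIsCoproductOfItsRestrictionToSingletons} again. The tensor-unit comparison is the composite $I_{\mathcal D}\cong G(\mathbb C)\cong F(\iota(\mathbb C))=F(\mathbb C_{\{\ast\}})$, where $\mathbb C_{\{\ast\}}$ is the $\boxtimes$-unit.

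\smallskip
\noindent\textbf{Coherence and compatibility.} One then checks that the natural isomorphism produced is strong monoidal, i.e.\ that the pentagon, triangle and symmetry axioms hold. Each of these reduces, after the reduction to singletons above, to a combination of: the coherence of the monoidal structure on $G$, the coherence of the coproduct-preservation of $F$, and the coherence of the distributivity isomorphisms in $\mathcal D$ (which, being part of the data of a distributive monoidal category, satisfies the required hexagons). This is routine but tedious; I expect no novel obstruction here because every step is a naturality square for a universal map between iterated (co)products and tensors. The factorizations $H\circ\iota\cong G$ as monoidal functors and $H$-restricted-to-$\sqcup$ $\cong F$ hold essentially by construction.

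\smallskip
\noindent\textbf{Uniqueness.} Any other factorization $H'$ must agree with $F$ on underlying functors (by the cocone condition on the right vertical), and its strong monoidal structure on $\boxtimes$ must restrict on singletons to that of $G$ (by the top horizontal condition). Since every object of $\mathrm{Fam}_{\mathbb C}$ decomposes via \eqref{VectorBundleOverSetIsCoproductOfItsRestrictionToSingletons} and $H'$ preserves $\sqcup$, the computation in \eqref{ExternalTensorProductAsCoproductOfFiberwiseTensor} forces the monoidal isomorphism of $H'$ to coincide with that of $H$ up to a unique isomorphism, completing the argument. The main obstacle is bookkeeping the several layers of coherence data (monoidal, coproduct-preserving, distributivity) rather than any deep categorical issue; this motivates the more systematic homotopy-theoretic treatment in \cref{ExternalTensorProductAsAQuillenBifunctor}.
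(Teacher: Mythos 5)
Your proposal is correct and follows essentially the same route as the paper's proof: the underlying functor of the dashed arrow is forced to be $F$ by the identity vertical morphisms, and the strong monoidal structure over $\boxtimes$ is produced by exactly the paper's chain of isomorphisms (decompose via \eqref{ExternalTensorProductAsCoproductOfFiberwiseTensor}, use coproduct-preservation of $F$, monoidality of $F\circ\iota$, and distributivity in the target). The only difference is that you track the coherence of the monoidal structure as data, whereas in $\mathrm{AnyMonCat}$ (Def.~\ref{AnyMonCat}) morphisms are functors that merely \emph{admit} strong monoidal structure, so the paper can dispense with that bookkeeping.
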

\begin{proof}
  We check the defining universal property of the pushout. To that end, consider any extension of the square to a cocone diagram as shown
  by solid arrows in the following
  (where the tip of the cocone is necessarily a distributive monoidal category, as shown, by Rem. \ref{OnMorphismsInAnyMonCat}):
  \vspace{-2mm} 
  \begin{equation}
    \label{CoconeDiagram}
    \begin{tikzcd}[row sep=small]
      && &
      \big(
        \mathcal{C}, \sqcup, \otimes_{{}_{\mathcal{C}}}
      \big)   \mathrlap{\,.}
      \\
      \big(
        \mathrm{Mod}_{\mathbb{C}}
        ,\,
        \otimes
      \big)
      \ar[
        rr,
        "{ \iota }"
      ]
      \ar[urrr, bend left=10]
      &&
      \big(
        \mathrm{Fam}_{\mathbb{C}}
        ,\,
        \sqcup
        ,\,
        \boxtimes
      \big)
      \ar[ur, dashed]
      \\
      \mathrm{Mod}_{\mathbb{C}}
      \ar[
        rr,
        "{ \iota }"
      ]
      \ar[u]
      &&
      \big(
        \mathrm{Fam}_{\mathbb{C}}
        ,\, 
        \sqcup
      \big)
      \ar[u]
      \ar[uur, bend right=14, "{ F }"{swap}]
    \end{tikzcd}
  \end{equation}

  \vspace{-2mm} 
\noindent  We need to demonstrate that there exists a unique dashed morphism making the full diagram commute.
  First observe that, since the underlying functor of the vertical morphisms are identity functors, the dashed morphism, 
  if it exists at all, is uniquely fixed to also be given by $F$, and so the underlying functor of the top morphisms 
  must necessarily be $F \circ \iota$.
  Hence we really have a diagram as follows 
\vspace{-2mm} 
  \begin{equation}
    \label{AnalyzedCoconeDiagram}
    \begin{tikzcd}[row sep=small]
      && &
      \big(
        \mathcal{C}, \sqcup, \otimes_{{}_{\mathcal{C}}}
      \big)   \mathrlap{\,,}
      \\
      \big(
        \mathrm{Mod}_{\mathbb{C}}
        ,\,
        \otimes
      \big)
      \ar[
        rr,
        "{ \iota }"
      ]
      \ar[
        urrr, 
        bend left=10, 
        "{ F \circ \iota }"
      ]
      &&
      \big(
        \mathrm{Fam}_{\mathbb{C}}
        ,\,
        \sqcup
        ,\,
        \boxtimes
      \big)
      \ar[ur, dashed, "{ F }"{description}]
      \\
      \mathrm{Mod}_{\mathbb{C}}
      \ar[
        rr,
        "{ \iota }"
      ]
      \ar[u]
      &&
      \big(
        \mathrm{Fam}_{\mathbb{C}}
        ,\, 
        \sqcup
      \big)
      \ar[u]
      \ar[uur, bend right=14, "{ F }"{swap}]
        \end{tikzcd}
  \end{equation} 
  \vspace{-2mm} 

  \noindent 
  whose underlying diagram of functors commutes. Therefore we are reduced to showing that the dashed morphism 
  is well-defined as a morphism in $\mathrm{AnyMonCat}$, which means to show that it intertwines the external 
  tensor product $\boxtimes$ on $\mathrm{Fam}_{\mathbb{C}}$ with the given tensor product 
  $\otimes_{{}_{\mathcal{C}}}$ on $\mathbb{C}$. This is verified by the following sequence of natural isomorphisms:
  \vspace{-2mm} 
  $$
    \def\arraystretch{1.7}
    \begin{array}{lll}
      F
      \big(
      \mathscr{V}_{X}
      \boxtimes
      \mathscr{V}'_{X'}      
      \big)
      &
      \;\simeq\;
      F
      \bigg(\,
        \underset{(x,x') \in X \times X'}{\coprod}
        \,
        \iota
        \big(
        \mathscr{V}_x \otimes \mathscr{V}_{x'}
        \big)
      \!\! \bigg)
      &
      \proofstep{
        by \eqref{ExternalTensorProductAsCoproductOfFiberwiseTensor}
      }
      \\
  &    \;\simeq\;
      \underset{(x,x') \in X \times X"}{\coprod}
      \,
      F
      \Big(
        \iota
        \big(
          \mathscr{V}_x \otimes \mathscr{V}_{x'}
        \big)
      \Big)
      &
      \proofstep{
        since $F$ preserves coproducts, by assumption
        \eqref{CoconeDiagram}
      }
      \\
 &     \;\simeq\;
      \underset{(x,x') \in X \times X"}{\coprod}
      \,
      F \big(\iota(\mathscr{V}_x)\big)
      \otimes_{{}_{\mathcal{C}}}
      F \big(\iota(\mathscr{V}'_{x'})\big)
      &
      \proofstep{
        via strong monoidal structure on 
        $F \circ \iota$,
        by \eqref{AnalyzedCoconeDiagram}
      }
      \\
   &   \;\simeq\;
      \Big(\,
        \underset{x \in X}{\coprod}
        \,
        F \big(\iota(\mathscr{V}_x)\big)
      \Big)
      \otimes_{{}_{\mathcal{C}}}
      \Big(\,
        \underset{x' \in X'}{\coprod}
        \,
        F \big(\iota(\mathscr{V}'_{x'})\big)
      \Big)
      &
      \proofstep{
        since $\otimes_{{}_{\mathcal{C}}}$ 
        is distributive, by assumption
        \eqref{CoconeDiagram}
      }
      \\
   &   \;\simeq\;
      F
      \Big(\,
        \underset{x \in X}{\coprod}
        \,
        \big(\iota(\mathscr{V}_x)\big)
      \Big)
      \otimes_{{}_{\mathcal{C}}}
      F
      \Big(\,
        \underset{x' \in X'}{\coprod}
        \,
        \big(\iota(\mathscr{V}'_{x'})\big)
      \Big)
      &
      \proofstep{
        since $F$ preserves coproducts, by assumption
        \eqref{CoconeDiagram}
      }
      \\
    &  \;\simeq\;
      F(\mathscr{V}_{X})
      \,\otimes_{{}_{\mathcal{C}}}\,
      F(\mathscr{V}'_{X'})
      &
      \proofstep{
        by \eqref{ExternalTensorProductAsCoproductOfFiberwiseTensor}.
      }
    \end{array}
  $$

  \vspace{-7mm}
\end{proof}

\begin{remark}[Doubly closed monoidal structure on vector bundles and its ``bunched'' classical/quantum logic]
\label{DoublyClosedMonoidalCategories}
 
 \noindent
 {\bf (i)}  
 The category $\mathrm{Fam}_{\mathbb{C}}$ (Ex. \ref{CategoryOfComplexVectorBundlesOverSets}) also carries 
 a cartesian product --- the ``external cartesian product'' (Ex. \ref{ExternalCartesianProduct}).
    
 \noindent
 {\bf (ii)}    
 Since both this and the external tensor product are closed (cf. \cite[Prop. 2.3]{QS}), 
 jointly they make for a ``doubly closed monoidal category'' \cite[\S 3]{OHearnPym99}\cite[\S 2.2]{OHearn03} which one may want 
 to think of as providing categorical semantics for both classical propositional logic as well as for the multiplicative 
 fragment of linear logic. Since the antecedents in such a mixed classical/quantum logic are no longer plain lists of classical
 products of classical contexts, but more generally nested (``bunched'') trees obtained by alternatively using the (external)
 tensor product, this idea of combined classical/linear logic has originally been advertised as a logic of ``bunched implications'' \cite{OHearnPym99}\cite{Pym02} and came to be known as {\it bunched logic}, for short (e.g. \cite{ZBHYY21}, where also the
 quantum aspect of bunched logic is considered).
 
 \noindent   
 {\bf (iii)}   
 However, there have all along been subtle technical difficulties with promoting the broad idea of bunched logic to a 
 satisfactory formal language; these problems have been highlighted in \cite[p. 5]{Pym08} and further in
 \cite[pp. 19 and Rem. 1.4.1]{Riley22}. The claim of \cite{Riley22} is that all these problems are finally resolved by enhancing
 linear type theory to Linear Homotopy Type Theory, {\tt LHoTT}, which we may thus understand as the first working {\it universal} quantum programming language \cite{QS}. On the side of the categorical 
 semantics this requires promoting the doubly closed monoidal category of vector bundles over sets to a suitably doubly 
 monoidal model category presenting $\infty$-local systems over general homotopy types. This generalization of the present discussion is addressed in \cite{SS26-Global}.
\end{remark}

\subsection{For flat vector bundles over general spaces}
\label{PushoutForLocalSystemsOverGeneralSpaces}

We generalize the previous discussion from vector bundles over discrete 
spaces to flat vector bundles over arbitrary base spaces. 
The previous discussion in \cref{PushoutPropertyForVectorBundlesOverDiscreteSpaces} was essentially a variation of the theme that 
$\mathrm{Fam}_{\mathbb{C}}$ (Ex. \ref{CategoryOfComplexVectorBundlesOverSets}) is the free coproduct 
completion of $\mathrm{Mod}_{\mathbb{C}}$ (Ex. \ref{CategoriesOfIndexedSetsOfObjects}). In looking for a first homotopy-theoretic 
generalization of this notion, one may observe that coproducts are, of course, just the colimits over diagrams of the shape of 
a discrete category in $\mathrm{Set} \hookrightarrow \mathrm{Cat}$.

\medskip
Therefore, we are naturally led to ask more generally for completion of categories (notably of $\mathrm{Mod}_{\mathbb{C}}$) 
under colimits over diagrams of the shape of skeletal groupoids $\mathrm{Grpd}_{\mathrm{skl}} \hookrightarrow{\;} \mathrm{Cat}$: 
This should combine formation of coproducts (indexed by the set of connected components of a given skeletal groupoid) with the 
formation of {\it quotients by group actions} indexed by the automorphisms group of any connected component). 
More precisely, we should ask here for {\it homotopy quotients} over group actions. This is what we make precise in Def. \ref{HomotopyQuasiCoproducts} below.

\medskip

\noindent
{\bf Groupoids.}
In all of the following we write $\mathrm{Grpd}$ (Def. \ref{GroupoidsAndCategories}) for the 1-category of small strict 
groupoids, regarded as a monoidal category under the cartesian product.

\begin{example}[Basic examples of groupoids and notation (e.g {\cite[\S 1.2]{EquBundles}})]
  \label{BasicExampleOfGroupoids}
  For $(G, \mu, \mathrm{e}) \in \mathrm{Grp}$, we write
\begin{itemize}[leftmargin=.5cm]
  \item
  $
    \mathbf{B}G 
    \,:\defneq\,
    (G \rightrightarrows \mathrm{pt})
    \,\in \mathrm{Grpd}\,
  $ 
  for the {\it delooping groupoid} of $G$, with composition given by reverse group multiplication
  \vspace{-1mm} 
  \begin{equation}
    \label{DeloopingGroupoids}
    \mathbf{B}G
    \;:=\;
    \left\{
    \adjustbox{raise=4pt}{
    \begin{tikzcd}[row sep=small, column sep=large]
      &
      \mathrm{pt}
      \ar[dr, "{ g_{23} }"{sloped}]
      \\
      \mathrm{pt}
      \ar[ur, "{ g_{12} }"{sloped}]
      \ar[rr, "{ \mu(g_{23},\,g_{12}) }"]
      &&
      \mathrm{pt}
    \end{tikzcd}
    }
    \right\}
    \,,
  \end{equation}
  so that (left) $G$-actions are equivalently functors out of the delooping:
  \begin{equation}
    \label{GActionsAsFunctors}
    \begin{tikzcd}[sep=0pt]
    G 
      \ar[rr, "{ \mathrm{homom.} }"{swap}]
     &&
    \mathrm{Hom}_{\mathcal{C}}(\mathscr{V},\,\mathscr{V})
    \\[+2pt]
    \hline
    \\[+3pt]
    \mathbf{B}G
    \ar[rr, "{ \mathrm{funct.} }"]
    &&
    \mathcal{C}
    \\[-3pt]
    \mathrm{pt} &\mapsto& \mathscr{V}
    \,.
    \end{tikzcd}
  \end{equation}
  \item
  $
    \mathbf{E}G
    \,:\defneq\,
    (G \times G \rightrightarrows G)
    \in \mathrm{Grpd}
  $
  for the action groupoid of $G$ acting on itself by left multiplication, so that we have a forgetful functor (see \cite[\S 2.3]{EquBundles} for more background)
  \begin{equation}
    \label{ProjectionFromEGToBG}
    \begin{tikzcd}[sep=0pt]
      \mathbf{E}G
      \ar[rr, "{ q }"]
      &&
      \mathbf{B}G
      \\
      g 
      \ar[d, "{ g_{12} }"]
      &\mapsto& 
      \mathrm{pt}
      \ar[d, "{ g_{12} }"]
      \\[+15pt]
      \mu(g_{12}, g)
      &\mapsto&
      \mathrm{pt}
    \end{tikzcd}
  \end{equation}
  and a remaining $G$-action by right inverse multiplication
  \begin{equation}
    \label{CanonicalGActionOnEG}
    \begin{tikzcd}[sep=0pt]
      \mathbf{B}G
      \ar[rr]
      &&
      \mathrm{Grpd}
      \\
      \mathrm{pt}
      \ar[d, "{ g }"]
      &\mapsto&
      \mathbf{E}G
      \ar[d, "{ \mu(\mbox{-},g^{-1}) }"]
      \\[+15pt]
      \mathrm{pt}
      &\mapsto&
      \mathbf{E}G
    \end{tikzcd}
  \end{equation}
  whose colimiting cocone is $q \,:\,\mathbf{E}G \to \mathbf{B}G$ \eqref{ProjectionFromEGToBG}. 

\item $\HomotopyQuotient{W}{G}$ for the {\it action groupoid} of the (left) action $G \acts W$ of a group $G$ on a set $W$, whose set of objects is $W$ and whose morphisms are given by the group translations:
\begin{equation}
  \label{ActionGroupoid}
  \HomotopyQuotient{W}{G}
  \;\;\defneq\;\;
  \left\{
  \adjustbox{raise=4pt}{
  \begin{tikzcd}[
    row sep=10pt,
  ]
    &[-10pt]
    g_1 \!\cdot\! w
    \ar[dr, "{ g_2 }"]
    &[-20pt]
    \\
    w
    \ar[
      ur, "{ g_1 }"
    ]
    \ar[
      rr,
      "{ \mu(g_2, g_1) }"{swap}
    ]
    &&
    \mu(g_2,  g_1) \!\cdot\! w
  \end{tikzcd}
  }
  \right\}
  \,.
\end{equation}
  Notice that the previous examples are special cases of action groupoids:
  \[
    \mathbf{B}G
    \,=\,
    \HomotopyQuotient{\mathrm{pt}}{G}
    ,\hspace{.8cm}
    \mathbf{E}G
    \,=\,
    \HomotopyQuotient{G}{G}
    \,.
  \]
   In particular, the terminal map $W \to \mathrm{pt}$ induces for every action groupoid a canonical (Kan-)fibration
   \begin{equation}
     \begin{tikzcd}[
       row sep=2pt
     ]
       \HomotopyQuotient{W}{G}
       \ar[
         r,
         "{
           \scalebox{.7}{
             fibration
           }
         }"{swap}
       ]
       &
       \mathbf{B}G
       \\
       w
       \ar[r, phantom, "{ \mapsto }"]
       \ar[
         d,
         "{
           g
         }"
       ]
       & 
       \mathrm{pt}
       \ar[
         d,
         "{
           g
         }"
       ]
       \\[15pt]
       g \cdot w
       \ar[r, phantom, "{ \mapsto }"]
       & 
       \mathrm{pt}
     \end{tikzcd}
   \end{equation}

  \item
  $\mathrm{CoDisc}(S) \,:=\, (S \times S \rightrightarrows) \,\in\, \mathrm{Grpd}$ for the {\it pair groupoid} 
on some $S \,\in\, \mathrm{Set}$, i.e., the groupoid whose objects are the elements of $S$ and which has a unique
morphism between any pair of objects. For example:
  \begin{equation}
    \label{CodiscreteGroupoids}
    \mathrm{CoDisc}\big(\{1,2,3,4\}\big)
    \;\;
    \defneq
    \;\;
    \left\{\!\!
    \adjustbox{raise=4pt}{
    \begin{tikzcd}[scale=2em]
      2 
      \ar[r, <->, bend left=16]
      \ar[dr, <->]
      & 
      3
      \ar[d, <->, bend left=16]
      \\
      1
      \ar[r, <->, bend right=16]
      \ar[u, <->, bend left=16]
      \ar[ur, <->, crossing over]
      & 
      4
    \end{tikzcd}
    }
  \!\!  \right\}
    \mathrlap{\,.}
  \end{equation}
  These codiscrete groupoids serve as {\it contractible resolutions of the point}, since their terminal functor 
  is an equivalence (either in the sense of categorical equivalence or in the sense of homotopy equivalence):
  \begin{equation}
    \label{ContractabilityOfCodiscreteGroupoids}
    \begin{tikzcd}
    \mathrm{CoDisc}(S)
      \ar[rr, "{ \mathrm{equivalence} }"{swap}]
      &&
    \ast
    \end{tikzcd}
  \end{equation}
  Notice that $\mathbf{E}G$ \eqref{ProjectionFromEGToBG} is isomorphic to a codiscrete groupoid: 
  $\mathbf{E}G \,\simeq\, \mathrm{CoDisc}(G)$.

\item $\Pi_1(X)$ for the {\it fundamental groupoid} of topological space $X$ (e.g. \cite[\S 6]{Higgins71}), whose objects are the elements $x \in X$, whose morphisms are homotopy classes $[\gamma]$ (relative their endpoints) of continuous paths $\gamma : [0,1] \to X$ 
 and whose composition operation is given by concatenation of paths:
\begin{equation}
  \label{FundamentalGroupoid}
  \Pi_1(X)
  \;\;
  \defneq
  \;\;
  \left\{\!\!\!\!\!\!
  \adjustbox{raise=10pt}{
  \begin{tikzcd}[decoration=snake]
    &
    x_2
    \ar[
      dr,
      decorate,
      bend left=5,
      "{ [\gamma_{\,23}] }"
    ]
    \\[-15pt]
    x_1
    \ar[
      ur,
      decorate,
      bend left=5,
      "{ [\gamma_{12}] }"
    ]
    \ar[
      rr,
      decorate,
      bend right=5,
      "{ 
        [\mathrm{conc}(
          \gamma_{\,23}
          ,\,
          \gamma_{12} 
        )]
      }"{swap, yshift=-2pt}
    ]
    &&
    x_3
  \end{tikzcd}
  }
  \!\!\!\!\!\!\!  
  \right\}
  \,.
\end{equation}
\end{itemize}
If $X$ is connected with any choice of basepoint $x_0$, then the evident inclusion of the delooping groupoid \eqref{BasicExampleOfGroupoids} of the fundamental group $\pi_1(X,x_o)$ is an equivalence of groupoids:
\[
  \begin{tikzcd}[
    row sep=1pt,
    decoration=snake
  ]
    \mathbf{B}\pi_1(X, x_0)
    \ar[
      r,
      "{
        \scalebox{.7}{
          equivalence
        }
      }"{swap}
    ]
    &
    \Pi_1(X)
    \\
    \mathrm{pt}
    \ar[
      r,
      phantom,
      "{ \mapsto }"
    ]
    \ar[
      d,
      "{ [\gamma] }"
    ]
    &
    x_0
    \ar[
      d,
      decorate,
      "{ [\gamma] }"{xshift=2pt}
    ]
    \\[20pt]
    \mathrm{pt}
    \ar[
      r,
      phantom,
      "{ \mapsto }"
    ]
    &
    x_0    
  \end{tikzcd}
\]
\end{example}

\begin{definition}[Skeletal groupoids, cf. e.g. {\cite[p. 91]{MacLane97}\cite[\S 2.6]{Richter20}}]
\label{SkeletalGroupoid}
  A groupoid is called {\it skeletal} if it is a disjoint union of delooping groupoids \eqref{DeloopingGroupoids}:
  \begin{equation}
    \label{SkeletalGroupoidDeclaration}
    \mbox{$\mathcal{X}$ is skeletal}
    \hspace{1cm}
      \Leftrightarrow
    \hspace{1cm}
    \mathcal{X}
      \;\underset{\mathrm{iso}}{\simeq}\;
    \underset{
      x \in \mathrm{Obj}(\mathcal{X})
    }{\coprod}
    \mathbf{B}
    \big(
      \mathcal{X}(x,x)
    \big)
    \,.
  \end{equation}
\end{definition}
It is a standard fact  (assuming the \emph{axiom of choice} in the underlying set theory, as usual) that every groupoid is 
adjoint equivalent (in the sense of equivalence of categories) to a skeletal one (Def. \ref{SkeletalGroupoid}). But for the purposes here it is useful to rephrase this as follows, 
using a 1-category theoretic ``model'' for the notion of equivalence, in view of \eqref{ContractabilityOfCodiscreteGroupoids}:

\begin{lemma}[Connected is delooping times codiscrete]
  \label{ConnectedGroupoidIsomorphicToDeloopingTimesCodiscrete}
  Every connected groupoid is isomorphic to the product of a delooping groupoid \eqref{DeloopingGroupoids} 
  with a codiscrete groupoid \eqref{CodiscreteGroupoids}:
  \vspace{-2mm} 
  $$
    \left.
    \def\arraystretch{1.2}
    \begin{array}{l}
      \mathcal{X} 
        \,\in\,
      \mathrm{Grpd}
      \\
      \pi_0(\mathcal{X}) \,\simeq\, \ast
      \\
      x_0 \,\in\, \mathrm{Obj}(X)
    \end{array}
    \right\}
    \hspace{1cm}
    \vdash
    \hspace{1cm}
    \mathcal{X}
    \;\;\underset{\mathrm{iso}}{\simeq}\;\;
    \mathrm{CoDisc}\big(\mathrm{Obj}(\mathcal{X})\big)
    \times
    \mathbf{B}\big(
     \mathcal{X}(x_0,x_0)
    \big).
  $$
\end{lemma}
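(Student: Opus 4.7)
The plan is to use the Axiom of Choice to pick, for every object $x \in \mathrm{Obj}(\mathcal{X})$, a morphism $p_x \colon x_0 \to x$ in $\mathcal{X}$ (which exists because $\mathcal{X}$ is connected), normalized so that $p_{x_0} = \mathrm{id}_{x_0}$. Writing $G := \mathcal{X}(x_0, x_0)$, I will then build an explicit strict isomorphism of groupoids
$$
F \;\colon\; \mathcal{X} \xrightarrow{\;\sim\;} \mathrm{CoDisc}\bigl(\mathrm{Obj}(\mathcal{X})\bigr) \times \mathbf{B}G,
$$
together with an inverse $H$ going the other way, both of which are the identity on underlying sets of objects (after the tautological identification of the unique object of $\mathbf{B}G$ with $\mathrm{pt}$).

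On morphisms I set $F(f) := \bigl((x,y),\, p_y^{-1} \circ f \circ p_x\bigr)$ for each $f \colon x \to y$ in $\mathcal{X}$, where $(x,y)$ denotes the unique morphism $x \to y$ in the codiscrete groupoid on $\mathrm{Obj}(\mathcal{X})$, and conversely $H\bigl((x,y), g\bigr) := p_y \circ g \circ p_x^{-1}$. Functoriality of $F$ is the direct check that the inserted factor $p_y \circ p_y^{-1}$ telescopes under composition of morphisms $x \to y \to z$, while identities are sent to identities thanks to the normalization $p_{x_0} = \mathrm{id}_{x_0}$ combined with $p_x^{-1} \circ \mathrm{id}_x \circ p_x = \mathrm{id}_{x_0}$. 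Functoriality of $H$ is analogous, and the equations $H \circ F = \mathrm{id}_{\mathcal{X}}$ and $F \circ H = \mathrm{id}$ follow by the same one-line cancellation of $p_x^{-1} \circ p_x$ and $p_y \circ p_y^{-1}$.

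No real obstacle arises beyond the invocation of the Axiom of Choice to pick the family $\{p_x\}_{x \in \mathrm{Obj}(\mathcal{X})}$: without it one would only obtain an equivalence of groupoids, as in the standard skeletization theorem, whereas the lemma asserts an isomorphism on the nose. This is precisely what the codiscrete factor $\mathrm{CoDisc}(\mathrm{Obj}(\mathcal{X}))$ records --- it remembers the choice of connecting morphisms while remaining homotopically trivial by \eqref{ContractabilityOfCodiscreteGroupoids}, so that the displayed product is simultaneously isomorphic to $\mathcal{X}$ strictly and equivalent to the skeletal $\mathbf{B}G$ homotopically.
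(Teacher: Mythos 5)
Your proposal is correct and is essentially the paper's own proof: both choose, via the Axiom of Choice, a family of morphisms $x_0 \to x$ and conjugate each $f \colon x \to y$ to $p_y^{-1} \circ f \circ p_x \in \mathcal{X}(x_0,x_0)$, paired with the unique morphism $(x,y)$ in the codiscrete factor. The only cosmetic difference is your normalization $p_{x_0} = \mathrm{id}_{x_0}$, which is not actually needed (the identity $p_x^{-1} \circ \mathrm{id}_x \circ p_x = \mathrm{id}_{x_0}$ already holds for any choice), and your explicit spelling out of the inverse functor, which the paper leaves implicit.
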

\begin{proof}
   Choose for each object $x \in \mathrm{Obj}(\mathcal{X})$ a morphism $\gamma : x_{0} \xrightarrow{\;} x$ 
   (which exists by the assumption that $\mathcal{X}$ is connected). This gives the following isomorphism:
  \vspace{-1mm} 
  $$
    \begin{tikzcd}[row sep=0pt, column sep=small]
      \mathcal{X} \quad 
      \ar[
        rr, 
        "{ \sim }"
      ]
      &&
      \mathrm{CoDisc}\big(
        \mathrm{Obj}(\mathcal{X})
      \big)
      \,\times\,
      \mathbf{B}
      \big(
        \mathcal{X}(x_0,\,x_0)
      \big)\,.
      \\
      x
      \ar[
        d,
        "{ f }"
      ]
      &\longmapsto&
      (x,\,\mathrm{pt})
      \ar[
        d,
        "{
          \scalebox{.9}{$
            \scalebox{1.2}{$($}
              (x,\,x'),
              \,
              \gamma_{x'}^{-1}
                \circ
              f
                \circ 
              \gamma_x
            \scalebox{1.2}{$)$}
          $}
        }"
      ]
      \\[25pt]
      x'
      &\longmapsto&
      (x',\,\mathrm{pt})
    \end{tikzcd}
  $$
 \vspace{-.8cm}

\end{proof}

\medskip

\noindent
{\bf Free homotopy quasi-coproduct completion.}

\smallskip 
In mild variation of \cite[\S 1.3]{HT95}\footnote{Our notion of homotopy quasi-coproducts (Def. \ref{HomotopyQuasiCoproducts}) is a 
special case of the notion of {\it quasi-coproducts} of \cite[\S 1.3]{HT95} in that we require the respective actions not just to be free, 
but to be free {\it qua} the Borel construction. But our definition is also slightly stronger in that we in addition require categories 
with homotopy quasi-coproducts be tensored over $\mathrm{Grpd}$. This may be understood as imposing the further requirement that homotopy 
quasi-coproducts over constant diagrams are well-behaved.}, we say:

\begin{definition}[Homotopy quasi-coproducts]
\label{HomotopyQuasiCoproducts}
A {\it category with homotopy quasi-coproducts} is a category $\mathcal{C}$ 
\begin{itemize}
\item[{\bf (i)}] equipped with a tensoring over $\mathrm{Grpd}$
\vspace{-2mm} 
\begin{equation}
  \label{GrpdTensoringOfCategoryWithHomotopyQuasiCoproducts}
  \begin{tikzcd}
    \mathrm{Grpd}
    \times
    \mathcal{C}
    \ar[rr, "{ (\mbox{-})\cdot(\mbox{-}) }"]
    &&
    \mathcal{C}
  \end{tikzcd}
\end{equation}

\vspace{-2mm} 
\item[{\bf (ii)}]
which has all colimits over diagrams of shapes of skeletal groupoids of the following form:
\begin{equation}
  \label{HomotopyQuasiCoproductDiagram}
  \begin{tikzcd}[column sep=large]
    \mathcal{X}
    \ar[rr]
    \ar[d, equals]
    &&
    \mathcal{C}
    \\
    \underset{i \in I}{\coprod}
    \,
    \mathbf{B}G_i
    \ar[
      rr,
      "{
        \left(
          \mathbf{E}G_i,\, \mathscr{V}_{(-)}
        \right)_{i \in I}\;\;
      }"
    ]
    &&
    \mathrm{Grpd} \times \mathcal{C}
    \ar[
      u, 
      "{
        (\mbox{-})\cdot(\mbox{-})
      }"{swap}
    ]
  \end{tikzcd}
\end{equation}
This means, equivalently, that $\mathcal{C}$ has 
\begin{itemize}
\item[{\bf (a)}] all set-indexed coproducts,
\item [{\bf (b)}] all ``Borel constructions'', namely all quotients of diagonal actions of any group $G$ on objects of the form $(\mathbf{E}G) \cdot \mathscr{V}$, where $\mathscr{V}$ is equipped with any $G$-action \eqref{GActionsAsFunctors} and where $\mathbf{E}G$ carries the canonical action \eqref{CanonicalGActionOnEG}.
\end{itemize}
\end{itemize}
We write $\mathrm{CoCartCat}^{{}^{h q}}$ for the (very large) category (Def. \ref{GroupoidsAndCategories}) whose objects are categories with homotopy quasi-coproducts and whose morphisms are functors between them preserving this structure.
\end{definition}

\begin{definition}[Free homotopy quasi-coproduct completion]
\label{FreeHomotopyQuasiCoproductCompletion}
Given a category $\mathcal{C}$ we say that its {\it free homotopy quasi-coproduct completion} is a full inclusion of $\mathcal{C}$ in a category with homotopy quasi-coproducts (Def. \ref{HomotopyQuasiCoproducts}) such that any functor out of the latter which preserves the $\mathrm{Grpd}$-tensoring \eqref{GrpdTensoringOfCategoryWithHomotopyQuasiCoproducts} 
and the homotopy quasi-coproducts \eqref{HomotopyQuasiCoproductDiagram} is already fixed by its restriction to $\mathcal{C}$.
\end{definition}

We will show that the following construction realizes this free homotopy quasi-coproduct completion, at least for cocomplete categories:
\begin{definition}[Category of local systems with coefficients in any category]
\label{CategoryOfLocalSystemsWithCoefficientsInAnyCategory}
For $\mathcal{C}$ a category, we write
\begin{equation}
  \label{LocWithCoefficientsInAnyC}
  \mathrm{Loc}_{\mathcal{C}}
  \;\;
    :\defneq
  \;\;
  \underset{
    \mathcal{X}\,\in\,
    \mathrm{Grpd}
  }{\int}
  \,
  \mathcal{C}^{\mathcal{X}}
\end{equation}
for the Grothendieck construction (Def. \ref{GrothendieckConstruction}) on the pseudo-functor of functor categories from groupoids into $\mathcal{C}$, as in \eqref{CategoryOfLocalSystemsInIntroduction}.
We regard this as equipped with the full inclusion of objects of $\mathcal{C}$ regarded as constant functors on the terminal groupoid
\begin{equation}
  \label{IncludingConstantLocalSystems}
  \begin{tikzcd}[sep=0pt]
    \mathcal{C}
    \ar[rr, hook, "{ \iota }"]
    &&
    \mathrm{Loc}_{\mathcal{C}}
    \\
    \mathscr{V} 
      &\mapsto&
    \mathscr{V}_{\mathrm{pt}}
  \end{tikzcd}
\end{equation}
and with the $\mathrm{Grpd}$-tensoring given by 
\vspace{-2mm} 
\begin{equation}
  \label{GrpdTensoringOfLocalSystems}
  \begin{tikzcd}[row sep=0pt, column sep=small]
    \mathrm{Grpd}
    \times
    \mathrm{Loc}_{\mathcal{C}}
    \ar[rr, "{ (\mbox{-})\cdot(\mbox{-}) }"]
    &&
    \mathrm{Loc}_{\mathcal{C}}
    \\
    \big(
      \mathcal{X},\,\mathscr{W}_{\mathcal{Y}}
    \big)
    &\longmapsto&
    \big(
      (\mathrm{pr}_{\mathcal{Y}})^\ast
      \mathscr{V}
    \big)_{\mathcal{X} \times \mathcal{Y}}
    \\
    \phantom{a} && \phantom{a}
  \end{tikzcd}
\end{equation}
\end{definition}

\begin{example}[Group representations as local systems]
  \label{GroupActionsAsLocalSystems}
  For each $G \,\in\, \mathrm{Grp}$ there is a full inclusion of the category of $G$-actions on objects of $\mathcal{C}$
  into the category of local systems \eqref{LocWithCoefficientsInAnyC}, given by the correspondence \eqref{GActionsAsFunctors}:
  \vspace{-2mm} 
  \begin{equation}
    \label{GroupActionAsLocalSystems}
    \begin{tikzcd}[
      column sep=20pt,
      row sep=-2pt
    ]
      G\mathrm{Act}(\mathcal{C})
      \ar[
        r, 
        "{ \sim }"
      ]
      &
      \mathcal{C}^{\mathbf{B}G}
      \ar[
        r, 
        hook
      ]
      &
      \mathrm{Loc}_{\mathcal{C}}\;.
      \\
      G \acts \mathscr{V}
      &\longmapsto&
      \mathscr{V}_{\mathbf{B}G}
    \end{tikzcd}
  \end{equation}

  \vspace{-2mm} 
\noindent  However (if there is a zero object $0 \in \mathcal{C}$ in $\mathrm{Loc}_{\mathcal{C}}$, or at least a terminal object)
  after regarding it inside $\mathrm{Loc}_{\mathcal{C}}$, then
  any such group representation may be ``decomposed'' into:
  
  (i) the underlying group $G$, 
  
  (ii) the underlying object 
  $\mathscr{V} \,\in\, \mathcal{C}$ and 
  
  (iii) the action itself, in that it fits into a pullback square of this form:
  \vspace{-2mm} 
  $$
    \begin{tikzcd}[row sep=small] 
      \mathscr{V}_{\mathrm{pt}}
      \ar[d]
      \ar[r]
      \ar[dr, phantom, "{ \scalebox{.7}{(pb)} }"]
      &
      \mathscr{V}_{\mathbf{B}G}
      \ar[d]
      \\
      0_{\mathrm{pt}}
      \ar[r]
      &
      0_{\mathbf{B}G}
    \end{tikzcd}
    \;\;\;\;\;\;\;
    \in
    \;\;\;
    \mathrm{Loc}_{\mathcal{C}}
    \,.
  $$

  \vspace{-2mm} 
\noindent  At least for $\mathcal{C}$ such that $\mathrm{Loc}_{\mathcal{C}}$ embeds continuously into an $\infty$-topos, 
such squares exhibit $\mathscr{V}_{\mathbf{B}G}$ as the {\it homotopy quotient} of $\mathscr{V}$ by its $G$-action, 
and the map $\mathscr{V}_{\mathbf{B}G} \xrightarrow{\;} 0_{\mathbf{B}G}$ as the $\mathscr{V}$-fiber bundle associated 
to the universal $G$-principal bundle $0_{\mathbf{E}G} \xrightarrow{\;} 0_{\mathbf{B}G}$; 
see \cite[\S 2.2]{OrbifoldCohomology}\cite[Prop. 0.2.1]{EquBundles}.
\end{example}

We are mainly interested in the specialization of Def. \ref{CategoryOfLocalSystemsWithCoefficientsInAnyCategory}
to the case that $\mathcal{C}$ is a cocomplete monoidal category such as $\mathrm{Mod}_{\mathbb{C}}$.
When $\mathcal{C}$ is cocomplete, then it is canonically tensored over $\mathrm{Set}$ 
and all base change operations $f^\ast$ \eqref{CategoryOfLocalSystemsInIntroduction} have left adjoints $f_!$ 
(by left Kan extension):
\vspace{-4mm} 
\begin{equation}
 \label{ExtraStructureOnLocalSystemsWithCocompleteCoefficients}
 \hspace{-1cm} 
 \mbox{
    $\mathcal{C}$ cocomplete
  }
  \hspace{1cm}
    \vdash
  \hspace{1cm}
    \begin{tikzcd}[row sep=-2pt, column sep=small]
      \mathrm{Set}
      \times
      \mathcal{C}
      \ar[
        rr,
        "{
          (\mbox{-})
          \cdot
          (\mbox{-})
        }"
      ]
      &&
      \mathcal{C}
      \\
  \scalebox{0.8}{$    (S,\,\mathscr{V}) $}
      &\longmapsto&
    \scalebox{0.8}{$    \underset{s \in S}{\coprod} \mathscr{V} $}
    \end{tikzcd}
  \hspace{.7cm}
  \mbox{and}
  \hspace{.7cm}
  \begin{tikzcd}[row sep=0pt, column sep=small]
    \mathrm{Grpd}
    \ar[rr, "{ \mathcal{C}^{(\mbox{-})} }"]
    &&
    \mathrm{Cat}_{\mathrm{adj}}
    \\
    \mathcal{X}
    \ar[d, "{ f }"]
    &\mapsto&
    \mathcal{C}^{\mathcal{X}}
    \ar[
      d,
      shift right=8pt,
      "{
        f_!
      }"{swap}
    ]
    \ar[
      from=d,
      shift right=8pt,
      "{
        f^\ast
      }"{swap}
    ]
    \ar[
      d,
      phantom,
      "{
        \scalebox{.7}{$\dashv$}
      }"
    ]
    \\[+30pt]
    \mathcal{Y}
    &\mapsto&
    \mathcal{C}^{\mathcal{Y}}
  \end{tikzcd}
\end{equation}
\begin{lemma}[Pushforward along quotient projection on $\mathbf{E}G$]
  \label{PushforwardAlongQuotientProjectionOnEG}
  If $\mathcal{C}$ is cocomplete,  then the push-forward operation \eqref{ExtraStructureOnLocalSystemsWithCocompleteCoefficients}
  along $q : \mathbf{E}G \xrightarrow{\;} \mathbf{B}G$ \eqref{ProjectionFromEGToBG} is given by
  \vspace{-2mm} 
  $$
    \begin{tikzcd}[row sep=0pt, column sep=small]
      \mathcal{C}^{\mathbf{E}G}
      \ar[
        rr,
        "{ q_! }"
      ]
      &&
      \mathcal{C}^{\mathbf{B}G}
      \\
      \mathscr{V}_{\mathbf{E}G}
      &\longmapsto&
      (
        G \cdot \mathscr{V}_{\mathrm{e}}
      )_{\mathbf{B}G}
      \mathrlap{
        \;\;\simeq\;
        (G \cdot 1)_{\mathbf{B}G}
        \,\boxtimes\,
        \mathscr{V}_{\{\mathrm{e}\}}
        \,.
      }
    \end{tikzcd}
  $$
\end{lemma}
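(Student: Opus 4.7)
The plan is to verify the claim by checking the defining universal property of the left adjoint $q_!$ through the adjunction $q_! \dashv q^{\ast}$ from \eqref{ExtraStructureOnLocalSystemsWithCocompleteCoefficients}: I will exhibit, natural in $\mathscr{W}_{\mathbf{B}G} \in \mathcal{C}^{\mathbf{B}G}$, a bijection
\[
  \mathrm{Hom}_{\mathcal{C}^{\mathbf{B}G}}\!\bigl( (G\cdot\mathscr{V}_{\mathrm{e}})_{\mathbf{B}G},\, \mathscr{W}_{\mathbf{B}G} \bigr)
  \;\simeq\;
  \mathrm{Hom}_{\mathcal{C}^{\mathbf{E}G}}\!\bigl( \mathscr{V}_{\mathbf{E}G},\, q^{\ast} \mathscr{W}_{\mathbf{B}G} \bigr)
\]
by identifying both sides naturally with $\mathrm{Hom}_{\mathcal{C}}\bigl(\mathscr{V}(\mathrm{e}),\,\mathscr{W}(\mathrm{pt})\bigr)$.

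For the right-hand side, writing $W := \mathscr{W}(\mathrm{pt})$, a natural transformation $\mathscr{V}_{\mathbf{E}G} \Rightarrow q^{\ast}\mathscr{W}_{\mathbf{B}G}$ consists of morphisms $\alpha_g \colon \mathscr{V}(g) \to W$ for $g \in G$ satisfying $\alpha_{g_2} \circ \mathscr{V}(h) = \mathscr{W}(h) \circ \alpha_{g_1}$ for every $h \colon g_1 \to g_2$ in $\mathbf{E}G$. Since $\mathbf{E}G$ is codiscrete (the unique morphism $\mathrm{e} \to g$ is the element $g$ itself), naturality at this distinguished morphism forces $\alpha_g = \mathscr{W}(g) \circ \alpha_{\mathrm{e}} \circ \mathscr{V}(g)^{-1}$, so the whole family is determined by $\alpha_{\mathrm{e}} \in \mathrm{Hom}_{\mathcal{C}}(\mathscr{V}(\mathrm{e}), W)$. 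A direct computation using functoriality of $\mathscr{V}$ and $\mathscr{W}$ verifies that the remaining naturality squares then hold automatically.

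For the left-hand side, a morphism $\varphi \colon (G\cdot \mathscr{V}_{\mathrm{e}})_{\mathbf{B}G} \to \mathscr{W}_{\mathbf{B}G}$ in $\mathcal{C}^{\mathbf{B}G}$ is a $G$-equivariant map $\coprod_{g\in G}\mathscr{V}(\mathrm{e}) \to W$, with $G$ acting by left translation on the indexing copy of $G$. The universal property of the coproduct encodes $\varphi$ as a family $(\varphi_g \colon \mathscr{V}(\mathrm{e}) \to W)_{g \in G}$, and equivariance forces $\varphi_g = \mathscr{W}(g) \circ \varphi_{\mathrm{e}}$; hence $\varphi$ is uniquely determined by $\varphi_{\mathrm{e}}$. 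Matching $\alpha_{\mathrm{e}} \leftrightarrow \varphi_{\mathrm{e}}$ yields the required bijection, and naturality in $\mathscr{W}$ is immediate.

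The secondary identification $(G\cdot \mathscr{V}_{\mathrm{e}})_{\mathbf{B}G} \simeq (G\cdot 1)_{\mathbf{B}G} \boxtimes \mathscr{V}_{\{\mathrm{e}\}}$ is obtained by unpacking the external tensor product over $\mathbf{B}G \times \mathrm{pt} \simeq \mathbf{B}G$ fiberwise, where $(G\cdot 1) \otimes \mathscr{V}(\mathrm{e}) \simeq G \cdot \mathscr{V}(\mathrm{e})$ by distributivity of $\otimes$ over coproducts (Def. \ref{CategoriesOfCategories}) and the $G$-action is inherited from the $G\cdot 1$ factor. The only real bookkeeping hurdle is verifying that the $G$-action produced on the claimed pushforward is indeed left translation; this hinges on the observation that the $G$-action on $\mathbf{B}G(q(-),\mathrm{pt}) = G$ arising from post-composition with automorphisms of $\mathrm{pt} \in \mathbf{B}G$ is precisely left multiplication in $G$, which in turn descends through the counit of the adjunction to left translation on the summands of $\coprod_{g}\mathscr{V}(\mathrm{e})$.
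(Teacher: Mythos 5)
Your proof is correct and takes essentially the same route as the paper's: both verify the $(q_! \dashv q^\ast)$ hom-isomorphism by identifying $G$-equivariant maps out of the free object $G\cdot\mathscr{V}_{\mathrm{e}}$ on one side, and natural transformations $\mathscr{V}_{\mathbf{E}G} \Rightarrow q^\ast\mathscr{W}$ on the other, with plain morphisms $\mathscr{V}_{\mathrm{e}} \to \mathscr{W}$ in $\mathcal{C}$ --- using freeness of the action for the former and the codiscreteness of $\mathbf{E}G$ (every component determined by the one at $\mathrm{e}$) for the latter. Your closing paragraph on the $\boxtimes$-rewriting and the identification of the induced action as left translation spells out what the paper only asserts in the parenthetical remark preceding its proof.
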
  

\noindent
  (On the right, 
  using the tensoring \eqref{ExtraStructureOnLocalSystemsWithCocompleteCoefficients},
  the $G$-action 
  \eqref{GActionsAsFunctors} is via the left multiplication action of $G$ on itself, which on the far right we are
  transparently re-expressing, when $\mathcal{C}$ is monoidal, through the external tensor product
  \eqref{ExternalTensorProductOfLocalSystems}.)
\begin{proof}
  We may check the $(q_! \dashv q^\ast)$ hom-isomorphism: 
  First, speaking equivalently in terms of group actions via Ex. \ref{GroupActionsAsLocalSystems},
  since $G \cdot \mathscr{V}_{\mathrm{e}}$ carries the free group action on the underlying object, the $G$-equivariant morphisms $G \cdot \mathscr{V}_{\mathrm{e}} \xrightarrow{\;} \mathscr{W}$ are in natural bijection with the underlying such morphisms $f : \mathscr{V}_{\!\!\mathrm{e}} \xrightarrow{\;} \mathscr{W}$ in $\mathcal{C}$.
  These, in turn, are in natural bijection with morphisms from $\mathscr{V}_{\mathbf{E}G}$ to $q^\ast \mathscr{W}$, namely with natural transformations given as follows:
  \vspace{-2mm} 
  $$
   \begin{tikzcd}[row sep=-4pt, column sep=small]
     \mathbf{E}G \quad 
     \ar[rr]
     &&
    \qquad  \mathcal{C}
     &
     \\[10pt]
     \mathrm{e}
     \ar[d]
     &\longmapsto&
     \mathscr{V}_{\!\!\mathrm{e}}
     \ar[
       d, 
       "{
         \mathscr{V}_{\!(\mathrm{e},g)}
       }"{swap}
     ]
     \ar[
       r,
       "{
         f
       }"
     ]
     &
     \mathscr{W}
     \ar[
       d,
       "{
         \mathscr{W}_{\!\mu(g',g^{-1})}
       }"
     ]
     \\[+30pt]
     g
     &\longmapsto&
     \mathscr{V}_{\!\!g}
     \ar[r, "{ \exists! }"]
     &
     \mathscr{W}
     \mathrlap{\,.}
   \end{tikzcd}
  $$
  
  \vspace{-2mm}
  \noindent Here the bottom morphisms clearly exist uniquely for all $g \in G$, thus establishing the claimed bijection.
\end{proof}

\medskip
\noindent
{\bf External tensor product on local systems.}
The following Prop. \ref{ExternalTensorProductOnLocalSystems} is fairly immediate (for more details compare \cite[\S 2.3]{SS26-Global}). In generalization of \eqref{DistributivityOfMonoidalStructure},
the following Def. \ref{GroupoidMonoidalCategory} is a lightweight version of the notion of {\it monoidal enriched  categories} which we use in this section here in order not to overburden the elementary discussion:
\begin{definition}
\label{GroupoidMonoidalCategory}
 A {\it $\mathrm{Grpd}$-monoidal category} is a monoidal category $(\mathcal{C}, \otimes, 1)$ equipped with 
  \begin{itemize}
    \item[{\bf (i)}]
      a $\mathrm{Grpd}$-tensoring 
      \eqref{GrpdTensoringOfCategoryWithHomotopyQuasiCoproducts}
      $\begin{tikzcd}
        \mathrm{Grpd}
        \times
        \mathcal{V}
        \ar[r, "{ (\mbox{-})\cdot(\mbox{-}) }"]
        &
        \mathcal{C}
      \end{tikzcd}$
    \item[{\bf (ii)}]
     natural isomorphism
     \vspace{-2mm} 
     \begin{equation}
       \label{StructureIsosForGrpdMonoidality}
       \left.
       \def\arraystretch{1.4}
       \begin{array}{l}
         \mathcal{X} \,\in\, \mathrm{Grpd},
         \\
         \mathscr{V},\,
         \mathscr{W}
         \,\in\,
         \mathcal{C}
       \end{array}
      \! \right\}
       \hspace{1.3cm}
         \vdash
       \hspace{1.3cm}
       \big(
       \mathcal{X}
       \cdot
       \mathscr{V}
       \big)
       \otimes
       \mathscr{W}
       \;\;
         \simeq
       \;\;
       \mathcal{X}
       \cdot
       \big(
       \mathscr{V}
       \otimes
       \mathscr{W}
       \big).
     \end{equation}
  \end{itemize}
\end{definition}
\begin{definition}[Homotopy quasi-distributive categories]
\label{HomotopyQuasiDistributiveCategories}
 We say that a $\mathrm{Grpd}$-monoidal category $(\mathcal{C},\, \otimes_{\mathcal{C}},\, 1_{\mathcal{C}} )$ (Def. \ref{GroupoidMonoidalCategory})
 which has homotopy quasi-coproducts (Def. \ref{HomotopyQuasiCoproducts}) is {\it homotopy quasi-distributive} if the tensor product is 
 compatible in each variable with the homotopy quasi-coproducts \eqref{HomotopyQuasiCoproductDiagram} in that the canonical comparison 
 maps are isomorphisms:
 \vspace{-2mm} 
 $$
   \begin{tikzcd}[row sep=-3pt]
   \underset{\longrightarrow}{\lim}
   \Big(
     \mathcal{X} 
       \xrightarrow{\;}
     \mathcal{C}
       \xrightarrow{
         \scalebox{.7}{$
           (-) 
             \otimes_{\mathcal{C}} 
           \mathscr{W}
         $}
       }
     \mathcal{C}
   \Big)
   \ar[
     r,
     "{}",
     "{ \sim }"{swap}
   ]
   &
   \Big(
   \underset{\longrightarrow}{\lim}
   \big(
     \mathcal{X} 
       \xrightarrow{\;}
     \mathcal{C}
   \big)   
   \Big)
   \otimes_{\mathcal{C}}
   \mathscr{W}\;,
   \\
   \underset{\longrightarrow}{\lim}
   \Big(
     \mathcal{X} 
       \xrightarrow{\;}
     \mathcal{C}
       \xrightarrow{
         \scalebox{.7}{$
           \mathscr{W}
             \otimes_{\mathcal{C}} 
           (-) 
         $}
       }
     \mathcal{C}
   \Big)
   \ar[
     r,
     "{}",
     "{ \sim }"{swap}
   ]
   &
   \mathscr{W}
   \otimes_{\mathcal{C}}
   \Big(
   \underset{\longrightarrow}{\lim}
   \big(
     \mathcal{X} 
       \xrightarrow{\;}
     \mathcal{C}
   \big)   
   \Big).
   \end{tikzcd}
 $$
\end{definition}
\begin{proposition}[External tensor product on local systems]
\label{ExternalTensorProductOnLocalSystems}
Given a cocomplete closed monoidal category $(\mathcal{C}, 1,  \otimes)$, the category of $\mathcal{C}$-valued local systems (Def. \ref{CategoryOfLocalSystemsWithCoefficientsInAnyCategory}) becomes a homotopy quasi-distributive category
(Def. \ref{HomotopyQuasiDistributiveCategories}) under the external tensor product
\vspace{-1mm} 
\begin{equation}
  \label{ExternalTensorProductOfLocalSystems}
  \begin{tikzcd}[row sep=0pt, column sep=small]
    \mathrm{Loc}_{\mathcal{C}}
    \times
    \mathrm{Loc}_{\mathcal{C}}
    \ar[
      rr,
      "{
        \boxtimes
      }"
    ]
    &&
    \mathrm{Loc}_{\mathcal{C}}
    \\
    \big(
      \mathscr{V}_{\mathcal{X}}
      ,\,
      \mathscr{W}_{\mathcal{Y}}
    \big)
    &\longmapsto&
    \Big(
      \big(
        (\mathrm{pr}_{\mathcal{X}})^\ast
        \mathscr{V}
      \big)
      \otimes
      \big(
        (\mathrm{pr}_{\mathcal{Y}})^\ast
        \mathscr{W}
      \big)
    \Big)_{ \mathcal{X} \times \mathcal{Y} }
  \end{tikzcd}
\end{equation}

\vspace{-2mm} 
\noindent  with respect to the canonical  $\mathrm{Grpd}$-tensoring
  \eqref{GrpdTensoringOfLocalSystems}, hence in particular such that the inclusion \eqref{IncludingConstantLocalSystems} is strong monoidal
  \begin{equation}
    \label{InclusionOfPointSystemsStrongMonoidal}
    \iota\big(
      \mathscr{V}
      \otimes
      \mathscr{W}
    \big)
    \;\;
      \simeq
    \;\;
    \iota(\mathscr{V})
    \boxtimes
    \iota(\mathscr{W})\;.
  \end{equation}
\end{proposition}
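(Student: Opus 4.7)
The plan is to verify the three assertions in turn, all by pointwise arguments that reduce everything to the already-assumed monoidal and cocompleteness structure on $\mathcal{C}$ together with the $2$-functoriality of the assignment $\mathcal{X} \mapsto \mathcal{C}^{\mathcal{X}}$. First I would observe that \eqref{ExternalTensorProductOfLocalSystems} is functorial: it factors as $\boxtimes = \otimes \circ \big((\mathrm{pr}_{\mathcal{X}})^\ast,(\mathrm{pr}_{\mathcal{Y}})^\ast\big)$, where $\otimes : \mathcal{C}^{\mathcal{X}\times\mathcal{Y}} \times \mathcal{C}^{\mathcal{X}\times\mathcal{Y}} \to \mathcal{C}^{\mathcal{X}\times\mathcal{Y}}$ is the pointwise extension of $\otimes_{\mathcal{C}}$ and both pullbacks are functorial on the respective slices of the Grothendieck construction.

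Next, I would construct the associator, unitor and braiding for $\boxtimes$ from those of $(\mathcal{C},\otimes,1)$ combined with the symmetric monoidal structure of $(\mathrm{Grpd},\times,\mathrm{pt})$: using the evident identities of projections $\mathrm{pr}_{\mathcal{X}} \circ \mathrm{pr}_{\mathcal{X}\times\mathcal{Y}} = \mathrm{pr}_{\mathcal{X}}$ etc., the structure isomorphisms are assembled pointwise from those of $\otimes_{\mathcal{C}}$, and Mac Lane coherence on $\mathcal{C}$ pointwise implies coherence for $\boxtimes$. The unit is $\iota(1) = 1_{\mathrm{pt}}$, since pulling back along $\mathrm{pt} \times \mathcal{Y} \xrightarrow{\sim} \mathcal{Y}$ is the identity up to the coherent unitor. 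The compatibility isomorphism \eqref{StructureIsosForGrpdMonoidality} for the $\mathrm{Grpd}$-tensoring \eqref{GrpdTensoringOfLocalSystems} is likewise immediate, because $\mathcal{X}\cdot\mathscr{V}_{\mathcal{Y}}$ is defined via pullback along $\mathrm{pr}_{\mathcal{Y}}$, and hence both sides of \eqref{StructureIsosForGrpdMonoidality} evaluate to the same functor on the appropriate product of base groupoids. The strong monoidality \eqref{InclusionOfPointSystemsStrongMonoidal} of $\iota$ then drops out as the special case $\mathcal{X} = \mathcal{Y} = \mathrm{pt}$, where both projections are identities and $\boxtimes$ reduces strictly to $\otimes$.

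The main work, which I expect to be the hardest, is to verify homotopy quasi-distributivity (Def. \ref{HomotopyQuasiDistributiveCategories}) for a fixed $\mathscr{W}_{\mathcal{Y}} \in \mathrm{Loc}_{\mathcal{C}}$. Here I would first recall that, because $\mathcal{C}$ is closed monoidal, $\otimes_{\mathcal{C}}$ preserves all colimits in each variable, hence so does its pointwise extension to any functor category $\mathcal{C}^{\mathcal{Z}}$. It then suffices to show that the functor $(-)\boxtimes \mathscr{W}_{\mathcal{Y}} : \mathrm{Loc}_{\mathcal{C}} \to \mathrm{Loc}_{\mathcal{C}}$ preserves the two classes of colimits characterizing homotopy quasi-coproducts \eqref{HomotopyQuasiCoproductDiagram}: set-indexed coproducts $\coprod_i \mathscr{V}^i_{\mathcal{X}_i}$, and Borel constructions over a single delooping $\mathbf{B}G$. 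Coproducts are handled by noting that the Grothendieck-construction coproduct of a family of local systems is their disjoint union over the coproduct base $\coprod_i \mathcal{X}_i$, that pullback along $\mathrm{pr}_{\coprod_i \mathcal{X}_i}$ commutes with this disjoint union, and that $\otimes_{\mathcal{C}}$ distributes over coproducts in $\mathcal{C}$ pointwise. For Borel constructions, I would apply Lem. \ref{PushforwardAlongQuotientProjectionOnEG}: a Borel construction is the image under $q_!$ of a free-action object $(\mathbf{E}G)\cdot\mathscr{V}_{\{\mathrm{e}\}}$, and $(-)\boxtimes \mathscr{W}_{\mathcal{Y}}$ commutes with $q_!$ because it is a left adjoint in each variable (its right adjoint being the external hom built from the closed structure of $\mathcal{C}$ together with $q^\ast$), while the canonical action/tensor compatibility \eqref{StructureIsosForGrpdMonoidality} established above shows that it maps the free-action generator to another free-action object of the same form. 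Assembling these two cases yields the required distributivity isomorphisms, completing the proof.
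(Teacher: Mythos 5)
Your proposal follows essentially the same route as the paper: both arguments obtain the $\mathrm{Grpd}$-monoidality by identifying the tensoring $\mathcal{X}\cdot(-)$ with the external tensor against a unit system (cf. \eqref{GrpdTensoringOnLocalSystems}), and both reduce homotopy quasi-distributivity to the fact that closedness of $\otimes_{\mathcal{C}}$ forces $\boxtimes$ to preserve colimits in each variable. The paper simply asserts this colimit preservation once (deferring the detailed computation to the more general Prop. \ref{ExternalTensorWithLocalSystemOverDiscreteSpacePreservesColimits}, which proceeds via the projection formula and Beck--Chevalley), whereas you unfold the quasi-coproducts into their two constituent cases (set-indexed coproducts and Borel constructions) and treat each separately -- a harmless, slightly more explicit decomposition.

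One caveat: in the Borel-construction step you justify commutation with $q_!$ by asserting that $(-)\boxtimes\mathscr{W}_{\mathcal{Y}}$ is a left adjoint for arbitrary $\mathcal{Y}$, with right adjoint "the external hom built from the closed structure of $\mathcal{C}$." The paper only constructs such a right adjoint when one factor lives over a \emph{discrete} base (Prop. \ref{ExternalInternalHomOfSimplicialLocalSystems}), and for a general base groupoid its existence would require an end over $\mathcal{Y}$, i.e.\ completeness hypotheses on $\mathcal{C}$ that are not assumed. The conclusion you need -- that $\boxtimes$ preserves the relevant colimits -- is nonetheless true and is exactly what the paper proves directly without invoking a global right adjoint; so this is an over-strong justification for a correct step rather than a fatal gap, but you should either restrict the adjointness claim or replace it by the direct colimit-preservation argument.
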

\begin{proof}
  Observing that the $\mathrm{Grpd}$-tensoring
  \eqref{GrpdTensoringOfLocalSystems} equals the restriction of the external tensor product to unit systems
  \begin{equation}
    \label{GrpdTensoringOnLocalSystems}
    \left.
    \def\arraystretch{1.4}
    \begin{array}{l}
      \mathcal{X}
      \,\in\,
      \mathrm{Grpd}
      \\
      \mathscr{W}_{\!\mathcal{X}}
      \,\in\,
      \mathrm{Loc}_{\mathcal{C}}
    \end{array}
    \right\}
    \hspace{1.2cm}
    \vdash
    \hspace{1.2cm}
    \mathcal{X}
    \cdot
    \mathscr{W}_{\mathcal{Y}}
    \;\;
    =
    \;\;
    1_{\mathcal{X}}
    \,\boxtimes\,
    \mathscr{W}_{\mathcal{Y}}    
    \,,
  \end{equation}  

  \vspace{-2mm} 
  \noindent
  the  $\mathrm{Grpd}$-monoidality structure
  \eqref{StructureIsosForGrpdMonoidality}
  is given identically by  \eqref{GrpdTensoringOnLocalSystems}. Moreover, under the assumption that the tensor product 
  $\otimes$ on $\mathcal{C}$ is closed, hence a left adjoint in each variable and as such colimit-preserving, it follows 
  that also the external tensor product preserves all colimits in each variable (we spell this out in greater generality
  in \cite[Prop. 2.36]{SS26-Global}), hence 
  in particular it preserves quasi-coproducts.
\end{proof}

\medskip
\noindent
{\bf Local systems as the free homotopy quasi-coproduct completion of vector spaces.}

\smallskip 
The following Lemmas \ref{EveryLocalSystemIsCoproductOfGrpdTensoringOfGroupRepresentation}, \ref{EveryGroupRepresentationILocalSystemsIsQuasiColimit} 
should be thought of as saying that in $\mathrm{Loc}_{\mathcal{C}}$ every object is {\it homotopy equivalent} to a coproduct of
{\it homotopy quotients}, where the tensoring with the contractible groupoids $\mathrm{CoDisc}(S)$, $\mathbf{E}G$ 
\eqref{ContractabilityOfCodiscreteGroupoids} is used to model these homotopy-theoretic notions in 1-category theoretic terms.

\begin{lemma}[Every local system is coproduct of Grpd tensoring of group representation]
  \label{EveryLocalSystemIsCoproductOfGrpdTensoringOfGroupRepresentation}
  Given an object $\mathscr{V}_{\mathcal{X}} \,\in\, \mathrm{Loc}_{\mathcal{C}}$ \eqref{LocWithCoefficientsInAnyC}, it is isomorphic 
  to a coproduct of tensorings \eqref{GrpdTensoringOfLocalSystems} with coskeletal groupoids (Ex. \ref{BasicExampleOfGroupoids})
  of group representations \eqref{GroupActionAsLocalSystems}:
  \vspace{-2mm} 
  $$
    \mathscr{V}_{\mathcal{X}}
    \;\in\;
    \mathrm{Loc}_{\mathcal{C}}
    \hspace{1.2cm}
      \vdash
    \hspace{1.2cm}
    \mathscr{V}_{\mathbf{B}G}
    \;\underset{\mathrm{iso}}{\simeq}\;
    \underset{
      i \in \pi_0(\mathcal{X})
    }{\coprod}
    \Big(
      \mathrm{CoDisc}\big(
        \mathrm{Obj}(\mathcal{X}_i)
      \big)
      \cdot
      \big(
       \iota_{x_i}^\ast
       \mathscr{V}
      \big)_{\mathbf{B}\mathcal{X}(x_i,x_i)}
    \Big)
    \,.
  $$
\end{lemma}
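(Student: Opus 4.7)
The plan is to decompose $\mathcal{X}$ into connected components and handle each one via Lemma~\ref{ConnectedGroupoidIsomorphicToDeloopingTimesCodiscrete}, then recognize the result as the asserted quasi-coproduct of Grpd-tensorings.

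\textbf{Step 1 (decomposition into connected components).} Any groupoid canonically decomposes as $\mathcal{X} \,\underset{\mathrm{iso}}{\simeq}\, \coprod_{i \in \pi_0(\mathcal{X})} \mathcal{X}_i$ into its connected components, where each $\mathcal{X}_i$ carries a chosen basepoint $x_i$. Since the inclusions $j_i : \mathcal{X}_i \hookrightarrow \mathcal{X}$ express $\mathcal{X}$ as a coproduct in $\mathrm{Grpd}$, applying the Grothendieck construction of Def.~\ref{CategoryOfLocalSystemsWithCoefficientsInAnyCategory} gives an isomorphism in $\mathrm{Loc}_{\mathcal{C}}$ of the form $\mathscr{V}_{\mathcal{X}} \,\simeq\, \coprod_{i \in \pi_0(\mathcal{X})} (j_i^\ast \mathscr{V})_{\mathcal{X}_i}$, where the coproduct on the right-hand side is formed in $\mathrm{Loc}_{\mathcal{C}}$ using the evident cocartesian structure induced by \eqref{CoCartesianPairingOfBundles} generalized to arbitrary groupoids. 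This reduces the claim to the case where $\mathcal{X}$ is connected, in which case $\pi_0(\mathcal{X}) = \ast$ and the coproduct is trivial.

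\textbf{Step 2 (reduction to delooping $\times$ codiscrete).} For connected $\mathcal{X}_i$ with basepoint $x_i$, Lemma~\ref{ConnectedGroupoidIsomorphicToDeloopingTimesCodiscrete} provides an isomorphism $\mathcal{X}_i \,\underset{\mathrm{iso}}{\simeq}\, \mathrm{CoDisc}\big(\mathrm{Obj}(\mathcal{X}_i)\big) \times \mathbf{B}\big(\mathcal{X}_i(x_i,x_i)\big)$. Under this isomorphism and the projection $\mathrm{pr}_2$ onto the delooping factor, the restriction functor $\iota_{x_i}^\ast$ (evaluation at the chosen basepoint) sits in a commuting triangle $\iota_{x_i} = \mathrm{pr}_2 \circ (\text{inclusion of }(x_i,\mathrm{pt}))$. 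So it suffices to produce a natural isomorphism between $\mathscr{V}_{\mathcal{X}_i}$ (transported along the groupoid isomorphism) and $(\mathrm{pr}_2)^\ast \iota_{x_i}^\ast \mathscr{V}$, for then by definition \eqref{GrpdTensoringOfLocalSystems} the latter is exactly $\mathrm{CoDisc}(\mathrm{Obj}(\mathcal{X}_i)) \cdot (\iota_{x_i}^\ast \mathscr{V})_{\mathbf{B}\mathcal{X}_i(x_i,x_i)}$, yielding the claim.

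\textbf{Step 3 (natural isomorphism via contractibility of CoDisc).} The required natural isomorphism is obtained from the observation that in $\mathrm{CoDisc}(S)$ there is a unique morphism $!_{s,s'} : s \to s'$ between any two objects, so any functor $F : \mathrm{CoDisc}(S) \times \mathbf{B}G \to \mathcal{C}$ sends each $!_{s,x_i}$ to a (necessarily coherent) isomorphism $F(s,\mathrm{pt}) \xrightarrow{\sim} F(x_i,\mathrm{pt})$ that is $G$-equivariantly natural. Assembling these isomorphisms for varying $s$ produces a natural isomorphism $F \,\Rightarrow\, (\mathrm{pr}_2)^\ast(F|_{\{x_i\}\times \mathbf{B}G})$. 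Applied to the transported $\mathscr{V}$, this is exactly the natural isomorphism needed. Packaging the groupoid isomorphism of Step~2 together with this natural isomorphism into a single morphism of $\mathrm{Loc}_{\mathcal{C}}$ (using the explicit description of morphisms in the Grothendieck construction from Def.~\ref{GrothendieckConstruction}) yields the isomorphism in $\mathrm{Loc}_{\mathcal{C}}$ claimed by the lemma.

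\textbf{Main obstacle.} The only delicate point is verifying that the pointwise isomorphisms in Step~3 assemble into a single natural transformation that is, moreover, an isomorphism in $\mathrm{Loc}_{\mathcal{C}}$ rather than merely fiberwise. This is a routine but careful diagram chase using the defining property of $\mathrm{CoDisc}$ (uniqueness of morphisms forces coherence automatically) combined with the explicit form of morphisms in the Grothendieck construction. Everything else is formal manipulation of coproducts and definitions of the $\mathrm{Grpd}$-tensoring \eqref{GrpdTensoringOfLocalSystems}.
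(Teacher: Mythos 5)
Your proposal is correct and follows essentially the same route as the paper: reduce to connected components, apply Lemma \ref{ConnectedGroupoidIsomorphicToDeloopingTimesCodiscrete}, and then exhibit the natural isomorphism to the pullback along the projection to $\mathbf{B}G$ using the unique morphisms $(x,x_i)$ of the codiscrete factor (whose uniqueness gives naturality for free). The paper's proof writes out exactly the components $\mathscr{V}_{((x,x_i),\mathrm{e})}$ that your Step 3 describes.
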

\begin{proof}
  On underlying groupoids, this is the statement of Lem. \ref{ConnectedGroupoidIsomorphicToDeloopingTimesCodiscrete}.
  Therefore, it is sufficient to see that any local system $\mathscr{V}_{(-)}$ on $\mathrm{CoDisc}(X) \cdot \mathbf{B}G$ 
  is isomorphic to one pulled back from $\mathbf{B}G$, via the following natural isomorphism
\vspace{-2mm} 
  $$
    \begin{tikzcd}[row sep=1pt, column sep=small]
      \mathrm{CoDisc}(X)
      \times
      \mathbf{B}G
      \quad 
      \ar[rr]
      &&
     \qquad \qquad  \mathcal{C}
      \\[5pt]
      (x,\mathrm{pt})
      \ar[
        d,
        "{
          \scalebox{.7}{$
            \scalebox{1.2}{$($}
              (x,x'),g
            \scalebox{1.2}{$)$}
          $}
        }"{swap}
      ]
      &\longmapsto \quad &
         \mathscr{V}_{\!x}
      \ar[
        r, 
        "{
          \mathscr{V}_{\!
            \scalebox{.7}{$
              \scalebox{1.2}{$($}(x, x_i),\, \mathrm{e}\scalebox{1.2}{$)$}
            $}
         }
        }"{swap}
      ]
      \ar[
        d,
        "{
          \mathscr{V}_{\!
            \scalebox{.7}{$
              \scalebox{1.2}{$($}(x, x'),\, g
              \scalebox{1.2}{$)$}
            $}
          }
        }"{swap}
      ]
      &[20pt]
      \mathscr{V}_{\!x_i}
      \ar[
        d,
        "{
          \mathcal{V}_{\! 
            \scalebox{.7}{$
            \scalebox{1.2}{$($}\mathrm{id}_{x_i},\, g\scalebox{1.2}{$)$}
            $}
          }
        }"
      ]
      \\[+20pt]
      (x',g)
      &\longmapsto \quad&
      \mathscr{V}_{\!x}
      \ar[
        r, 
        "{
          \mathscr{V}_{\!
            \scalebox{.7}{$
            \scalebox{1.2}{$($}(x', x_i),\, \mathrm{e}\scalebox{1.2}{$)$}
            $}
         }
        }"
      ]
      &[20pt]
      \mathscr{V}_{\!x_i}
      \mathrlap{\,,}
    \end{tikzcd}
  $$
  for any choice of basepoint 
  $x_i \,\in\, X$.
\end{proof}

\begin{lemma}[Group representation as homotopy quasi-coproduct]
  \label{EveryGroupRepresentationILocalSystemsIsQuasiColimit}
  Given a cocomplete category $\mathcal{C}$, and $G \,\in\, \mathrm{Grp}$, any group action $\rho : \mathbf{B}G \to \mathcal{C}$ 
  \eqref{GActionsAsFunctors} is, when regarded as an object of $\mathrm{Loc}_{\mathcal{C}}$ (Rem. \ref{GroupActionsAsLocalSystems}), 
  a homotopy quasi-coproduct (Def. \ref{HomotopyQuasiCoproducts}) in that it is the colimit in $\mathrm{Loc}_{\mathcal{C}}$ 
  over the following diagram:
  $$
    \begin{tikzcd}[row sep=1pt, column sep=small]
      \mathbf{B}G
      \ar[
        rr, 
        "{  }"
      ]
      &&
      \mathrm{Loc}_{\mathcal{C}}
      \\
      \mathrm{pt}
      \ar[
        d,
        "{ g }"
      ]
      &\longmapsto \quad &
      (\mathbf{E}G)
        \cdot
      \mathscr{V}_{\mathrm{pt}}
      \ar[
        d,
        "{
          \mu(\mbox{-},\,g^{-1})
          \,\cdot\,
          \rho(g)
        }"
      ]
      \\[+20pt]
      \mathrm{pt}
      &\longmapsto \quad &
      (\mathbf{E}G)
        \cdot
      \mathscr{V}_{\mathrm{pt}}      
    \end{tikzcd}
  $$
\end{lemma}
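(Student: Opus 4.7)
\smallskip

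\noindent \textbf{Proof plan.}
The plan is to establish the universal property directly, by identifying cocones under the given diagram with morphisms out of $\mathscr{V}_{\mathbf{B}G}$ in $\mathrm{Loc}_{\mathcal{C}}$. The key leverage is the observation made in Ex.~\ref{BasicExampleOfGroupoids} after \eqref{CanonicalGActionOnEG}: the projection $q:\mathbf{E}G \to \mathbf{B}G$ is already the colimit in $\mathrm{Grpd}$ of the right-inverse multiplication $G$-action on $\mathbf{E}G$.

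\smallskip

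First, I would unpack a cocone: for any test object $\mathscr{W}_{\mathcal{Z}} \in \mathrm{Loc}_{\mathcal{C}}$, such a cocone is a single morphism
$\phi = (f,\alpha) \,:\, (\mathbf{E}G)\cdot\mathscr{V}_{\mathrm{pt}} \to \mathscr{W}_{\mathcal{Z}}$
satisfying $\phi \circ (\mu(-,g^{-1})\cdot\rho(g)) = \phi$ for every $g \in G$, where $f:\mathbf{E}G \to \mathcal{Z}$ is the underlying base functor and $\alpha$ is a natural transformation from the constant functor at $\mathscr{V}$ to $f^{\ast}\mathscr{W}$.

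\smallskip

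Second, I would analyze this cocone condition on its two constituents. On bases, the equation $f \circ \mu(-,g^{-1}) = f$ for all $g \in G$ is precisely the $G$-invariance of $f$ under the right action, so by the universal property of $q$ as the colimit of this action, $f$ factors uniquely as $f = \tilde\rho \circ q$ for some $\tilde\rho:\mathbf{B}G \to \mathcal{Z}$ (equivalently, a choice of object $z \in \mathcal{Z}$ together with a group homomorphism $G \to \mathcal{Z}(z,z)$). On coefficients, combining the cocone relation with the naturality of $\alpha$ in $\mathbf{E}G$, together with the identity $\alpha_{\mathrm{e}} \,:\, \mathscr{V} \to \mathscr{W}_{z}$, forces
$$
\alpha_h \;=\; \alpha_{\mathrm{e}} \circ \rho(h)
\hspace{.7cm}\text{for all } h \in G,
\hspace{.7cm}
\mathscr{W}\!\big(\tilde\rho(g)\big) \circ \alpha_{\mathrm{e}} \;=\; \alpha_{\mathrm{e}} \circ \rho(g)
\hspace{.4cm}\text{for all } g \in G,
$$
so that the entire cocone is encoded by the pair $(\tilde\rho,\alpha_{\mathrm{e}})$.

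\smallskip

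Third, I would observe that this pair $(\tilde\rho,\alpha_{\mathrm{e}})$ is exactly the data of a morphism $\mathscr{V}_{\mathbf{B}G} \to \mathscr{W}_{\mathcal{Z}}$ in $\mathrm{Loc}_{\mathcal{C}}$ via the Grothendieck construction \eqref{LocWithCoefficientsInAnyC}: the base functor is $\tilde\rho$, and the (single-component) natural transformation from $\mathscr{V}_{\mathbf{B}G}$ to $\tilde\rho^{\ast}\mathscr{W}$ is $\alpha_{\mathrm{e}}$, its single intertwiner constraint being precisely the second displayed equation. This bijection is natural in $\mathscr{W}_{\mathcal{Z}}$, and the universal cocone corresponds to the identity on $\mathscr{V}_{\mathbf{B}G}$ (namely $f = q$, $\tilde\rho = \mathrm{id}_{\mathbf{B}G}$, $\alpha_{\mathrm{e}} = \mathrm{id}_{\mathscr{V}}$). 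The main point of care is bookkeeping: keeping track of the right-multiplication convention on $\mathbf{E}G$, the contravariance of $f^{\ast}$ in the Grothendieck construction, and the left/right action conventions, all of which must align for the two analyses (on bases and on fibers) to combine into the single morphism datum above.
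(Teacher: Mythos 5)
Your proposal is correct, but it takes a genuinely different route from the paper's. You verify the universal property directly: you unpack a cocone under the $\mathbf{B}G$-shaped diagram as a single $G$-invariant morphism $(f,\alpha)$ out of $(\mathbf{E}G)\cdot\mathscr{V}_{\mathrm{pt}}$, factor $f$ through $q:\mathbf{E}G\to\mathbf{B}G$ using that $q$ is already the colimiting cocone of the right-multiplication action on underlying groupoids, and then show that naturality of $\alpha$ collapses all its components onto $\alpha_{\mathrm{e}}$ subject to exactly the intertwiner condition defining a morphism $\mathscr{V}_{\mathbf{B}G}\to\mathscr{W}_{\mathcal{Z}}$; a Yoneda-style argument then identifies the colimit. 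The paper instead \emph{computes} the colimit top-down via the general formula for colimits in Grothendieck constructions (Prop.~\ref{ColimitsInAGrothendieckConstruction}): it first pushes the diagram forward along $q$ using Lem.~\ref{PushforwardAlongQuotientProjectionOnEG} to land in $\mathcal{C}^{\mathbf{B}G}$, where the colimit is computed objectwise in $\mathcal{C}$ as the coequalizer $G\cdot\mathscr{V}\to\mathscr{V}$ exhibiting $\rho$, and then reads off the induced $\mathbf{B}G$-functoriality as $\rho(g)$. Your approach buys self-containedness — it needs neither Lem.~\ref{PushforwardAlongQuotientProjectionOnEG} nor the general colimit formula, and it makes transparent exactly which piece of data a cocone amounts to; the paper's approach buys reusability of the general machinery it has already set up and an explicit construction of the colimiting object rather than a recognition of it. The one point you should make fully explicit is the converse direction of your bijection: that any pair $(\tilde\rho,\alpha_{\mathrm{e}})$ satisfying the intertwiner condition genuinely reassembles (via $\alpha_h:=\alpha_{\mathrm{e}}\circ\rho(h)$, $f:=\tilde\rho\circ q$) into a natural transformation satisfying the cocone identity — this is where the left/right multiplication and reverse-composition conventions of \eqref{DeloopingGroupoids} and \eqref{ProjectionFromEGToBG} must be checked to cancel correctly, as you rightly flag at the end.
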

\begin{proof}
  By the assumption that $\mathcal{C}$ is cocomplete, the colimit exists and is given (Prop. \ref{ColimitsInAGrothendieckConstruction})
  on underlying groupoids by the quotient coprojection
  $
    q \,:\,
    \mathbf{E}G
    \xrightarrow{\;}
    \mathbf{B}G
  $
  \eqref{ProjectionFromEGToBG} and on $\mathcal{C}$-components by the colimit over the following diagram:
  $$
    \begin{tikzcd}[row sep=1pt, column sep=small]
   \phantom{AAA}   \mathbf{B}G \qquad 
      \ar[rr]
      &&
     \qquad \qquad \qquad  \qquad \mathcal{C}
      \\[6pt]
      \mathrm{pt}
      \ar[
        d,
        "{ g }"{swap}
      ]
      &\longmapsto \quad&
      q_!
      \big(
        \mathbf{E}G
        \cdot
        \mathscr{V}_{\mathrm{pt}}
      \big)
      \ar[
        d,
        "{
          q_!
          \big(
            \mu(-,\, g^{-1})
            \cdot
            \rho(g)_{\mathrm{pt}}
          \big)
        }"
      ]
      &\simeq&
      (G \cdot 1)_{\mathbf{B}G}
      \,\boxtimes\,
      \mathscr{V}_{\mathrm{pt}}
      \ar[
        d,
        "{
          \big(
            \mu(-,\, g^{-1})
            \cdot 
            1
          \big)_{\mathbf{B}G}
          \boxtimes
          \mathscr{V}_{\!\rho(g)}
        }"
      ]
      \\[+35pt]
      \mathrm{pt}
      &\longmapsto \quad&
      q_!
      \big(
        \mathbf{E}G
        \cdot
        \mathscr{V}_{\mathrm{pt}}
      \big)      
      &\simeq&
      (G \cdot 1)_{\mathbf{B}G}
      \boxtimes
      \mathscr{V}_{\mathrm{pt}}
      \,,
    \end{tikzcd}
  $$
  where on the right we used Lem. \ref{PushforwardAlongQuotientProjectionOnEG}. Since this colimit is taken in a functor category, 
  $\mathcal{C}^{\mathbf{B}G}$, it is computed on underlying objects in $\mathcal{C}$,  where it is following cocone
  $$
    \begin{tikzcd}[row sep=1pt, column sep=huge]
      G \cdot \mathscr{V}
      \ar[
        rr,
        "{
          \mu(-,\, g^{-1})
          \cdot
          \rho(g)
        }"
      ]
      \ar[
        dr,
        "{\rho}"{swap}
      ]
      &&
      G \cdot \mathscr{V}
      \ar[
        dl,
        "{\rho}"
      ]
      \\
      &
      \mathscr{V}
    \end{tikzcd}
  $$
  and the induced action of morphisms in $\mathbf{B}G$ is thus given by the universal dashed morphism in
  $$
    \begin{tikzcd}
      G \cdot \mathscr{V}
      \ar[d, "{ \rho }"]
      \ar[rr, "{ \mu(g,-) }"]
      &&
      G \cdot \mathscr{V}
      \ar[d, "{\rho}"]
      \\
      \mathscr{V}
      \ar[
        rr,
        dashed,
        "{
          \rho(g)
        }"
      ]
      &&
      \mathscr{V}
      \,.
    \end{tikzcd}
  $$
  Under the equivalence \eqref{GroupActionAsLocalSystems},
  this is the object $\mathscr{V}_{\mathbf{B}G} \,\in\, \mathrm{Loc}_{\mathcal{C}}$, as claimed.
\end{proof}
In conclusion:

\begin{theorem}[Quasi-coproduct completion]
  Given a cocomplete category $\mathcal{C}$, then $\mathrm{Loc}_{\mathcal{C}}$ (Def. \ref{CategoryOfLocalSystemsWithCoefficientsInAnyCategory}) 
  is its free homotopy quasi-coproduct completion (Def. \ref{FreeHomotopyQuasiCoproductCompletion}).
\end{theorem}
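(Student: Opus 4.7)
The plan is to verify the claim in two stages: first that $\mathrm{Loc}_{\mathcal{C}}$ carries the required structure (a $\mathrm{Grpd}$-tensoring, all homotopy quasi-colimits of the shape in \eqref{HomotopyQuasiCoproductDiagram}, and a full inclusion of $\mathcal{C}$), and second that any structure-preserving functor out of $\mathrm{Loc}_{\mathcal{C}}$ is uniquely determined by its restriction along the inclusion $\iota$. The structure is largely free of charge: the $\mathrm{Grpd}$-tensoring is built into the construction via \eqref{GrpdTensoringOfLocalSystems}, the fully faithful inclusion $\iota : \mathcal{C} \hookrightarrow \mathrm{Loc}_{\mathcal{C}}$ is \eqref{IncludingConstantLocalSystems}, and cocompleteness of $\mathcal{C}$ ensures that each functor category $\mathcal{C}^{\mathcal{X}}$ is cocomplete and that the base-change adjunctions $f_! \dashv f^\ast$ from \eqref{ExtraStructureOnLocalSystemsWithCocompleteCoefficients} exist; colimits of diagrams of the form \eqref{HomotopyQuasiCoproductDiagram} in $\int_{\mathcal{X}} \mathcal{C}^{\mathcal{X}}$ are then assembled from coproducts in the groupoid index and pushforwards $q_!$ computed as in Lemma \ref{PushforwardAlongQuotientProjectionOnEG}.

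For the universal property, suppose $F : \mathrm{Loc}_{\mathcal{C}} \to \mathcal{D}$ preserves the $\mathrm{Grpd}$-tensoring and all homotopy quasi-coproducts. To see that $F$ is determined on objects by $F \circ \iota$, I would apply the two preceding lemmas in succession. Lemma \ref{EveryLocalSystemIsCoproductOfGrpdTensoringOfGroupRepresentation} presents an arbitrary $\mathscr{V}_{\mathcal{X}}$ as a coproduct over $\pi_0(\mathcal{X})$ of $\mathrm{Grpd}$-tensorings of group-representation local systems; Lemma \ref{EveryGroupRepresentationILocalSystemsIsQuasiColimit} then presents each such group representation as the colimit of the canonical quasi-coproduct diagram whose vertices are $\mathbf{E}G$-tensorings of objects of the form $\iota(\iota_{x_i}^\ast \mathscr{V})$, which lie in the essential image of $\iota$. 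Iterating these two presentations expresses $\mathscr{V}_{\mathcal{X}}$ as an iterated homotopy quasi-coproduct of $\mathrm{Grpd}$-tensorings built entirely from $\iota(\mathcal{C})$. Since $F$ preserves both operations by hypothesis, $F(\mathscr{V}_{\mathcal{X}})$ is forced to be the corresponding iterated colimit in $\mathcal{D}$ assembled from the values $F(\iota(-))$, and hence is uniquely determined by $F \circ \iota$.

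For morphisms the same strategy applies: a morphism $\phi_f : \mathscr{V}_{\mathcal{X}} \to \mathscr{W}_{\mathcal{Y}}$ is equivalently a functor $f$ of index groupoids together with a natural transformation $\mathscr{V} \Rightarrow f^\ast \mathscr{W}$, and under the canonical decompositions of source and target this data corresponds to a morphism of presenting diagrams, i.e.\ a compatible system of morphisms between $\iota$-images together with groupoid-indexing data absorbed into the $\mathrm{Grpd}$-tensoring. Applying $F$, the universal property of the presenting colimit in $\mathcal{D}$ then determines $F(\phi_f)$ uniquely from $F \circ \iota$.

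The expected main obstacle is precisely this morphism-level uniqueness, since the two lemmas are phrased on objects. To make the argument rigorous I would verify that the decompositions of Lemmas \ref{EveryLocalSystemIsCoproductOfGrpdTensoringOfGroupRepresentation} and \ref{EveryGroupRepresentationILocalSystemsIsQuasiColimit} are sufficiently natural in the object $\mathscr{V}_{\mathcal{X}}$ that morphisms in $\mathrm{Loc}_{\mathcal{C}}$ correspond to morphisms of the presenting colimit diagrams. This amounts to checking that the choices involved in the lemmas (basepoints $x_i$ and connecting morphisms $\gamma$) may be made functorially, or alternatively that the resulting induced maps between colimits are independent of those choices up to unique isomorphism; either route should follow from the uniqueness clause of the universal property of the colimit, completing the verification that $\mathrm{Loc}_{\mathcal{C}}$ with $\iota$ realizes the free homotopy quasi-coproduct completion of $\mathcal{C}$.
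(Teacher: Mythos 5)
Your proposal follows essentially the same route as the paper's proof: cocompleteness of $\mathcal{C}$ gives $\mathrm{Loc}_{\mathcal{C}}$ all homotopy quasi-coproducts via Prop.~\ref{ColimitsInAGrothendieckConstruction}, and then Lemmas \ref{EveryLocalSystemIsCoproductOfGrpdTensoringOfGroupRepresentation} and \ref{EveryGroupRepresentationILocalSystemsIsQuasiColimit} exhibit every object as an iterated homotopy quasi-coproduct of $\mathrm{Grpd}$-tensorings of constant local systems, so a structure-preserving functor is fixed by its restriction to $\iota(\mathcal{C})$. The morphism-level determination that you flag as the main remaining obstacle is left entirely implicit in the paper's (very terse) proof, so your explicit attention to it is a refinement of, rather than a divergence from, the published argument.
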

\begin{proof}
  First, due to the assumption that $\mathcal{C}$ is cocomplete, so is $\mathrm{Loc}_{\mathcal{C}}$ (by Prop. \ref{ColimitsInAGrothendieckConstruction}) 
  and hence, in addition to the $\mathrm{Grpd}$-tensoring \eqref{GrpdTensoringOfLocalSystems}, it in particular has all homotopy quasi-coproducts \eqref{HomotopyQuasiCoproductDiagram}.
  
  Now every object of $\mathrm{Loc}_{\mathcal{C}}$ is isomorphic to a coproduct of $\mathrm{Grpd}$-tensorings of group representations 
  (Lem. \ref{EveryLocalSystemIsCoproductOfGrpdTensoringOfGroupRepresentation}) and every group representation is a homotopy quasi-coproduct
  of a constant local system (Lem. \ref{EveryGroupRepresentationILocalSystemsIsQuasiColimit}). Therefore any functor out of 
  $\mathrm{Loc}_{\mathcal{C}}$ which preserves the $\mathrm{Grpd}$-tensoring and homotopy quasi-coproducts is already fixed by 
  its restriction to constant local systems \eqref{IncludingConstantLocalSystems}.
\end{proof}

\medskip

\noindent
{\bf Pushout-characterization of the external tensor product on local systems.}
In generalization of Def. \ref{AnyMonCat}:
\begin{definition}
\label{AnyMonCanHQ}
  Write 
  \fbox{$\mathrm{AnyMonCat}^{{}^{hq}}$} for the Grothendieck construction on the 
  following diagram of forgetful functors
\begin{equation}
  \label{CommutingSquareOfCategoriesOfCategoriesHQ}
  \begin{tikzcd}
    (\mathcal{C}, \otimes)
    \ar[rrr, phantom, "{\xmapsfrom{\qquad \qquad \qquad \qquad \quad}}"]
    \ar[ddd, phantom, "{ \xmapsto{\qquad \quad} }"{sloped}]
    &[-35pt]
    &
    &[-45pt]
    (\mathcal{C}, \sqcup^{{}^{hq}}, \otimes)
    \ar[ddd, phantom, "{ \xmapsto{\qquad \quad} }"{sloped}]
    \\[-20pt]
    &
    \mathrm{MonCat}
    \ar[from=r]
    \ar[d]
    &
   \colorbox{lightgray}{$ \mathrm{DistMonCat}^{{}^{hq}}$}
    \ar[d]
    \\
    &
    \mathrm{Cat}
    \ar[from=r]
    &
    \mathrm{CoCartCat}^{{}^{hq}}
    \\[-20pt]
    \mathcal{C}
    \ar[rrr, phantom, "{\xmapsfrom{\qquad \qquad \qquad \qquad \qquad}}"]
    &&&
    (\mathcal{C}, \sqcup^{{}^{hq}})
    \mathrlap{\,,}
  \end{tikzcd}
\end{equation}
where now $\mathrm{CoCartCat}^{{}^{hq}}$ is from Def. \ref{HomotopyQuasiCoproducts} and $\mathrm{DistMonCat}^{{}^{hq}}$ from Def. \ref{HomotopyQuasiDistributiveCategories}.
\end{definition}

Now we are ready to state and prove the first homotopy-theoretic generalization of the pushout theorem \ref{ThePushoutTheoremOverSets}:

\begin{theorem}[Pushout-characterization of the external tensor product on local systems]
\label{PushoutCharacterizationOfExternalTensorProductOnLocalSystems}
The following is a pushout diagram in $\mathrm{AnyMonCat}^{{}^{hq}}$ (Def. \ref{AnyMonCanHQ}), where the structure on the right is from Prop. \ref{ExternalTensorProductOnLocalSystems}:
\vspace{-2mm} 
\begin{equation}
  \label{ThePushoutDiagramForLocalSystems}
  \begin{tikzcd}
   \big(
     \mathrm{Mod}_{\mathbb{C}}
     ,\,
     \otimes
   \big)
   \ar[rr, "{ \iota }"]
   \ar[
     drr, 
     phantom, 
     "{
       \scalebox{.7}{\rm{(po)}}
     }"
   ]
   &&
  \colorbox{lightgray}{$ \big(
     \mathrm{Loc}_{\mathbb{C}}
     ,\,
     \sqcup^{{}^{hq}}
     ,\,
     \boxtimes
   \big)
   $}
   \\
   \mathrm{Mod}_{\mathbb{C}}
   \ar[
     rr,
     "{
       \iota
     }"
   ]
   \ar[u]
   &&
   \big(
     \mathrm{Loc}_{\mathbb{C}}
     ,\,
     \sqcup^{{}^{hq}}
   \big)
   \ar[u]
  \end{tikzcd}
\end{equation}
\end{theorem}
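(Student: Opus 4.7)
The strategy is to repeat the pattern of proof of Theorem~\ref{ThePushoutTheoremOverSets}, with ordinary coproducts replaced throughout by homotopy quasi-coproducts (Def.~\ref{HomotopyQuasiCoproducts}) and with Prop.~\ref{CharacterizationOfExternalTensorProduct} replaced by the decomposition lemmas \ref{EveryLocalSystemIsCoproductOfGrpdTensoringOfGroupRepresentation} and \ref{EveryGroupRepresentationILocalSystemsIsQuasiColimit}. Concretely, given any cocone
$$
  \begin{tikzcd}[row sep=small]
    && &
    \big(\mathcal{C},\,\sqcup^{{}^{hq}},\,\otimes_{\mathcal{C}}\big)
    \\
    \big(\mathrm{Mod}_{\mathbb{C}},\,\otimes\big)
    \ar[rr, "{ \iota }"]
    \ar[urrr, bend left=10]
    &&
    \big(\mathrm{Loc}_{\mathbb{C}},\,\sqcup^{{}^{hq}},\,\boxtimes\big)
    \ar[ur, dashed, "{ F }"{description}]
    \\
    \mathrm{Mod}_{\mathbb{C}}
    \ar[rr, "{ \iota }"]
    \ar[u]
    &&
    \big(\mathrm{Loc}_{\mathbb{C}},\,\sqcup^{{}^{hq}}\big)
    \ar[u]
    \ar[uur, bend right=14, "{ F }"{swap}]
  \end{tikzcd}
$$
in $\mathrm{AnyMonCat}^{{}^{hq}}$ (Def.~\ref{AnyMonCanHQ}), observe as in the proof of Thm.~\ref{ThePushoutTheoremOverSets} that the underlying functors of the vertical morphisms are identities, so the dashed fill-in, if it exists, must have underlying functor equal to the bottom-right morphism $F$. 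It remains only to equip $F$ with a natural strong-monoidal structure with respect to $\boxtimes$ on the source and $\otimes_{\mathcal{C}}$ on the target, and this is forced to be uniquely determined on constant local systems by the assumption that $F \circ \iota$ is strong monoidal.

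The key step is then to propagate this strong-monoidal structure from constant local systems to all of $\mathrm{Loc}_{\mathbb{C}}$ using the two decomposition lemmas. Given $\mathscr{V}_{\mathcal{X}},\mathscr{W}_{\mathcal{Y}} \in \mathrm{Loc}_{\mathbb{C}}$, Lem.~\ref{EveryLocalSystemIsCoproductOfGrpdTensoringOfGroupRepresentation} writes each as a set-indexed coproduct of $\mathrm{Grpd}$-tensorings of group representations $\mathscr{V}^i_{\mathbf{B}G_i}$ and $\mathscr{W}^j_{\mathbf{B}H_j}$; each of these, in turn, is by Lem.~\ref{EveryGroupRepresentationILocalSystemsIsQuasiColimit} the homotopy quasi-coproduct over $\mathbf{B}G_i$ (resp.\ $\mathbf{B}H_j$) of a constant local system $\iota(\mathscr{V}^i)$ (resp.\ $\iota(\mathscr{W}^j)$). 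Since by Prop.~\ref{ExternalTensorProductOnLocalSystems} the external tensor product $\boxtimes$ is homotopy quasi-distributive --- i.e.\ it commutes with set-indexed coproducts, with $\mathrm{Grpd}$-tensorings \eqref{GrpdTensoringOnLocalSystems}, and with the Borel constructions that constitute the quasi-coproducts~\eqref{HomotopyQuasiCoproductDiagram} in each variable --- the expression $\mathscr{V}_{\mathcal{X}} \boxtimes \mathscr{W}_{\mathcal{Y}}$ is naturally isomorphic to an iterated homotopy quasi-coproduct of external tensor products of constant local systems, i.e.\ of terms of the form $\iota(\mathscr{V}^i) \boxtimes \iota(\mathscr{W}^j) \,\simeq\, \iota(\mathscr{V}^i \otimes \mathscr{W}^j)$ (using \eqref{InclusionOfPointSystemsStrongMonoidal}).

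Applying $F$ to this decomposition and using that, by the cocone hypothesis, $F$ preserves set-coproducts, $\mathrm{Grpd}$-tensorings and Borel-construction colimits, we may pull $F$ past all the structural colimits; on the resulting constant summands the strong monoidality of $F \circ \iota$ converts $\otimes$ to $\otimes_{\mathcal{C}}$. Finally, using the homotopy quasi-distributivity of $\otimes_{\mathcal{C}}$ in $\mathcal{C}$ in each variable, we push the (now $\otimes_{\mathcal{C}}$) product back outside the iterated colimit, recognizing the result as $F(\mathscr{V}_{\mathcal{X}}) \otimes_{\mathcal{C}} F(\mathscr{W}_{\mathcal{Y}})$. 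The resulting chain of natural isomorphisms is the sought strong-monoidal structure on $F$; its coherence with the associator and unitor follows by naturality from the corresponding coherences already present on $F \circ \iota$ and from the uniqueness of the colimit-decomposition.

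The main obstacle is bookkeeping: one has to verify that the comparison maps supplied at each stage (free-coproduct decomposition of the underlying groupoid, Borel presentation of each connected component, quasi-distributivity of $\boxtimes$ and of $\otimes_{\mathcal{C}}$, preservation of these colimits by $F$) splice together into a \emph{single} natural isomorphism $F(\mathscr{V}_{\mathcal{X}} \boxtimes \mathscr{W}_{\mathcal{Y}}) \simeq F(\mathscr{V}_{\mathcal{X}}) \otimes_{\mathcal{C}} F(\mathscr{W}_{\mathcal{Y}})$ that is independent of the choices of basepoints made in Lem.~\ref{ConnectedGroupoidIsomorphicToDeloopingTimesCodiscrete}; this independence is guaranteed by the universal property of the respective colimits but should be made explicit to complete the verification.
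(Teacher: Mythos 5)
Your proposal is correct and follows essentially the same route as the paper's own proof: reduce to showing that the underlying functor $F$ intertwines $\boxtimes$ with $\otimes_{\mathcal{C}}$, then establish this by decomposing both arguments via Lemmas \ref{EveryLocalSystemIsCoproductOfGrpdTensoringOfGroupRepresentation} and \ref{EveryGroupRepresentationILocalSystemsIsQuasiColimit} into iterated homotopy quasi-coproducts of constant local systems, and commuting $F$, $\boxtimes$ and $\otimes_{\mathcal{C}}$ past all these colimits using quasi-distributivity and the strong monoidality of $F \circ \iota$. The paper writes out this chain of natural isomorphisms explicitly but does not dwell on the basepoint-independence/coherence bookkeeping you flag at the end, so your closing remark is a fair (if unverified in detail) addendum rather than a divergence.
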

\begin{proof}
As before in the proof of Thm. \ref{ThePushoutTheoremOverSets}, we demonstrate the unique existence of a dashed arrow, 
now in $\mathrm{AnyMonCat}^{{}^{hq}}$, given a solid diagram as shown here:
\vspace{-2mm}
\begin{equation}
  \label{CoconeDiagramTowardsProvingPushutForLocalSystems}
  \begin{tikzcd}
   &&&
   \big(
     \mathcal{C}
     ,\,
     \sqcup^{{}^{hq}}
     ,\,
     \otimes_{{}_{\mathcal{C}}}
   \big)
   \\
   \big(
     \mathrm{Mod}_{\mathbb{C}}
     ,\,
     \otimes
   \big)
   \ar[
     urrr,
     bend left=15,
     "{ F \circ \iota }"
   ]
   \ar[rr, "{ \iota }"]
   &&
   \big(
     \mathrm{Loc}_{\mathbb{C}}
     ,\,
     \sqcup^{{}^{hq}}
     ,\,
     \boxtimes
   \big)
   \ar[
     ur,
     dashed,
   ]
   \\
   \mathrm{Mod}_{\mathbb{C}}
   \ar[
     rr,
     "{
       \iota
     }"
   ]
   \ar[u]
   &&
   \big(
     \mathrm{Loc}_{\mathbb{C}}
     ,\,
     \sqcup^{{}^{hq}}
   \big)
   \ar[u]
   \ar[
     uur,
     bend right=20,
     "{ F }"{swap}
   ]
  \end{tikzcd}
\end{equation}
\vspace{-2mm} 

\noindent 
The proof proceeds along the same  lines as before in Thm. \ref{ThePushoutTheoremOverSets}, now using the stronger 
homotopy quasi-distributivity property to factor out the richer homotopical quasi-coproduct structure of the objects. 
Namely, we need to see that the given functor $F$ already intertwines the external tensor product on $\mathrm{Loc}_{\mathcal{C}}$ 
with the tensor product on $\mathcal{C}$, and this is obtained by the following sequence of natural isomorphisms:
\vspace{-5mm} 
$$
\adjustbox{scale=.95}{$
    \def\arraystretch{2.2}
    \begin{array}{ll}
      F
      \big(
      \mathscr{V}_{\!\!\mathcal{X}}
      \boxtimes
      \mathscr{W}_{\!\!\mathcal{Y}}      
      \big)
      \\
      \;\simeq\;
      F
      \bigg(\!
      \Big(
      \coprod_i
      \,
      \mathrm{CoDisc}
      \big(
        X_i
      \big)
      \cdot
      \big(
        \iota_{x_i}^\ast\mathscr{V}
      \big)_{\mathbf{B}G_i}
      \Big)
      \boxtimes
      \Big(
      \coprod_{j}
      \,
      \mathrm{CoDisc}
      \big(
        Y_j
      \big)
      \cdot
      \big(
        \iota_{y_j}^\ast\mathscr{W}
      \big)_{\mathbf{B}G_{j}}
      \Big)
      \!\bigg)
      &
      \proofstep{
        by Lem. \ref{EveryLocalSystemIsCoproductOfGrpdTensoringOfGroupRepresentation}
      }
      \\
      \;\simeq\;
      F
      \bigg(\!
      \Big(
      \coprod_i
      \,
      \mathrm{CoDisc}
      \big(
        X_i
      \big)
      \cdot
      \underset{\underset{\mathbf{B}G_i}{\longrightarrow}}{\lim}
      \big(
        (\mathbf{E}G_i) 
        \cdot
        \mathscr{V}_{x_i}
      \big)
      \Big)
      \boxtimes
      \Big(
      \coprod_{j}
      \,
      \mathrm{CoDisc}
      \big(
        Y_j
      \big)
      \cdot
      \underset{\underset{\mathbf{B}G_i}{\longrightarrow}}{\lim}
      \big(
        (\mathbf{E}G_j) 
        \cdot
        \mathscr{W}_{y_j}
      \big)
      \Big)
      \!\bigg)
      &
      \proofstep{
        by Lem. \ref{EveryGroupRepresentationILocalSystemsIsQuasiColimit}
      }
      \\
      \;\simeq\;
      F
      \bigg(\!
      \Big(
      \coprod_{i,j}
      \,
      \mathrm{CoDisc}
      \big(
        X_i
        \times 
        Y_j
      \big)
      \cdot
      \underset{
        \underset{
          \mathclap{
          \mathbf{B}(G_i \times G_j)
          }
        }{\longrightarrow}
      }{\lim}
      \Big(
        \mathbf{E}(G_i \times G_j)
        \cdot
        \big(
          \iota(\mathscr{V}_{x_i})
          \boxtimes
          \iota(\mathscr{V}_{y_j})
        \big)
      \Big)
      \!\bigg)
      &
      \proofstep{
        by Prop. \ref{ExternalTensorProductOnLocalSystems}
      }
      \\
      \;\simeq\;
      F
      \bigg(\!
      \Big(
      \coprod_{i,j}
      \,
      \mathrm{CoDisc}
      \big(
        X_i
        \times 
        Y_j
      \big)
      \cdot
      \underset{
        \underset{
          \mathclap{
          \mathbf{B}(G_i \times G_j)
          }
        }{\longrightarrow}
      }{\lim}
      \Big(
        \mathbf{E}(G_i \times G_j)
        \cdot
        \iota
        \big(
          \mathscr{V}_{x_i}
          \otimes
          \mathscr{W}_{x_i}
        \big)
      \Big)
      \!\bigg)
      &
      \proofstep{
        by 
        \eqref{InclusionOfPointSystemsStrongMonoidal}
      }
      \\
      \;\simeq\;
      \coprod_{i,j}
      \,
      \mathrm{CoDisc}
      \big(
        X_i
        \times 
        Y_j
      \big)
      \cdot
      \underset{
        \underset{
          \mathclap{
          \mathbf{B}(G_i \times G_j)
          }
        }{\longrightarrow}
      }{\lim}
      \;
      \bigg(
        \mathbf{E}(G_i \times G_j)
        \cdot
        F
        \Big(
        \iota
        \big(
          \mathscr{V}_{x_i}
          \otimes
          \mathscr{W}_{x_i}
        \big)
        \Big)
      \!\bigg)
      &
      \proofstep{
        \hspace{-12pt}
        \def\arraystretch{.9}
        \begin{tabular}{l}
        as $F$ preserves
        \\
        hq-coproducts
        \eqref{CoconeDiagramTowardsProvingPushutForLocalSystems}
        \end{tabular}
      }
      \\
      \;\simeq\;
      \coprod_{i,j}
      \,
      \mathrm{CoDisc}
      \big(
        X_i
        \times 
        Y_j
      \big)
      \cdot
      \underset{
        \underset{
          \mathclap{
          \mathbf{B}(G_i \times G_j)
          }
        }{\longrightarrow}
      }{\lim}
      \;
      \bigg(
        \mathbf{E}(G_i \times G_j)
        \cdot
        F
        \Big(
        \iota
        \big(
          \mathscr{V}_{x_i}
        \big)
        \Big)
          \otimes_{\mathcal{C}}
        F
        \Big(
        \iota
        \big(         
          \mathscr{W}_{x_i}
        \big)
        \Big)
      \!\bigg)
      &
      \proofstep{
        \hspace{-12pt}
        \def\arraystretch{.9}
        \begin{tabular}{l}
        as $F \circ \iota$ preserves
        \\
        tensor products
        \eqref{CoconeDiagramTowardsProvingPushutForLocalSystems}
        \end{tabular}
      }
    \\
    \;\simeq\;
    \bigg(
    \coprod_i
    \mathrm{CoDisc}(X_i)
    \cdot
    \underset{\underset{\mathbf{B}G_i}{\longrightarrow}}{\lim} 
    \Big(
      \mathbf{E}G_i
      \cdot
      F
      \big(
      \iota
      (\mathscr{V}_{x_i})
      \big)
    \Big)
   \! \bigg)
    \otimes_{\mathcal{C}}
    \bigg(
    \coprod_j
    \mathrm{CoDisc}(Y_j)
    \cdot
    \underset{\underset{\mathbf{B}G_j}{\longrightarrow}}{\lim} 
    \Big(
      \mathbf{E}G_j
      \cdot
      F
      \big(
      \iota
      (\mathscr{W}_{x_j})
      \big)
    \Big)
   \! \bigg)
    &
    \proofstep{
      \hspace{-12pt}
      \def\arraystretch{.9}
      \begin{tabular}{l}
        as $\otimes_{\mathcal{C}}$ preserves
        \\
        hq-coproducts
        \eqref{CoconeDiagramTowardsProvingPushutForLocalSystems}
        \end{tabular}
    }
    \\
    \;\simeq\;
    F
    \Big(
    \coprod_i
    \mathrm{CoDisc}(X_i)
    \cdot
    \underset{\underset{\mathbf{B}G_i}{\longrightarrow}}{\lim} 
    \big(
      \mathbf{E}G_i
      \cdot
      \iota
      (\mathscr{V}_{x_i})
    \big)
    \Big)
    \otimes_{\mathcal{C}}
    F
    \Big(
    \coprod_j
    \mathrm{CoDisc}(Y_j)
    \cdot
    \underset{\underset{\mathbf{B}G_j}{\longrightarrow}}{\lim} 
    \big(
      \mathbf{E}G_j
      \cdot
      \iota
      (\mathscr{W}_{x_j})
    \big)
    \Big)
    &
    \proofstep{
      \hspace{-12pt}
      \def\arraystretch{.9}
      \begin{tabular}{l}
        as $F$ preserves
        \\
        hq-coproducts
        \eqref{CoconeDiagramTowardsProvingPushutForLocalSystems}
        \end{tabular}
    }
    \\
    \;\simeq\;
    F\big(
      \mathscr{V}_{\mathcal{X}}
    \big)
    \otimes_{\mathcal{C}}
    F\big(
      \mathscr{W}_{\mathcal{Y}}
    \big)
    &
      \proofstep{
        by Prop. \ref{ExternalTensorProductOnLocalSystems}.
      }
  \end{array}
$}
$$

\vspace{-.8cm}
\end{proof}

\section{Conclusion and Outlook}
\label{ConclusionAndOutlook}

Our Thm. \ref{PushoutCharacterizationOfExternalTensorProductOnLocalSystems} provides an answer to the question \eqref{TheSoughtAfterPushout} 
for flat vector bundles over general parameter base spaces, thereby making contact to the categorical formulation of parameterized quantum processes that is recently being used in the functional quantum programming literature (cf. \cite{QS}). 

But there is a yet more general and powerful form of this theorem, which however requires mathematical technology beyond the scope of this note. Namely, flat vector bundles are only sensitive to the homotopy 1-type of their base space, hence to its fundamental group. This is the reason that the theory of ordinary local systems can be described entirely in ordinary category/groupoid theory, 
as we have done in this subsection. But more generally we want to generalize further to \emph{higher} flat vector bundles (\emph{flat $\infty$-vector bundles}) hence to \emph{higher} local systems (\emph{$\infty$-local systems}) which are sensitive to 
the full homotopy type of the underlying parameter spaces. This is relevant for describing more fine-grained aspects of topological quantum processes, as discussed in \cite{TQP,Reality}. 

\medskip

But doing so requires the larger toolbox of ``homotopical category theory'' and for the result to be tractable in practice we need strong tools 
from \emph{simplicial model category theory}. Using these, we can give a discussion of {\it simplicial local systems} which, conceptually,
closely parallels the discussion of ordinary local systems we just gave, while constituting a model for these $\infty$-local systems. While this is well beyond the scope of the present note, we have laid out the relevant discussion in \cite{SS26-Global}.

Using the main theorem there, our Thm. \ref{PushoutCharacterizationOfExternalTensorProductOnLocalSystems} has a straightforward generalization, providing a comprehensive answer to equation \eqref{TheSoughtAfterPushout} in full homotopy theory.

\section{Appendix: Some definitions and facts}
\label{SomeDefinitionsAndFacts}

For reference in the main text, we record some basic facts from the literature
and highlight some immediate examples that we use in the main text.

\medskip

\noindent
{\bf Categories, groupoids and simplicial enrichment.} We use basic concepts from category theory (e.g. \cite{MacLane97}) and enriched
category theory (e.g. \cite{Kelly82}).

\begin{definition}[Categories and groupoids]
 \label{GroupoidsAndCategories}
 
With respect to any fixed Grothendieck universe $\mathfrak{U}$ of sets \cite[\S 3.2]{Schubert72} of which we assume at least two 
$\mathfrak{U} < \mathfrak{U}'$, cf. e.g. \cite[p. 4]{Levy18}\cite[p. 18]{Shulman08}:
 
We write
 \vspace{-2mm} 
 \begin{equation}
   \label{LocalizationAndCore}
   \begin{tikzcd}[column sep=large]
     \mathrm{Grpd}
     \ar[from=r, shift right=12pt, "{ \mathrm{Loc} }"{swap}]
     \ar[r, hook]
     \ar[from=r, shift left=12pt, "{ \mathrm{core} }"]
     \ar[r, phantom, shift left=6pt, "{ \scalebox{.7}{$\bot$} }"]
     \ar[r, phantom, shift right=6pt, "{ \scalebox{.7}{$\bot$} }"]
     &
     \mathrm{Cat}
   \end{tikzcd}
 \end{equation}
 \vspace{-2mm} 

\noindent for the full inclusion of the 1-category of $\mathfrak{U}$-small groupoids into the 1-category of $\mathfrak{U}$-small categories (e.g. \cite[\S 3]{Schubert72}), with left adjoint $\mathrm{Loc}$ being the {\it localization}-construction that universally inverts all morphisms \cite[\S 1.5.4]{GabrielZisman67}.

(The $\mathfrak{U}'$-small categories are called {\it $\mathfrak{U}$-large}, whence $\mathrm{Cat}$ in this case is ``very large'' \cite[p. 18]{Shulman08}.)
\end{definition}

\noindent
{\bf Pseudofunctors and the Grothendieck construction.}
Given a ``coherent system of categories and functors`` -- namely a pseudo-functorial diagram of categories, Def. \ref{Pseudofunctor} 
below -- the {\it Grothendieck construction} (Def. \ref{GrothendieckConstruction} below) is the natural way of merging this data into 
a single category whose morphisms subsume those of the individual categories but also transfers from one category to the other along one of the given functors.

\begin{definition}[Pseudofunctor {\cite[\S A.1]{Grothendieck60}, cf. \cite[Def. 3.10]{Vistoli05}}] 
  \label{Pseudofunctor}
  $\,$ \newline 
 \noindent {\bf (i)} For $\mathcal{B}$ a category, a {\it covariant pseudofunctor} to $\mathrm{Cat}$
 \vspace{-2mm} 
  \begin{equation}
    \label{CovariantPseudofunctor}
    \begin{tikzcd}[row sep=-5pt, column sep=20pt]
      \mbox{\bf C}_{(-)}
      \;\colon\;
      &
      \mathcal{B}
      \ar[rr]
      &&
      \mathrm{Cat}
      \\
      &
      X_1 
      \ar[d, "{f}"]
      &\longmapsto&
      \mbox{\bf C}_{X_1}
      \ar[d, "{f_!}"]
      \\[+20pt]
      &
      X_2 
      &\longmapsto& 
      \mbox{\bf C}_{X_2}
    \end{tikzcd}
  \end{equation}

  \vspace{-2mm} 
\noindent  is an assignment that sends 
  \begin{itemize} 
 \item  each object $B \in \mathrm{Obj}(\mathcal{B})$ to a category $\mathbf{C}_B$, 
 \item  each morphism $f\colon X_0 \to X_1$ to a functor $f_! \,\colon\,\mathbf{C}_{X_0} \to \mathbf{C}_{X_1}$, 
 \item  each pair of composable morphisms $X_0 \overset{f_{01}}{\to} X_1 \overset{f_{12}}{\to} X_2$ to a natural 
  isomorphism $(f_{12})_! \circ (f_{01})_! \Rightarrow (f_{12} \circ f_{01})_!$ 
  \vspace{-2mm} 
  \begin{equation}
    \label{CompositionalCoherenceIsomorphism}
   \begin{tikzcd}
     & X_1
     \ar[
       dr, 
       "{ f_{12} }",
       "{\ }"{swap, pos=.1, name=s}
     ]
     \\
     X_0
     \ar[ur, "{f_{01}}"]
     \ar[
       rr, 
       "{ f_{02} }"{swap},
       "{\ }"{name=t}
      ]
     &&
     X_2
     \\
     \ar[from=s, to=t, shorten=3pt, equals]
   \end{tikzcd}
   \quad 
   \longmapsto
   \quad 
   \begin{tikzcd}
     & 
     \mathbf{C}_{X_1}
     \ar[
       dr, 
       "{ (f_{12})_! }",
       "{\ }"{swap, pos=.1, name=s}
     ]
     \\
     \mathbf{C}_{X_0}
     \ar[ur, "{ (f_{01})_! }"]
     \ar[
       rr, 
       "{ (f_{02})_! }"{swap},
       "{\ }"{name=t}
      ]
     &&
     \mathbf{C}_{X_2}
     \\
     \ar[
       from=s, 
       to=t, 
       shift left=10pt,
       shorten=3pt, 
       Rightarrow, 
       "{ \mu_{f_{01}, f_{12}} }"{swap},
       "{\sim}"{sloped, swap }
     ]
   \end{tikzcd}
  \end{equation}

  \vspace{-1.1cm} 
 \item  and, finally, each 
  identity morphism $\mathrm{id}_X : X \to X$ to a natural isomorphism $(\mathrm{id}_X)_! \Rightarrow \mathrm{id}_{\mathbf{C}_X}$ 
  \vspace{-2mm} 
  $$
    \begin{tikzcd}
      X
      \ar[rr, "{ \mathrm{id}_X }"]
      &&
      X
    \end{tikzcd}
    \;\;\;\;
    \mapsto
    \;\;\;\;
    \begin{tikzcd}[row sep=small]
      \mathbf{C}_X
      \ar[
        rr, 
        bend left=50, 
        "{ (\mathrm{id}_B)_! }",
        "{\ }"{name=s, swap}
      ]
      \ar[
        rr, 
        bend left=00, 
        "{\ }"{name=t},
        "{ \mathrm{id}_{\mathbf{C}_X} }"{swap}
      ]
      &&
      \mathbf{C}_X
      \ar[from=s, to=t, Rightarrow, "{\sim}"{sloped}]
    \end{tikzcd}    
  $$

  \vspace{-2mm} 
\noindent  such that these natural isomorphisms satisfy evident associativity and unitality coherences.
\end{itemize} 
 \noindent {\bf (ii)}   Similarly, a contravariant pseudofunctor is such a pseudofunctor on the opposite category $\mathcal{B}^{\mathrm{op}}$.
 \vspace{-2mm} 
  \begin{equation}
    \label{ContravariantPseudofunctor}
    \begin{tikzcd}[row sep=-5pt, column sep=20pt]
      \mbox{\bf C}_{(-)}
      \;\colon\;
      &
      \mathcal{B}^{\mathrm{op}}
      \ar[rr]
      &&
      \mathrm{Cat}
      \\
      &
      X_1 
      \ar[d, "{f}"]
      &\longmapsto&
      \mbox{\bf C}_{X_1}
      \ar[from=d, "{f^\ast}"]
      \\[+20pt]
      &
      X_2 
      &\longmapsto& 
      \mbox{\bf C}_{X_2}
    \end{tikzcd}
  \end{equation}
\end{definition}

\begin{definition}[Grothendieck construction {\cite[\S VI.8]{Grothendieck71}, cf.  \cite[\S 3.1.3]{Vistoli05}}]
  \label{GrothendieckConstruction}
 $\,$  
  
\noindent {\bf (i)}  The {\it Grothendieck construction} on a covariant pseudofunctor
$\mathbf{C}_{(-)} : \mathcal{B} \longrightarrow \mathrm{Cat}$ 
\eqref{CovariantPseudofunctor} 
is the category $\int_{X \in\mathcal{B}} \mathbf{C}_X$ whose 

  \begin{itemize}
    \item objects $\mathscr{V}_X$ are pairs $(X, \mathscr{V})$ with $X \in \mathrm{Obj}(\mathcal{B})$ and $\mathscr{V} \,\in\, \mathrm{Obj}(\mathbf{C}_{X})$,
    \item morphisms $\phi_f \,\colon\, \mathscr{V}_X \longrightarrow \mathscr{W}_{\mathrm{Y}}$ are pairs $(f,\phi)$ with $f : X \longrightarrow Y$ in $\mathcal{B}$ and 
    \fbox{$\phi \,:\, f_! \mathscr{V} \longrightarrow \mathscr{W}$ in $\mathbf{C}_{Y}$}\,,
  \end{itemize}
  hence the hom-sets of the 
  covariant Grothendieck construction
  are these dependent products:
  \begin{equation}
    \label{HomSetsOfCovariantGrothendieckConstruction}
    \Big(
    \int_{\mathcal{B}}
    \mathbf{C}
    \Big)
    \Big(
      \mathscr{V}_X
      ,\,
      \mathscr{W}_Y      
    \Big)
    \;\;:\defneq\;\;
    \big(
      f \in \mathcal{B}(X,Y)
    \big)
    \times
    \mathbf{C}_Y
    \big(
      f_! X
      ,\,
      Y
    \big).
  \end{equation}

\vspace{-1mm} 
\noindent {\bf (ii)}
  Dually, the  {\it Grothendieck construction} on a contra-variant pseudofunctor $\mathbf{C}_{(-)} : \mathcal{B}^{\mathrm{op}} \longrightarrow \mathrm{Cat}$
  \eqref{ContravariantPseudofunctor} is the category $\int_{X \in\mathcal{B}} \mathbf{C}_X$ whose 

  \begin{itemize}
    \item objects $\mathscr{V}_X$ are pairs $(X, \mathscr{V})$ with $X \in \mathrm{Obj}(\mathcal{B})$ and $\mathscr{V} \,\in\, \mathrm{Obj}(\mathbf{C}_{X})$,
    \item morphisms $\phi_f \,\colon\, \mathscr{V}_X \longrightarrow \mathscr{W}_{Y}$ are pairs $(f,\phi)$ with $f : X \longrightarrow Y$ in $\mathcal{B}$ 
    and \fbox{$\phi \,:\, \mathscr{V} \longrightarrow f^\ast \mathscr{W}$ in $\mathbf{C}_{X}$}\,,
  \end{itemize}
  hence the hom-sets of the 
  contravariant Grothendieck construction
  are these dependent products:
  \begin{equation}
    \label{HomSetsOfContravariantGrothendieckConstruction}
    \Big(
    \int_{\mathcal{B}}
    \mathbf{C}
    \Big)
    \Big(
      \mathscr{V}_X
      ,\,
      \mathscr{W}_Y      
    \Big)
    \;\;:\defneq\;\;
    \big(
      f \in \mathcal{B}(X,Y)
    \big)
    \times
    \mathbf{C}_Y
    \big(
      X
      ,\,
      f^\ast Y
    \big)
    \,.
  \end{equation}

\vspace{-3mm} 
\noindent {\bf (iii)} Finally, composition of morphisms 
$\!\!
  \begin{tikzcd}[sep=15pt]
    \mathscr{V}_X
    \ar[r, "{ \phi_f }"]
    &
    \mathscr{W}_{Y}
    \ar[r, "{ \psi_g }"]
    &
    \mathscr{R}_{Z}
  \end{tikzcd}
\!\!$
in the Grothendieck construction is defined by using the  pseudo-functoriality of $\mathbf{C}_{(-)}$ 
to coherently push (or pull) morphisms into the codomain or domain category:
$$
  \psi_g \circ \phi_f
  \;:=\;
  \big( 
    \psi 
      \,\circ\, 
    g_!(\phi) 
      \,\circ\,
    \mu_{f,g}(\mathscr{V})
  \big)_{g \circ f}
  \hspace{1cm}
  \mbox{or}
  \hspace{1cm}
  \psi_g \circ \phi_f
  \;:=\;
  \big(
    \mu_{f,g}(\mathscr{R})
      \,\circ\,
    g^\ast(\psi) 
      \,\circ\, 
    \phi 
  \big)_{g \circ f}
  \,.
$$
Here one is using the coherence isomorphisms \eqref{CompositionalCoherenceIsomorphism} 
to adjust for the identification of composite functors:
\vspace{-1mm} 
$$
\hspace{-2mm} 
  \begin{tikzcd}[sep=20pt]
    (g \,\circ\, f )_!(\mathscr{V})
    \ar[
      r, 
      "{ \mu_{f,g} }"{yshift=1pt}, 
      "{\sim}"{swap}
    ]
    &
    g_!\big(f_! (\mathscr{V}) \big)
    \ar[
      r,
      "{ g_!(\phi) }"
    ]
    &
    g_!\big( \mathscr{W} \big)
    \ar[r, "{ \psi }"]
    &
    \mathscr{R}
  \end{tikzcd}
  \hspace{.4cm}
  \mbox{or}
  \hspace{.4cm}
  \begin{tikzcd}[sep=20pt]
    \mathscr{V}
    \ar[
      r,
      "{ \phi }"
    ]
    &
    f^\ast\big( \mathscr{W} \big)
    \ar[
      r, 
      "{ f^\ast(\psi) }"
    ]
    &
    f^\ast
    \big(
      g^\ast(
        \mathscr{R}
      )
    \big)
    \ar[
      r,
      "{
        \mu_{f,g}
      }"{yshift=1pt}
    ]
    &
    (g \,\circ\, f)^\ast
    \mathscr{R}\;.
  \end{tikzcd}
$$
\end{definition}
\begin{remark}[Grothendieck fibration]
A key aspect of the Grothendieck construction is that it is a {\it fibered category} over the original diagram shape, 
and as such an equivalent incarnation of the pseudo-functor that induced it. While important, here we do not need this 
aspect and will regard the Grothendieck construction as a plain category, this being the domain category of 
the corresponding Grothendieck fibration.
\end{remark}

\begin{example}[Categories of indexed sets of objects {\cite[\S 3]{Benabou85}}, Free coproduct completion {\cite[\S 2]{HT95}}]
\label{CategoriesOfIndexedSetsOfObjects}
$\,$
\newline
  For $\mathcal{C}$ any category,  there is the contravariant pseudofunctor (Def. \ref{Pseudofunctor}) on $\mathrm{Set}$ 
  which to a set $S$ assigns the $S$-fold product category of $\mathcal{C}$ with itself:
  \vspace{-2mm} 
  \begin{equation}
    \label{PseudofunctorOfProductCategories}
    \hspace{-1cm}
    \mathcal{C} \,\in\, \mathrm{Cat}
    \hspace{1cm}
      \vdash
    \hspace{1cm}
    \begin{tikzcd}[row sep=-6pt, column sep=small]
      \mathrm{Set}^{\mathrm{op}}
      \ar[rr]
      &&
      \mathrm{Cat}
      \\[3pt]
      S
      \ar[
        dd,
        "{ f }"
      ]
      &\longmapsto&
      \mathrm{Func}(S,\,\mathcal{C})
      \ar[
        from=dd,
        "{ f^\ast }"{swap}
      ]
      \ar[r, phantom, "{ \defneq }"]
      &
      \mathcal{C}^S
      \;\simeq\;
      \underset{s \in S}{\prod}
      \mathcal{C}
      \\[+20pt]
      \\
      T &\longmapsto&
      \mathrm{Func}(T,\,\mathcal{C})
      \ar[r, phantom, "{ \defneq }"]
      &
      \mathcal{C}^T
      \;\simeq\;
      \underset{t \in T}{\prod}
      \mathcal{C}
      \,.
    \end{tikzcd}
  \end{equation}

  \vspace{-3mm} 
  \noindent  
  equivalently, the 
  functor category into $\mathcal{C}$ out of the discrete category on $S$:
  \vspace{-2mm} 
  $$
    \begin{tikzcd}[row sep=-4pt, column sep=small]
    \mathrm{Func}(
      S
      ,\,
      \mathcal{C}
    )
    \ar[rr, "{ \sim }"]
    &&
    \underset{s \in S}{\prod}
    \mathcal{C}
    \\
    (
      s 
      \,\mapsto\,
      \mathscr{V}_s
    )
    &\longmapsto&
    (
      \mathscr{V}_s
    )_{s \in S}
    \,,
   \end{tikzcd}
  $$

  \vspace{-2mm} 
\noindent  and whose base change functors are given by precomposition with, hence re-indexing by, the given map of sets:
  \vspace{-2mm} 
  $$
  \hspace{-1cm} 
    f \,:\, S \longrightarrow T
    \hspace{1cm}
      \vdash
    \hspace{1cm}
    \begin{tikzcd}[row sep=-4pt, column sep=small]
      \mathrm{Func}(S,\,\mathcal{C})
      \ar[
        from=rr,
        "{ f^\ast }"{swap}
      ]
      &&
      \mathrm{Func}(T,\,\mathcal{C})
      \\
      \big(
        \mathscr{V}_{f(s)}
      \big)_{s \in S}
      &\longmapsfrom&
      (
        \mathscr{V}_{t}
      )_{t \in T}
      \,.
    \end{tikzcd}
  $$

   \vspace{-2mm} 
\noindent
  Accordingly, the Grothendieck construction (Def. \ref{GrothendieckConstruction}) on this pseudofunctor,

  \noindent
  \fbox{$\underset{S \in \mathrm{Set}}{\int} \!  \mathcal{C}^S$} has the following description:
  \begin{itemize}[leftmargin=.5cm]
    \item objects $\mathscr{V}_{\!S}$ are dependent pairs consisting of a set $S \in \mathrm{Set}$ and an $S$-tuple 
    $\big( \mathscr{V}_s \big)_{s \in S}$ of objects in $\mathcal{C}$,

    \item morphisms $\phi_f \,:\, \mathscr{V}_S \longrightarrow \mathscr{W}_T$ are $S$-tuples 
    $
      \big(
        \phi_s
        \,:\,
        \mathscr{V}_s
        \xrightarrow{\;\;}
        \mathscr{W}_{f(s)}
      \big)_{s \in S}
    $
    of morphisms in $\mathcal{C}$.
  \end{itemize}
  Independently of whether or how $\mathcal{C}$ has co-products,  this category has set-indexed coproducts
  ${\coprod}_i
    \mathscr{V}(i)_{S_i}$ with underlying set $\coprod_i S_i$ and components
  $\big({\coprod}_i
    \mathscr{V}(i)_{S_i}\big)_{s_j} \,=\, \mathscr{V}(j)_{s_j}$ for $s_j \in S_j$.
  \medskip
  
  But if the
  category $\mathcal{C}$ is {\it extensive}, in that it already has coproducts itself and the coproduct-functors between (products of)
  slice categories are equivalences
  \vspace{-3mm} 
  $$
    S \,\in\, \mathrm{Set}
    \hspace{1cm}
      \vdash
    \hspace{1cm}
    \begin{tikzcd}[row sep=-4pt, column sep=small]
      \underset{s \in S}{\prod}
      \mathcal{C}_{/X_s}
      \ar[
        rr,
        "{ \sim }"
      ]
      &&
      \mathcal{C}_{/ \coprod_s X_s }
      \\
  \scalebox{0.85}{$    \left(\!\!
      \begin{array}{c}
        E_s
        \\
        \downarrow
        \\
        X_s
      \end{array}
     \!\! \right)_{\!\!\! s \in S}
      $}
      &\longmapsto&
   \scalebox{0.85}{$      \left(\!\!
      \begin{array}{c}
        \coprod_{\, s}
        E_s
        \\
        \downarrow
        \\
        \coprod_{\, s} X_s
      \end{array}
      \!\!\right)
      $}
    \end{tikzcd}
  $$

   \vspace{-2mm} 
\noindent
  then the construction yields the category of bundles in $\mathcal{C}$ over sets, the latter understood via the unique coproduct-preserving 
  inclusion $\iota_{\mathrm{Set}} \,\colon\, \mathrm{Set} \hookrightarrow \mathcal{C}$, hence the comma category $(\mathrm{id}_{\mathcal{C}},\,\iota_{\mathrm{Set}})$:
   \vspace{-2mm} 
$$
  \mbox{$\mathcal{C}$ extensive}
  \hspace{1cm}
    \vdash
  \hspace{1cm}
  \underset{S \in \mathrm{Set}}{\int}
  \,
  \underset{s \in S}{\prod}
  \,
  \mathcal{C}
  \;\;\;
    \simeq
  \;\;\;
  (\mathrm{id}_{\mathcal{C}},\,\iota_{\mathrm{Set}})  
$$

 \vspace{-2mm} 
\noindent
whose morphisms $\phi_f \,:\, X_S \longrightarrow Y_T$ are commuting diagrams in $\mathcal{C}$ of this form:
\vspace{-3mm} 
$$
  \begin{tikzcd}[row sep=small]
    \underset{s \in S}{\coprod}
    X_s
    \ar[rr, "\phi"]
    \ar[d]
    &&
    \underset{t \in T}{\coprod}
    Y_t
    \ar[d]
    \\
    S 
    \ar[rr, "{f}"]
      &&
    T
    \mathrlap{\,.}
  \end{tikzcd}
$$  

 \vspace{-2mm} 
\noindent
Conversely, if $\mathcal{C}$ is not extensive, then we may understand $\int_{S \in \mathrm{Set}} \mathcal{C}^S$ as the stand-in for 
the would-be category of ``$\mathcal{C}$-fiber bundles'' over sets.
\end{example}
\begin{proposition}[{\cite[Lem. 4.2]{CarboniVitale98}}]
  \label{CategoriesBeingCoproductCompletionOfTheirConnectedObjects}
  A category $\mathcal{C}$ with all set-indexed coproducts each of whose objects is a coproduct of connected objects
  is the free coproduct completion (Ex. \ref{CategoriesOfIndexedSetsOfObjects}) of its full subcategory of connected objects 
  (i.e., of those objects $X \in \mathcal{C}$ for which $\mathcal{C}(X,-) : \mathcal{C} \to \mathcal{C}$ preserves coproducts).
\end{proposition}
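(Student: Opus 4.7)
The plan is to exhibit an explicit equivalence of categories
\[
  \Phi \,:\, \int_{S \in \mathrm{Set}} \mathcal{C}_{\mathrm{conn}}^{S}
  \xrightarrow{\;\;\sim\;\;}
  \mathcal{C},
  \qquad
  (X_s)_{s \in S} \longmapsto \underset{s \in S}{\coprod} X_s,
\]
where $\mathcal{C}_{\mathrm{conn}} \hookrightarrow \mathcal{C}$ is the full subcategory of connected objects. On morphisms, $\Phi$ sends a pair $\bigl(f : S \to T,\, (\phi_s : X_s \to Y_{f(s)})_{s \in S}\bigr)$ to the map $\coprod_s X_s \to \coprod_t Y_t$ determined by the universal property of coproducts (which exist in $\mathcal{C}$ by assumption). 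Functoriality is a routine check using that composition of such morphisms in the Grothendieck construction precisely mirrors the composition of induced coproduct maps.

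Essential surjectivity of $\Phi$ is immediate from the second standing hypothesis: every object of $\mathcal{C}$ is isomorphic to a set-indexed coproduct of connected objects, hence to some $\Phi\bigl((X_s)_{s \in S}\bigr)$.

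The substantive step is full faithfulness, i.e., establishing the natural bijection
\[
  \mathcal{C}\Big( \underset{s \in S}{\coprod} X_s, \, \underset{t \in T}{\coprod} Y_t \Big)
  \;\;\simeq\;\;
  \underset{f : S \to T}{\coprod} \prod_{s \in S} \mathcal{C}\bigl(X_s, Y_{f(s)}\bigr),
\]
whose right-hand side is exactly the hom-set in the Grothendieck construction, by \eqref{HomSetsOfContravariantGrothendieckConstruction} applied to the pseudofunctor $S \mapsto \mathcal{C}_{\mathrm{conn}}^{S}$ of Ex. \ref{CategoriesOfIndexedSetsOfObjects}. By the universal property of the coproduct in the \emph{domain}, the left-hand side is $\prod_{s \in S} \mathcal{C}\bigl( X_s, \coprod_{t} Y_t \bigr)$, so the task reduces to showing that for each connected $X_s$ one has
\[
  \mathcal{C}\Big( X_s, \, \underset{t \in T}{\coprod} Y_t \Big)
  \;\;\simeq\;\;
  \underset{t \in T}{\coprod} \mathcal{C}(X_s, Y_t).
\]
But this is \emph{precisely} the defining property of $X_s$ being connected, namely that $\mathcal{C}(X_s, -)$ preserves set-indexed coproducts. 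Taking the product over $s \in S$ and then rearranging $\prod_s \coprod_t (\cdots)$ into $\coprod_{f : S \to T} \prod_s (\cdots)$ via the standard distributive law in $\mathrm{Set}$ produces exactly the required bijection; naturality in both arguments is inherited from naturality of the coproduct universal property.

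The expected main obstacle is mild: the entire proof is bookkeeping around the definition of \emph{connected object}, and the one place demanding slight care is the final distributivity rearrangement, which is what canonically identifies the "base map plus family of component morphisms" datum of the Grothendieck construction. No additional hypothesis beyond those stated is needed.
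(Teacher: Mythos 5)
Your proposal is correct and follows essentially the same route as the paper's own proof: essential surjectivity from the standing hypothesis, and full faithfulness via the chain $\mathcal{C}\big(\coprod_s X_s,\, \coprod_t Y_t\big) \simeq \prod_s \mathcal{C}\big(X_s,\, \coprod_t Y_t\big) \simeq \prod_s \coprod_t \mathcal{C}(X_s, Y_t) \simeq \coprod_{f : S \to T} \prod_s \mathcal{C}\big(X_s, Y_{f(s)}\big)$, using the coproduct universal property, connectedness, and distributivity in $\mathrm{Set}$, respectively. The only difference is presentational: you spell out the comparison functor and its functoriality explicitly, whereas the paper leaves that implicit.
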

\begin{proof}
  Since, by assumption, every object is already presented by an indexed set of connected objects, it remains to see that also the morphisms 
  $\big(\coprod_{\, s} X_s\big) \longrightarrow \big(\coprod_{\, t} Y_t\big)$ are in bijection to indexed sets of morphisms of connected 
  objects. This follows by
  $$
    \def\arraystretch{1.4}
    \begin{array}{ll}
      \mathcal{C}
      \big(
        \coprod_{\, s} X_s
        ,\,
        \coprod_{\, t} Y_t
      \big)
      &      \;\simeq\;
      \prod_{\, s} 
      \mathcal{C}
      \big(
        X_s
        ,\,
        \coprod_{\, t} Y_t
      \big)
      \\
    &  \;\simeq\;
      \underset{s \in S}{\prod} 
      \;
      \underset{t_s \in T}{\coprod}
      \mathcal{C}
      \big(
        X_s
        ,\,
        Y_{t_s}
      \big)
      \\
  &    \;\simeq\;
      \underset{f : S \to T}{\coprod}
      \;
      \underset{s \in S}{\prod}
      \mathcal{C}
      \big(
        X_s
        ,\,
        Y_{f(s)}
      \big)\,,
    \end{array}
  $$
  where the first bijection is by general properties of Hom-functors and the second is by the assumption that all $X_s$ are connected.
\end{proof}

\begin{example}[Induced adjunctions between Grothendieck constructions]
  \label{InducedAdjunctionBetweenGrothendieckConstructions}
  Given a contravariant pseudofunctor and a left adjoint functor into its domain
   \vspace{-2mm} 
  $$
    \begin{tikzcd}
      \mathcal{C}
      \ar[
        rr,
        shift left=5pt,
        "{
          L
        }"
      ]
      \ar[
        from=rr,
        shift left=5pt,
        "{
          R
        }"
      ]
      \ar[
        rr,
        phantom,
        "{ \scalebox{.7}{$\bot$} }"
      ]
      &&
      \mathcal{B}
    \end{tikzcd}
    \hspace{1cm}
    \begin{tikzcd}
      \mathcal{B}^{\mathrm{op}}
      \ar[r, "{ \mathbf{C}_{(-)} }"]
      &
      \mathrm{Cat}
    \end{tikzcd}
  $$

   \vspace{-2mm} 
\noindent
  there is an induced adjunction between the Grothendieck constructions on $\mathbf{C}_{(-)}$ and on $\mathbf{C}_{L(-)}$, covering the given adjunction:
  \begin{equation}
    \label{}
    \begin{tikzcd}
    \Big(\,
    \underset{
      c \in \mathcal{C} 
    }{\int}  
    \mathbf{C}_{L(c)}
    \Big)
    \ar[
      rr,
      shift left=5pt,
      "{ \hat L }"
    ]
    \ar[
      from=rr,
      shift left=5pt,
      "{ \hat R }"
    ]
    \ar[
      rr,
      phantom,
      "{ \scalebox{.7}{$\bot$} }"
    ]
    \ar[d]
    &&
    \Big(\,
    \underset{
      b \in \mathcal{B} 
    }{\int}  
    \mathbf{C}_{b}
    \Big)
    \ar[d]
    \\
    \mathcal{C}
    \ar[
      rr,
      shift left=5pt,
      "{
        L
      }"
    ]
    \ar[
      from=rr,
      shift left=5pt,
      "{
        R
      }"
    ]
    \ar[
      rr,
      phantom,
      "{ \scalebox{.7}{$\bot$} }"
    ]
    &&
    \mathcal{B}
    \end{tikzcd}
  \end{equation}
  where on components in $\mathbf{C}_{(-)}$ the functor $\widehat{L}$ is the identity 
  while $\widehat{R}$ is pullback along the underlying adjunction counit $\epsilon^{L \dashv R}  \,:\, L \circ R \to \mathrm{id}$:
  \vspace{-2mm} 
  \begin{equation}
    \label{LiftOfAdjunctionToGrothendieckConstruction}
    \begin{tikzcd}
      \mathscr{V}_{c}
      \ar[d, "{ \phi_f }"]
      \\
      \mathscr{V}'_{c'}
    \end{tikzcd}
    \;\;\;
    \overset{\widehat{L}}{\longmapsto}
    \;\;\;
    \begin{tikzcd}
      \mathscr{V}_{L(c)}
      \ar[d, "{ \phi_{L(f)} }"]
      \\
      \mathscr{V}'_{L(c')}
    \end{tikzcd}
    \hspace{1cm}
    \mbox{and}
    \hspace{1cm}
    \begin{tikzcd}
      \mathscr{V}_{b}
      \ar[d, "{ \phi_f }"]
      \\
      \mathscr{V}'_{b'}
    \end{tikzcd}
    \;\;\;
    \overset{\widehat{R}}{\longmapsto}
    \;\;\;
    \begin{tikzcd}
      \mathscr{V}_{R(b)}
      \ar[
        d, 
        "{ 
          (\epsilon^{L \dashv R}_{b})^\ast
          \phi_{L(f)} 
        }"
      ]
      \\
      \mathscr{V}'_{R(b')}
    \end{tikzcd}
  \end{equation}

\vspace{-2mm} 
\noindent  The counit of this adjunction is given by the identity component map covering the underlying counit:
  \begin{equation}
    \label{CounitOfLiftOfAdjunctionToGrothendieckConsturction}
    \epsilon^{\widehat{L}\dashv \, \widehat{R}}_{\mathscr{V}_{\mathbf{x}}}
    \;:\;
    \widehat{L}\, \widehat{R}
    \big(
      \mathscr{V}_{b}
    \big)
    \,=\,
    \big(
      \epsilon^{L \dashv \, R}_{L R(b)}
      \mathscr{V}
    \big)_{}.
  \end{equation}
\end{example}

\begin{proposition}[Colimits in a Grothendieck construction {\cite[\S 3.2, Thm. 2]{TBG91}\cite[Prop. 2.4.4]{HarpazPrasma15}}] 
\label{ColimitsInAGrothendieckConstruction}
$\,$
\newline
 \noindent {\bf (i)} The Grothendieck construction $\int_{X \in \mathcal{B}} \mathbf{C}_X$ (Def. \ref{GrothendieckConstruction})
 on a covariant pseudofunctor $\mathbf{C}_{(-)} : \mathcal{B} \xrightarrow{\phantom{-}} \mathrm{Cat}$ \eqref{Pseudofunctor} 
 is cocomplete as soon as the base category $\mathcal{B}$ 
  as well as all the fiber categories $\mathbf{C}_X$, $X \in \mathcal{B}$ are cocomplete.
  In this case the colimit of a small diagram
  $$
    \begin{tikzcd}[sep=0pt]
      I
      \ar[rr]
      && 
      \int_{X \in \mathcal{B}}\mathbf{C}_X
      \\
      i &\mapsto& \mathscr{V}(i)_{X_i}
    \end{tikzcd}
  $$

  \vspace{-2mm} 
\noindent  is given by
  \begin{equation}
    \label{FormulaForColimitInGrothendieckConstruction}
    \underset{\underset{i \in I}{\longrightarrow}}{\lim}
    \big(
      \mathscr{V}(i)_{X_i}
    \big)
    \;\;\;\;
    \simeq
    \;\;\;\;
    \Big(
     \underset{\longrightarrow}{\mathrm{lim}}_i
    \big(
      q(i)_! \mathscr{V}(i)
    \big)
    \Big)_{
      \underset{\longrightarrow}{\mathrm{lim}}_i  
      X_i
    }
    \;\;\;\;\;
    \in
    \;
    \int_{X \in \mathcal{B}}
    \mathbf{C}_X
    \,,
  \end{equation}
  where
  $$
    i\,\in\, I
    \hspace{1cm}
    \vdash
    \hspace{1cm}
    q(i)
    \;:\;
    X_i 
    \xrightarrow{\phantom{--}}
    \underset{\longrightarrow}{\lim}_i X_i
    \;\;\;
    \in
    \;
    \mathcal{B}
  $$
  denote the coprojections into the underlying colimit in $\mathcal{B}$.
  
  \noindent {\bf (ii)} The analogous dual statement holds for limits.
\end{proposition}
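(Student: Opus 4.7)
The plan is to verify the universal property of the stated formula by reducing cocones in $\int_X \mathbf{C}_X$ to a pair consisting of a cocone in $\mathcal{B}$ and a cocone in the fiber over its colimit.

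First I would construct the candidate colimit explicitly. Given a small diagram $D: I \to \int_X \mathbf{C}_X$ with $D(i) = \mathscr{V}(i)_{X_i}$ and transition morphisms $\phi_{f_{ij}}: \mathscr{V}(i)_{X_i} \to \mathscr{V}(j)_{X_j}$ for $f: i \to j$ in $I$, form $X_\infty := \underset{\longrightarrow}{\lim}_i X_i$ in $\mathcal{B}$ with coprojections $q(i): X_i \to X_\infty$ (existing by cocompleteness of $\mathcal{B}$). Using the coherence isomorphisms $\mu_{f_{ij},q(j)}: q(j)_! (f_{ij})_! \xrightarrow{\sim} q(i)_!$ from \eqref{CompositionalCoherenceIsomorphism} (valid since $q(j) \circ f_{ij} = q(i)$), transport $D$ into a diagram $\bar D: I \to \mathbf{C}_{X_\infty}$ given by $i \mapsto q(i)_! \mathscr{V}(i)$ whose colimit $L_\infty$, with coprojections $\lambda_i: q(i)_! \mathscr{V}(i) \to L_\infty$, exists by cocompleteness of $\mathbf{C}_{X_\infty}$. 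The candidate cocone morphisms are then $(q(i), \lambda_i): \mathscr{V}(i)_{X_i} \to (L_\infty)_{X_\infty}$, and the cocone axioms in $\int_X \mathbf{C}_X$ follow by unwinding composition in the Grothendieck construction (Def. \ref{GrothendieckConstruction}) together with the separate cocone axioms for $\{q(i)\}$ in $\mathcal{B}$ and for $\{\lambda_i\}$ in $\mathbf{C}_{X_\infty}$.

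For universality, by \eqref{HomSetsOfCovariantGrothendieckConstruction} any cocone with tip $\mathscr{W}_Y$ amounts to families $f_i: X_i \to Y$ in $\mathcal{B}$ and $\phi_i: (f_i)_! \mathscr{V}(i) \to \mathscr{W}$ in $\mathbf{C}_Y$ compatible with the diagram morphisms. The $\{f_i\}$ form a cocone in $\mathcal{B}$, so factor uniquely through a single $f_\infty: X_\infty \to Y$ with $f_\infty \circ q(i) = f_i$; applying the coherence isomorphism $\mu_{q(i),f_\infty}: (f_\infty)_! q(i)_! \xrightarrow{\sim} (f_i)_!$, each $\phi_i$ translates into a morphism $\bar\phi_i: (f_\infty)_! q(i)_! \mathscr{V}(i) \to \mathscr{W}$, and the compatibility conditions on $\{\phi_i\}$ become precisely the cocone conditions for $\{\bar\phi_i\}$ over the diagram $i \mapsto (f_\infty)_! q(i)_! \mathscr{V}(i)$ in $\mathbf{C}_Y$.

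The main obstacle is the final step: identifying such a cocone with a single morphism $\psi: (f_\infty)_! L_\infty \to \mathscr{W}$, which requires the image cocone $\{(f_\infty)_! \lambda_i\}$ to remain colimiting in $\mathbf{C}_Y$, hence $(f_\infty)_!$ to preserve the colimit $L_\infty$. Since $(f_\infty)_!$ is merely the functor supplied by the pseudofunctor and is not automatically cocontinuous, this step hinges on the additional hypothesis that $\mathbf{C}_{(-)}$ factors through cocontinuous functors---which does hold in all applications in the main text, as the base-change functors $\mathbf{f}_!$ are left Kan extensions \eqref{KanExtensionAdjointTriple} and hence left adjoints. Granting this, the universal property of $L_\infty$ in $\mathbf{C}_{X_\infty}$ yields the unique $\psi$ with $\psi \circ (f_\infty)_!(\lambda_i) = \bar\phi_i$, assembling into the unique factorization $(f_\infty, \psi): (L_\infty)_{X_\infty} \to \mathscr{W}_Y$ and establishing (i). Claim (ii) follows by the dual argument: form $X_\infty := \underset{\longleftarrow}{\lim}_i X_i$ with projections $p(i): X_\infty \to X_i$, pull the fibers back via the contravariant pseudofunctor to $p(i)^* \mathscr{V}(i) \in \mathbf{C}_{X_\infty}$, take the limit there, and use the continuity of $(f_\infty)^*$ (automatic in the applications, where $f^*$ is a right adjoint to $f_!$) to conclude.
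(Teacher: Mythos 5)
Your argument is the standard one and is essentially what the cited sources do; the paper itself offers no proof of this proposition, only the references. Your construction of the candidate cocone, the reduction of an arbitrary cocone with tip $\mathscr{W}_Y$ to the induced map $f_\infty \colon X_\infty \to Y$ together with a cocone under $i \mapsto (f_\infty)_! q(i)_! \mathscr{V}(i)$ in $\mathbf{C}_Y$, and the final appeal to the universal property of $L_\infty$ are all correct.

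The step you flag is not a defect of your proof but of the statement: the proposition as printed is false without the hypothesis that the transition functors preserve the relevant colimits. A minimal counterexample: take $\mathcal{B} = \{0 \to 1\}$, both fibers equal to $\mathrm{Set}$, and the transition functor $S \mapsto S \times S$; then the coproduct of two objects in the fiber over $0$ does not exist in the Grothendieck construction, since by Yoneda its existence would force $(A \sqcup B)^2 \cong A^2 \sqcup B^2$. The cited reference [HarpazPrasma15, Prop.\ 2.4.4] does carry the cocontinuity hypothesis, and in every instance where the paper invokes Prop.\ \ref{ColimitsInAGrothendieckConstruction} the pseudofunctor takes values in $\mathrm{Cat}_{\mathrm{adj}}$ --- the $\mathbf{f}_!$ are left Kan extensions, hence left adjoints, hence cocontinuous --- so nothing downstream is affected; dually for part (ii), where $f^\ast$ is a right adjoint and so continuous. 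Your proof is therefore complete once the tacitly intended hypothesis is added, and you were right to isolate exactly that step as the one that cannot be carried out from the stated hypotheses alone.
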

\begin{example}[External cartesian product]
  \label{ExternalCartesianProduct}
  Given a contravariant pseudofunctor $\mathbf{C}_{(-)} : \mathcal{B}^{\mathrm{op}} \xrightarrow{\phantom{-}} \mathrm{Cat}$
  such that both $\mathcal{B}$ as well as all the $\mathbf{C}_{(-)}$ have Cartesian products, then its Gorthendieck 
  construction has cartesian products given by
  \begin{equation}
    \label{FormulaForExternalCartesianProduct}
    \mathscr{V}_{X}
    \times
    \mathscr{W}_{Y}
    \;\;
    \simeq
    \;\;
    \Big(\!
    \big( 
      (\mathrm{pr}_{X})^\ast \mathscr{V}
    \big)
    \times
    \big( 
      (\mathrm{pr}_{Y})^\ast \mathscr{W}
    \big)
    \!\Big)_{X \times Y}
    \,.
  \end{equation}
  More explicitly, the components of the external Cartesian product are
   \vspace{-2mm} 
  $$
    \def\arraystretch{1.3}
    \begin{array}{lll}
      \big(
        \mathscr{V}_X \times \mathscr{W}_Y
      \big)_{(x,y)}
      &      \;\simeq\;
      \{(x,y)\}^\ast
      \Big(
        \big(
          (\mathrm{pr}_X)^\ast
          \mathscr{V}
        \big)
        \times
        \big(
          (\mathrm{pr}_Y)^\ast
          \mathscr{W}
        \big)
      \Big)
      \\
   &   \;\simeq\;
      \Big(
        \big(
          \{(x,y)\}^\ast
          (\mathrm{pr}_X)^\ast
          \mathscr{V}
        \big)
        \times
        \big(
          \{(x,y)\}^\ast
          (\mathrm{pr}_Y)^\ast
          \mathscr{W}
        \big)
      \Big)
      \\
    &  \;\simeq\;
      \big(
        \{x\}^\ast \mathscr{V}
      \big)
      \times
      \big(
        \{y\}^\ast \mathscr{W}
      \big)
      \\
  &    \;\simeq\;
      \mathscr{V}_{\!x}
      \times
      \mathscr{W}_{\!y}
    \end{array}
    \hspace{.7cm}
    \begin{tikzcd}
      \{x\}
      \ar[d]
      &
      \{(x,y)\}
      \ar[l, "{\sim}"{swap}]
      \ar[r, "{\sim}"]
      \ar[d, hook]
      &
      \{y\}
      \ar[d, hook]
      \\
      X 
      &
      X \times Y
      \ar[r, "{ \mathrm{pr}_Y }"]
      \ar[l, "{ \mathrm{pr}_X }"{swap}]
      &
      Y
    \end{tikzcd}
  $$
\end{example}
This gives the following elementary fact, which is crucial in the main text:
\begin{proposition}[Free coproduct completion]
 If a category $\mathcal{C}$ has Cartesian products, then its free coproduct completion (Ex. \ref{CategoriesOfIndexedSetsOfObjectsWithCoproducts}) 
also  has Cartesian products and those distribute \eqref{DistributivityOfMonoidalStructure}
 over the coproducts.
\end{proposition}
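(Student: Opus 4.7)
The plan is to invoke the external Cartesian product construction (Ex.~\ref{ExternalCartesianProduct}) applied to the defining pseudofunctor \eqref{PseudofunctorOfProductCategories}. The hypothesis of that example requires both the base category $\mathrm{Set}$ and each fiber $\mathcal{C}^S \simeq \prod_{s\in S}\mathcal{C}$ to admit Cartesian products. The first is immediate (and distributive, by Ex.~\ref{SetAsDistributiveMonoidalCategory}); the second follows from the assumption that $\mathcal{C}$ has Cartesian products, since products in functor categories are computed pointwise. Therefore formula \eqref{FormulaForExternalCartesianProduct} already supplies the Cartesian product on $\int_{S \in \mathrm{Set}} \mathcal{C}^S$, with fiber at $(x,y)$ given by $\mathscr{V}_x \times \mathscr{W}_y$.

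Next, I would verify distributivity by a fiberwise computation, making use of Prop.~\ref{ColimitsInAGrothendieckConstruction} for coproducts. Given an indexed family $(\mathscr{W}(i)_{Y_i})_{i\in I}$ and a further object $\mathscr{V}_X$, the coproduct $\coprod_i \mathscr{W}(i)_{Y_i}$ has underlying set $\coprod_i Y_i$ and fiber $\mathscr{W}(i)_{y_i}$ at any $y_i \in Y_i$. Forming the external Cartesian product with $\mathscr{V}_X$ produces an object with underlying set $X \times \coprod_i Y_i$. Applying distributivity of products over coproducts in $\mathrm{Set}$ (Ex.~\ref{SetAsDistributiveMonoidalCategory}) yields a canonical isomorphism $X \times \coprod_i Y_i \simeq \coprod_i (X \times Y_i)$, and over a point $(x, y_i)$ the fiber on each side is $\mathscr{V}_x \times \mathscr{W}(i)_{y_i}$. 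This matches the fiber of $\coprod_i (\mathscr{V}_X \times \mathscr{W}(i)_{Y_i})$ at the same point, giving the required natural isomorphism in \eqref{DistributivityOfMonoidalStructure} and similarly in the other variable by symmetry.

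There is essentially no substantive obstacle: the argument is a direct combination of two already-established tools (the external-product formula in Grothendieck constructions and the distributivity of $\mathrm{Set}$), and the entire verification reduces to checking that the universal coprojection maps on base sets lift to component identifications which, in turn, are tautological (each $(x,y_i)$ lies in a unique summand). The only care needed is bookkeeping of the base-change isomorphisms $\{(x,y_i)\}^\ast \simeq \{x\}^\ast \circ \{y_i\}^\ast$ used in reading off fibers, exactly as in Ex.~\ref{ExternalCartesianProduct}. Once these are recorded, distributivity follows, completing the proof.
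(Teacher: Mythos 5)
Your argument is correct and is exactly the route the paper intends: the proposition is stated immediately after Ex.~\ref{ExternalCartesianProduct} with the remark ``this gives the following elementary fact,'' i.e.\ the Cartesian product is the external one from that example and distributivity is inherited fiberwise from the distributivity of $(\mathrm{Set},\sqcup,\times)$. You have merely written out the details (identification of base sets via $X\times\coprod_i Y_i\simeq\coprod_i(X\times Y_i)$ and the tautological matching of fibers) that the paper leaves implicit.
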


\medskip

\noindent
{\bf The 2-category of categories with adjoint functors between them.}
We extract the gist of the discussion in \cite[p. 97-103]{MacLane97}.
\begin{definition}[Conjugate transformation of adjoints {\cite[p. 98]{MacLane97}}]
  \label{ConjugateTransformationOfAdjoints}
  Given a pair of pairs of adjoint functors between the same categories
  \vspace{-2mm} 
  $$
    \begin{tikzcd}
      \mathcal{C}
      \ar[
        rr,
        shift left=5pt,
        "{ L_i }"
      ]
      \ar[
        from=rr,
        shift left=5pt,
        "{ R_i }"
      ]
      \ar[
        rr,
        phantom,
        "{ \scalebox{.7}{$\bot$} }"
      ]
      &&
      \mathcal{D}
    \end{tikzcd}
    \hspace{1cm}
    i \,\in\, \{1,2\}\,,
  $$

  \vspace{-3mm} 
\noindent   then a {\it conjugate transformation} between them
 \vspace{-2mm} 
  $$
    (\lambda ,\, \rho)
    \;:\;
    \begin{tikzcd}
    (L_1 \dashv R_1) 
    \ar[
      r,
      Rightarrow
    ]
    &
    (L_2 \dashv R_2)
    \end{tikzcd}
  $$

   \vspace{-2mm} 
\noindent
  is a pair of natural transformations of the form
   \vspace{-2mm} 
  $$
    \lambda : L_1 \Rightarrow L_2
    ,\,
    \;\;\;
    \rho : R_2 \Rightarrow R_1
  $$

   \vspace{-2mm} 
\noindent
  such that they make the following square of natural transformations of hom-sets commute, 
  where the horizontal maps refer to the given hom-isomorphisms:
   \vspace{-2mm} 
  \begin{equation}
    \label{ConjugacyOfTransformations}
    \begin{tikzcd}[row sep=small]
      \mathcal{C}\big(
        L_2(-)
        ,\,
        -
       \big)
     \ar[
       rr,
       "{ \sim }"
     ]
     \ar[
       d,
       "{
         \mathcal{C}\left(
           \lambda_{(-)} 
           ,\, 
           \mathrm{id}_{(-)}
        \right)
       }"{swap}
     ]
     &&
     \mathcal{D}\big(
       -
       ,\,
       R_2(-)
     \big)
     \ar[
       d,
       "{
         \mathcal{D}\left(
           \mathrm{id}_{(-)}
           ,\, 
           \rho_{(-)} 
        \right)
       }"
     ]
     \\[+10pt]
      \mathcal{C}\big(
        L_1(-)
        ,\,
        -
       \big)
     \ar[
       rr,
       "{ \sim }"
     ]
     &&
     \mathcal{D}\big(
       -
       ,\,
       R_1(-)
     \big)     
    \end{tikzcd}
  \end{equation}

   \vspace{-2mm} 
\noindent
  Such conjugate transformations compose via composition of their components $(\lambda, \rho)$, yielding a category of adjoint functors 
  with conjugate transformations between them, which we denote as follows:
  \vspace{-2mm} 
  \begin{equation}
    \label{CategoryOfAdjointFunctorsWithConjugateTransformations}
    \mathcal{C},\, \mathcal{D}
    \,\in\,
    \mathrm{Cat}
    \;\;\;\;\;\;\;\;\;\;\;\;
    \vdash
    \;\;\;\;\;\;\;\;\;\;\;\;
    \mathrm{Cat}_{\mathrm{adj}}
    (\mathcal{C},\,\mathcal{D})
    \;\in\;
    \mathrm{Cat}
    \,.
  \end{equation}
\end{definition}

\begin{proposition}[Uniqueness of conjugate transformations {\cite[p. 98]{MacLane97}}]
  \label{UniquenessOfConjugateTransformations}
  Given $ L_i \dashv R_i \,:\, \mathcal{C} \rightleftarrows \mathcal{D}$ and $\lambda$
  in Def. \ref{ConjugateTransformationOfAdjoints}, there is a \emph{unique} $\rho$ that completes this data to a conjugate transformation.
  In other words, the forgetful functor from \eqref{CategoryOfAdjointFunctorsWithConjugateTransformations} to the functor category is 
  a fully faithful sub-category inclusion:
  \vspace{-1mm} 
  \begin{equation}
    \label{FullEmbeddingOfCatAdhCD}
    \hspace{1.5cm} 
    \begin{tikzcd}[row sep=-4pt, column sep=small]
    \mathllap{
    \mathcal{C},\, \mathcal{D}
    \;\in\;
    \mathrm{Cat}
    \;\;\;\;\;\;\;\;\;\;\;\;
    \vdash
    \;\;\;\;\;\;\;\;\;\;\;\;    
    }
    \mathrm{Cat}_{\mathrm{adj}}(\mathcal{C},\,\mathcal{D})
    \ar[rr, hook]
    &&
    \mathrm{Cat}(\mathcal{C},\,\mathcal{D})
    \\
    (L_1 \dashv R_1)
    \ar[d, "{ (\lambda,\, \rho) }"]
    &\longmapsto&
    L_1
    \ar[
      d,
      "{ \lambda }"
    ]
    \\[23pt]
    (L_2 \dashv R_2)
    &\longmapsto&
    L_2
    \mathrlap{\,.}
    \end{tikzcd}
  \end{equation}
\end{proposition}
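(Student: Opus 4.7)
\medskip

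\noindent
\textbf{Proof plan.} The strategy is to convert the conjugacy square \eqref{ConjugacyOfTransformations} into an explicit formula for $\rho$ in terms of $\lambda$ and the units/counits, and to show both that the formula is forced (uniqueness) and that it works (existence). Concretely, I will evaluate the natural transformation of hom-sets \eqref{ConjugacyOfTransformations} at the ``universal'' pair of arguments $(c,d) \,=\, (R_2 d,\, d)$ on the element $\epsilon^{L_2 \dashv R_2}_d \,\in\, \mathcal{C}(L_2 R_2 d ,\, d)$, whose image along the top (adjunction $L_2 \dashv R_2$) is the identity $\mathrm{id}_{R_2 d}$.

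\medskip

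\noindent
For uniqueness, following the square \eqref{ConjugacyOfTransformations} along the top-then-right path sends $\epsilon^{L_2 \dashv R_2}_d$ to $\rho_d \circ \mathrm{id}_{R_2 d} \,=\, \rho_d$, while following the left-then-bottom path sends it to the $(L_1 \dashv R_1)$-adjunct of $\epsilon^{L_2 \dashv R_2}_d \circ \lambda_{R_2 d} \,:\, L_1 R_2 d \to d$. Commutativity of the square thus forces
\[
  \rho_d
  \;\;=\;\;
  R_1\bigl(\epsilon^{L_2 \dashv R_2}_d \circ \lambda_{R_2 d}\bigr)
  \,\circ\,
  \eta^{L_1 \dashv R_1}_{R_2 d}
  \,,
\]
which proves that at most one $\rho$ can complete $\lambda$ to a conjugate transformation.

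\medskip

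\noindent
For existence, I will \emph{define} $\rho_d$ by the boxed formula above and verify: (i) naturality of $\rho$ in $d$, which follows from the naturality of $\lambda$, of $\epsilon^{L_2 \dashv R_2}$, and of $\eta^{L_1 \dashv R_1}$; and (ii) commutativity of the conjugacy square \eqref{ConjugacyOfTransformations} on arbitrary $f \,\in\, \mathcal{C}(L_2 c, d)$, which reduces via the standard expression of the adjuncts in terms of units, counits and the functors $L_i, R_i$ to a diagram chase using the triangle identities for both adjunctions and the naturality square for $\lambda$ at the adjunct $f^\flat \,:\, c \to R_2 d$. This is the step that requires the most bookkeeping but is entirely formal.

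\medskip

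\noindent
The only substantive point in the whole argument is thus the extraction of the universal formula for $\rho_d$ by probing the conjugacy square at $(R_2 d, d, \epsilon^{L_2 \dashv R_2}_d)$; everything else is a Yoneda-style exercise in manipulating units, counits and the natural transformation $\lambda$. There is no genuine obstacle and no need for additional hypotheses on $\mathcal{C}, \mathcal{D}$.
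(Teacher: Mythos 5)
Your argument is correct and is essentially the one the paper defers to in the cited reference (MacLane, p.~98): evaluating the conjugacy square at the universal elements $\epsilon^{L_2\dashv R_2}_d$ forces $\rho_d = R_1\big(\epsilon^{L_2\dashv R_2}_d\big)\circ R_1(\lambda_{R_2 d})\circ \eta^{L_1\dashv R_1}_{R_2 d}$, and this formula indeed defines a conjugate transformation. The only streamlining worth noting is that the existence half needs no explicit triangle-identity chase: both composites around \eqref{ConjugacyOfTransformations} are natural transformations out of $\mathcal{C}\big(L_2(-),-\big)$, which is representable in its first variable for each fixed second argument, so by the Yoneda lemma their agreement at the universal elements already implies commutativity on every $f$.
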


\begin{proposition}[Horizontal composition of conjugate transformations {\cite[p. 102]{MacLane97}}]
  \label{HorizontalCompositionOfConjugateTransformations} 
  The horizontal composition $(-)\cdot(-)$ of 
  the underlying natural transformations of a pair of conjugate transformations Def. \ref{ConjugateTransformationOfAdjoints} is itself a conjugate transformation, so that the composition functor on functor categories restricts along the inclusions \eqref{FullEmbeddingOfCatAdhCD}:
  \vspace{-1mm} 
  $$
    \hspace{3cm} 
    \begin{tikzcd}[sep=0pt]
      \mathllap{
    \mathcal{C}
    ,\,
    \mathcal{D}
    ,\,
    \mathcal{E}
    \;\in\;
    \mathrm{Cat}
    \;\;\;\;\;\;\;\;\;\;\;
    \vdash
    \;\;\;\;\;\;\;\;\;\;\;      
      }
      \mathrm{Cat}_{\mathrm{adj}}(\mathcal{D},\,\mathcal{E})
      \times
      \mathrm{Cat}_{\mathrm{adj}}(\mathcal{C},\,\mathcal{D})
      \ar[
        rr
      ]
      &&
      \mathrm{Cat}_{\mathrm{adj}}(\mathcal{C},\,\mathcal{E})
      \\
     \scalebox{0.9}{$   \big(
        (\lambda,\,\rho)
        ,\,
        (\lambda',\,\rho')
      \big)
      $}
      &\longmapsto&
   \scalebox{0.9}{$   \big(
        \lambda' \cdot \lambda
        ,\,
        \rho \cdot \rho'
      \big).
      $}
    \end{tikzcd}
  $$
\end{proposition}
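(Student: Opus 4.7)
\smallskip

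\noindent
\textbf{Proof plan.} By Proposition \ref{UniquenessOfConjugateTransformations}, for any adjunctions $L'_i L_i \dashv R_i R'_i$ obtained as composites, a natural transformation on the left adjoints uniquely determines its conjugate partner on the right adjoints (if any exists). Hence it suffices to verify that the pair $(\lambda' \cdot \lambda,\, \rho \cdot \rho')$ fits into the conjugacy square \eqref{ConjugacyOfTransformations} for the composite adjunctions $L'_1 L_1 \dashv R_1 R'_1$ and $L'_2 L_2 \dashv R_2 R'_2$, since the uniqueness then guarantees that this is the conjugate transformation.

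\smallskip

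The plan is to observe that the hom-isomorphism of a composite adjunction factors through an intermediate hom-set:
$$
\mathcal{E}\big(L'_i L_i(c),\, e\big)
\;\cong\;
\mathcal{D}\big(L_i(c),\, R'_i(e)\big)
\;\cong\;
\mathcal{C}\big(c,\, R_i R'_i(e)\big),
$$
so that the composite conjugacy square splits as two adjacent squares sharing the middle column $\mathcal{D}\big(L_i(c),\, R'_i(e)\big)$. I would then check that each of these two subsquares commutes on its own: the outer one (involving $\mathcal{E}$ and $\mathcal{D}$) is an instance of the conjugacy square \eqref{ConjugacyOfTransformations} for $(\lambda',\rho')$ at arguments $(L_i(c), e)$ producing the vertical maps $\mathcal{E}(\lambda'_{L_i(c)}, \mathrm{id})$ and $\mathcal{D}(\mathrm{id}, \rho'_e)$, while the inner one (involving $\mathcal{D}$ and $\mathcal{C}$) is an instance of the conjugacy square for $(\lambda,\rho)$ at arguments $(c, R'_i(e))$ producing $\mathcal{D}(\lambda_c, \mathrm{id})$ and $\mathcal{C}(\mathrm{id}, \rho_{R'_i(e)})$.

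\smallskip

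Pasting the two commuting subsquares yields a commuting outer rectangle whose left vertical map is
$$
\mathcal{E}(\lambda'_{L_1(c)}, \mathrm{id}) \circ \mathcal{E}(L'_2(\lambda_c), \mathrm{id})
\;=\;
\mathcal{E}\big((\lambda' \cdot \lambda)_c ,\, \mathrm{id}\big),
$$
using the standard decomposition of the horizontal composite $\lambda' \cdot \lambda = \lambda' L_1 \circ L'_2 \lambda$, and whose right vertical map is
$$
\mathcal{C}(\mathrm{id}, \rho_{R'_1(e)}) \circ \mathcal{C}(\mathrm{id}, R_2(\rho'_e))
\;=\;
\mathcal{C}\big(\mathrm{id} ,\, (\rho \cdot \rho')_e\big),
$$
using the dual decomposition $\rho \cdot \rho' = \rho R'_1 \circ R_2 \rho'$. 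This is precisely the conjugacy square for the pair $(\lambda' \cdot \lambda,\, \rho \cdot \rho')$ with respect to the composite adjunctions, which is what needed to be exhibited.

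\smallskip

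The main (minor) obstacle is bookkeeping: one must consistently choose the decomposition of each horizontal composite (there are two possible middle objects $L'_2 L_1$ and $L'_1 L_2$, and correspondingly for the right adjoints) and verify that the naturality of the adjunction hom-isomorphisms in both variables lines up so that the two subsquares actually share their middle column. Once this choice is made coherently, the pasting is formal, and composition in $\mathrm{Cat}_{\mathrm{adj}}$ is then inherited from composition in the ambient functor categories, so that the asserted restriction of the horizontal composition functor is automatic.
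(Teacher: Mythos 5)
The paper offers no proof of this proposition at all — it is quoted from \cite[p.\ 102]{MacLane97} — so there is nothing to compare against except the classical argument, and your proposal is essentially that argument: verify the conjugacy square \eqref{ConjugacyOfTransformations} for the composite adjunctions by factoring the composite hom-isomorphism through the intermediate hom-set and pasting. The verification is correct in substance, and your closing remark that the restriction of the composition functor is then automatic (via the full inclusion \eqref{FullEmbeddingOfCatAdhCD}) is the right way to finish; the opening appeal to Prop.\ \ref{UniquenessOfConjugateTransformations} is harmless but not actually needed, since exhibiting the commuting square is already the whole content of the claim.

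One point where your description is looser than the actual diagram chase: the two conjugacy subsquares you invoke do not literally share a middle column, because for $i=1$ the middle term is $\mathcal{D}\big(L_1(c),\,R'_1(e)\big)$ while for $i=2$ it is $\mathcal{D}\big(L_2(c),\,R'_2(e)\big)$. To bridge them you must interpose the map $\mathcal{D}(\lambda_c,\,\rho'_e)$ and factor it as $\mathcal{D}(\lambda_c,\mathrm{id})\circ\mathcal{D}(\mathrm{id},\rho'_e)$, so the pasting is really a $2\times 2$ grid: one conjugacy square for $(\lambda',\rho')$ at $(L_2(c),e)$, one for $(\lambda,\rho)$ at $(c,R'_1(e))$, and two naturality squares of the adjunction hom-isomorphisms (for $L_2\dashv R_2$ in the second variable applied to $\rho'_e$, and for $L'_1\dashv R'_1$ in the first variable applied to $\lambda_c$). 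The left and right edges of the resulting rectangle then compose to $\mathcal{E}\big((\lambda'\cdot\lambda)_c,\mathrm{id}\big)$ and $\mathcal{C}\big(\mathrm{id},(\rho\cdot\rho')_e\big)$ exactly as you compute. You flag this under ``bookkeeping,'' and indeed it is only bookkeeping, but it is the one step a written-out proof must actually display.
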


Via Prop. \ref{HorizontalCompositionOfConjugateTransformations}, we have: 

\begin{definition}[2-category of categories, adjoint functors and conjugate transformations {\cite[p. 102]{MacLane97}}] 
\label{CatAdj}
  Write
  \vspace{-2mm} 
  \begin{equation}
    \label{FunctorFromCatAdjToCat}
    \begin{tikzcd}
      \mathrm{Cat}_{\mathrm{adj}}
      \ar[r]
      &
      \mathrm{Cat}
    \end{tikzcd}
  \end{equation}
  \vspace{-2mm} 

 \noindent   
 for the (very large) locally full sub-2-category of $\mathrm{Cat}$ whose
  \begin{itemize}
    \item objects are categories,
    \item hom-categories are those \eqref{CategoryOfAdjointFunctorsWithConjugateTransformations}
    of adjoint functors with conjugate transformations between them.
  \end{itemize}
\end{definition}

\begin{proposition}[Bivariant pseudofunctors,
{cf. \cite[Lem. 9.1.2]{Jacobs98}\cite[Prop. 2.2.1]{HarpazPrasma15}\cite[pp. 10]{CagneMellies20}}] 
\label{RecognitionOfBivariantPseudofunctors}
  Given a covariant pseudofunctor $\mathbf{C}_{(-)}$ (Def. \ref{Pseudofunctor}) such that each component 
  functor $f_! : \mathbf{C}_X \longrightarrow \mathbf{C}_{Y}$ has a right adjoint
  \vspace{-2mm} 
  \begin{equation}
    \label{BivariantPseudofunctor}
    \begin{tikzcd}[row sep=-2pt, column sep=20pt]
      \mbox{\bf C}_{(-)}
      \;\colon\;
      &
      \mathcal{B}
      \ar[rr]
      &&
      \mathrm{Cat}
      \\
      &
      X_1 
      \ar[d, "{f}"]
      &\longmapsto&
      \mbox{\bf C}_{X_1}
      \ar[d, shift right=7pt, "{f_!}"{swap}]
      \ar[from=d, shift right=7pt, "{f^\ast}"{swap}]
      \ar[d, phantom, "{ \scalebox{.8}{$\dashv$} }"]
      \\[+20pt]
      &
      X_2 
      &\longmapsto& 
      \mbox{\bf C}_{X_2}
    \end{tikzcd}
  \end{equation}

  \vspace{-2mm} 
\noindent  then:
  \begin{itemize}
    \item[{\bf (i)}] it factors essentially uniquely through $\mathrm{Cat}_{\mathrm{adj}}$ \eqref{FunctorFromCatAdjToCat},
    \item[{\bf (ii)}]  hence it induces a contravariant pseudofunctor with component functors $f^\ast$,
    \item[{\bf (iii)}] 
    such that the Grothendieck construction (Def. \ref{GrothendieckConstruction}) on the covariant pseudofunctor is equivalent to 
    that on the corresponding contravariant pseudofunctor via the functor that is the identity on objects and on morphisms is 
    the hom-isomorphism of the given adjoint pairs:
    \vspace{-2mm} 
    $$
      \begin{tikzcd}[row sep=-2pt]
        \mathscr{f_! \mathscr{V}}
        \ar[r, "{\widetilde{\phi}}"]
        &
        \mathscr{W}
        \\
        X \ar[r, "f"]
        &
        Y
      \end{tikzcd}
      \;\;\;\;\;\;\;\;\;\;
      \leftrightarrow
      \;\;\;\;\;\;\;\;\;\;
      \begin{tikzcd}[row sep=-2pt]
        \mathscr{\mathscr{V}}
        \ar[r, "{{\phi}}"]
        &
        f^\ast \mathscr{W}
        \\
        X \ar[r, "f"]
        &
        Y
        \mathrlap{\,.}
      \end{tikzcd}
    $$

     \vspace{-2mm} 
\noindent
    Therefore, both construction are still unambiguously denoted by $\int_{X \in \mathcal{B}} \mathbf{C}_{X}$.
    \end{itemize}
\end{proposition}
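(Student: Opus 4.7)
\medskip

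\noindent
\textbf{Proof plan.}
The central tool will be the uniqueness statement of Prop.\ \ref{UniquenessOfConjugateTransformations}: once the left-adjoint part of a conjugate transformation is specified, the right-adjoint part exists and is unique. Almost every verification below will reduce, by this uniqueness, to the corresponding verification on the already-given covariant pseudofunctor.

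\smallskip

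\noindent
\emph{Step 1 (lift to $\mathrm{Cat}_{\mathrm{adj}}$).}
Using the axiom of choice, for each morphism $f : X \to Y$ in $\mathcal{B}$ fix an adjoint pair $f_! \dashv f^\ast$ with the prescribed left adjoint. This assigns to $f$ an object of $\mathrm{Cat}_{\mathrm{adj}}(\mathbf{C}_X, \mathbf{C}_Y)$. For a composable pair $X_0 \xrightarrow{f_{01}} X_1 \xrightarrow{f_{12}} X_2$, the given coherence isomorphism $\mu_{f_{01}, f_{12}} : (f_{12})_! \circ (f_{01})_! \xRightarrow{\sim} (f_{12} \circ f_{01})_!$ is a natural transformation between two left adjoints in $\mathrm{Cat}(\mathbf{C}_{X_0}, \mathbf{C}_{X_2})$, whose respective right adjoints are $(f_{01})^\ast \circ (f_{12})^\ast$ and $(f_{12} \circ f_{01})^\ast$. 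By Prop.\ \ref{UniquenessOfConjugateTransformations} there is a unique $\rho_{f_{01}, f_{12}}$ completing $\mu_{f_{01}, f_{12}}$ to a conjugate transformation, and it is automatically an isomorphism (being the mate of an iso). The unit coherences are treated identically. The associativity and unitality diagrams for the promoted coherences are equations between conjugate transformations whose $L$-sides hold by assumption; uniqueness then forces their $R$-sides, and hence the full conjugate-transformation equations, to hold. This produces the essentially unique lift $\mathbf{C}_{(-)} : \mathcal{B} \to \mathrm{Cat}_{\mathrm{adj}}$ required by~(i).

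\smallskip

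\noindent
\emph{Step 2 (extract contravariant pseudofunctor).}
Post-composing the lift with the ``right adjoint'' projection $\mathrm{Cat}_{\mathrm{adj}} \to \mathrm{Cat}$ --- which sends an adjunction to its right adjoint and reverses the direction of conjugate 2-cells (since $\rho : R_2 \Rightarrow R_1$ accompanies $\lambda : L_1 \Rightarrow L_2$) --- yields a pseudofunctor $\mathcal{B}^{\mathrm{op}} \to \mathrm{Cat}$ sending $f \mapsto f^\ast$ with coherence isomorphisms $\rho_{f_{01}, f_{12}}$, establishing~(ii).

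\smallskip

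\noindent
\emph{Step 3 (equivalence of Grothendieck constructions).}
Define $\Phi : \int_{X \in \mathcal{B}}^{\mathrm{cov}} \mathbf{C}_X \longrightarrow \int_{X \in \mathcal{B}}^{\mathrm{con}} \mathbf{C}_X$ as the identity on objects and, on a morphism $\widetilde{\phi}_f : \mathscr{V}_X \to \mathscr{W}_Y$ with $\widetilde\phi : f_!\mathscr{V} \to \mathscr{W}$, by applying the hom-isomorphism of $f_! \dashv f^\ast$ to obtain $\phi : \mathscr{V} \to f^\ast\mathscr{W}$. Naturality of the hom-iso gives a bijection on hom-sets \eqref{HomSetsOfCovariantGrothendieckConstruction}$\leftrightarrow$\eqref{HomSetsOfContravariantGrothendieckConstruction}. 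Functoriality (preservation of identity and composition) is where the bulk of the work lies: given $\widetilde\phi_f : \mathscr{V}_X \to \mathscr{W}_Y$ and $\widetilde\psi_g : \mathscr{W}_Y \to \mathscr{R}_Z$, one must check that the image under $\Phi$ of the covariant composite $(\psi \circ g_!(\widetilde\phi) \circ \mu_{f,g}(\mathscr{V}))_{gf}$ coincides with the contravariant composite $(\rho_{f,g}(\mathscr{R}) \circ f^\ast(\phi') \circ \phi)_{gf}$ of the images. This amounts to an instance of the conjugacy square \eqref{ConjugacyOfTransformations} applied to $\mu$/$\rho$, together with the standard mate-correspondence identities for $f_! \dashv f^\ast$ and $g_! \dashv g^\ast$, and is therefore a formal diagram chase.

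\smallskip

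\noindent
\emph{Expected main obstacle.}
None of the three steps requires new ideas beyond the mate calculus. The only mildly delicate point is bookkeeping in Step~3: the covariant and contravariant composition laws in the Grothendieck construction insert the coherence isos $\mu_{f,g}$ and $\rho_{f,g}$ in different places, and one must verify that the mate correspondence intertwines these insertions correctly. This is the content of the ``mates under composition'' lemma and is handled by one pasting diagram, so the proof as a whole is essentially bookkeeping modulo Prop.\ \ref{UniquenessOfConjugateTransformations}.
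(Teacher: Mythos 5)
Your proposal is correct and follows essentially the same route as the paper's own (very terse) proof: part (i) via the uniqueness of conjugate transformations (Prop.\ \ref{UniquenessOfConjugateTransformations}), part (ii) via the 2-categorical structure of $\mathrm{Cat}_{\mathrm{adj}}$ (i.e.\ horizontal composition of conjugates, Prop.\ \ref{HorizontalCompositionOfConjugateTransformations}), and part (iii) via the conjugacy square \eqref{ConjugacyOfTransformations}. You simply spell out in more detail the mate-calculus bookkeeping that the paper leaves implicit.
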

\begin{proof}
  The first statement is a direct consequence of Prop. \ref{UniquenessOfConjugateTransformations}, the second then follows 
  by Prop. \ref{HorizontalCompositionOfConjugateTransformations} and finally the third by the property \eqref{ConjugacyOfTransformations} 
  in Def. \ref{ConjugateTransformationOfAdjoints}.
\end{proof}

In refinement of Ex. \ref{CategoriesOfIndexedSetsOfObjects}, we have:

\begin{example}[Categories of indexed sets of objects with coproducts]
\label{CategoriesOfIndexedSetsOfObjectsWithCoproducts}
If a category $\mathcal{C}$ already has all coproducts, then the pseudofunctor \eqref{PseudofunctorOfProductCategories}
of its product categories has left adjoint component functors given by forming coproducts over fibers of base maps
\vspace{-4mm} 
\begin{equation}
\label{BaseChangeFunctorsBetweenProductCategories}
  f \,:\, S \longrightarrow T
  \hspace{1cm}
    \vdash
  \hspace{1cm}
  \begin{tikzcd}[row sep=-3pt, column sep=10pt]
 \scalebox{0.8}{$   \big(
      \mathscr{V}_s
    \big)_{ \in S}
    $}
    &\xmapsto{\qquad}&
   \scalebox{0.8}{$  \bigg(\,
      \underset{
        s \in f^{-1}(\{t\})
      }{\coprod}
      \mathscr{V}_s
    \bigg)_{t \in T}
    $}
    \\
    \mathrm{Func}
    \big(
      S
      ,\,
      \mathcal{C}
    \big)
    \ar[
      rr,
      shift left=5pt,
      "{ f_! }"
    ]
    \ar[
      from=rr,
      shift left=5pt,
      "{ f^\ast }"
    ]
    \ar[
      rr,
      phantom,
      "{ \scalebox{.7}{$\bot$} }"
    ]
    &&
    \mathrm{Func}
    \big(
      T
      ,\,
      \mathcal{C}
    \big)    
    \\[6pt]
  \scalebox{0.8}{$   \big(
      \mathscr{V}_{f(s)}
    \big)_{s \in S}
    $}
    \ar[
      rr,
      phantom,
      "{ \xmapsfrom{\qquad} }"
    ]
    &&
 \scalebox{0.8}{$    \big(
      \mathscr{V}_{t}
    \big)_{t \in T}
    $}
  \end{tikzcd}
\end{equation}
\end{example}

\newpage

\noindent
{\bf Data availability statement.}
\newline
\noindent
Data sharing is not applicable to this article as no new data were created or analyzed in this study.

\smallskip

\noindent
{\bf Conflict of interest statement.}
\newline
\noindent
The authors declare that they have no conflict of interest.

\medskip


\end{document}